\def\ps@headings{%
\def\@oddhead{\mbox{}\scriptsize\rightmark \hfil \thepage}%
\def\@evenhead{\scriptsize\thepage \hfil \leftmark\mbox{}}%
\def\@oddfoot{}%
\def\@evenfoot{}}
\makeatother \pagestyle{headings}
\theoremstyle{plain}
\newtheorem{assumption}{Assumption}
\newtheorem{lemma}{Lemma}
\newtheorem{proposition}{Proposition}
\newtheorem{definition}{Definition}
\newtheorem{theorem}{Theorem}
\newtheorem{OA}{Observation}
\newcommand{\revh}[1]{{\color{magenta}#1}} %revise of the text
\newcommand{\com}[1]{\textbf{\color{red} (COMMENT: #1) }} %comment of the text
\newcommand{\comg}[1]{\textbf{\color{green} (COMMENT: #1)}}
\newcommand{\response}[1]{\textbf{\color{green} (RESPONSE: #1)}} %response to comment
\newcommand{\revh}[1]{#1}
\newcommand{\com}[1]{}
\newcommand{\comg}[1]{}
\newcommand{\response}[1]{}
\newcommand{\footnotesc}[1]{\footnote{#1}}
\def\Ex{\mathrm{E}}
\def\Rset{\mathbb{R}}  %% real set
\def\Rsetp{\Rset^2_{+}}   %% positive real set
\def\Probset{\Omega}
\def\N{N}            %%% the number of users
\def\Nset{\mathcal{\N}}   %%% the set of users
\def\n{n}             %%% SU's index
\def\l{s}             %% service type
\def\B{\textsf{b}}
\def\LS{\textsf{l}}
\def\A{\textsf{a}}
\def\K{K}                %%% the number of TV channels
\def\Kset{\mathcal{\K}}   %%% the set of TV channels
\def\k{k}
\def\ch{{channel}}
\def\chs{{channels}}
\def\db{{database}}
\def\lh{{licensee}}
\def\eu{{user}}
\def\eus{{users}}
\def\Eu{{User}}
\def\Nkset{\Nset_{\k}}
\def\RB{B}  %% expected data rate of random
\def\RL{L}  %% expected data rate of leasing
\def\RA{A}  %% expected data rate of advanced
\def\Prob{\eta}      %% percentage of end-users subscribe advanced servie
\def\BProb{\boldsymbol{\eta}}      %% the set of percentage of end-users subscribe advanced servie
\def\Proba{\Prob_{\textsc{a}}}      %% fixed percentage of end-users subscribe advanced servie
\def\Probb{\Prob_{\textsc{b}}}      %% fixed percentage of end-users subscribe basic service
\def\Probl{\Prob_{\textsc{l}}}      %% fixed percentage of end-users lease spectrum
\def\Probaa{\Prob_{\textsc{a}1}}
\def\Probab{\Prob_{\textsc{a}2}}
\def\Probac{\Prob_{\textsc{a}3}}
\def\Probla{\Prob_{\textsc{l}1}}
\def\Problb{\Prob_{\textsc{l}2}}
\def\Problc{\Prob_{\textsc{l}3}}
\def\BProbaa{\BProb_{1}}      %% fixed percentage of end-users subscribe advanced servie
\def\BProbbb{\BProb_{2}}      %% fixed percentage of end-users subscribe basic service
\def\BProbcc{\BProb_{3}}      %% fixed percentage of end-users subscribe basic service
\def\fx{f}  %% congestion function
\def\gy{g}  %% network effect
\def\th{\theta}      %% end-users preference
\def\thlb{\th_{\textsc{LB}}}  %% threshold between random and leasing choice
\def\thab{\th_{\textsc{AB}}}  %% threshold between random and advanced choice
\def\thla{\th_{\textsc{LA}}}  %% threshold between leasing and advanced choice
\def\p{\pi}           %% advance service subcription fee
\def\pa{\p_\textsc{a}}           %% advance service subcription fee
\def\pl{\p_\textsc{l}}           %% leasing price
\def\w{w}  %% wholesale price
\def\U{\Pi}
\def\Ur{\widetilde{\U}}
\def\Ueu{\U^{\textsc{eu}}}
\def\Udb{\U^{\textsc{db}}}
\def\Usl{\U^{\textsc{sl}}}
\def\Udbo{\U^{\textsc{db}}_{0}}
\def\Uslo{\U^{\textsc{sl}}_{0}}
\def\Udbrs{\U^{\textsc{db}}_{\textsc{(i)}}}
\def\Uslrs{\U^{\textsc{sl}}_{\textsc{(i)}}}
\def\Urdbrs{\Ur^{\textsc{db}}_{\textsc{(i)}}}
\def\Urslrs{\Ur^{\textsc{sl}}_{\textsc{(i)}}}
\def\Udbwp{\U^{\textsc{db}}_{\textsc{(ii)}}}
\def\Uslwp{\U^{\textsc{sl}}_{\textsc{(ii)}}}
\def\Urdbwp{\Ur^{\textsc{db}}_{\textsc{(ii)}}}
\def\Urslwp{\Ur^{\textsc{sl}}_{\textsc{(ii)}}}
\def\eq{\triangleq}
\def\R{R}
\def\Ra{\R_{\A}}
\def\InfTV{v}         %% interference coming form tv station
\def\InfOut{o}        %% interference coming form other DB' users
\def\InfEU{w}         %% interference coming form same network users
\def\InfKnown{\bar{z}}      %% interference known by DB
\def\InfKnownMin{ \InfKnown_{\textsc{min}} }      %% interference known by DB
\def\InfUnknown{\hat{z}}    %% interference unknown by DB
\def\InfTot{z}        %% total interference
\def\InfTotA{ \InfTot_{(\A)} }        %% total interference for advanced users
\def\pl{\p_\textsc{l}}     %% the price makes \thas = 1
\def\t{t}
\def\ut{U}
\def\rt{\mathcal{R}}
\def\c{c} %% cost
\def\cl{\c_{\textsc{l}}}
\def\ca{\c_{\textsc{a}}}
\def\ch{TV~channel}
\def\chs{TV~channels}
\def\lchs{licensed~\chs}
\def\uchs{unlicensed~\chs}
\def\cs{\c_\textsc{s}}
\def\RS{S} %% expected data rate of sensing
\let\@copyrightspace\relax
\begin{document}

%\title{Information Market for TV White Space Network}
\title{An Integrated Spectrum and Information Market for   Green Cognitive Communications}
%\title{HySIM: A Hybrid Spectrum and Information Market for TV White Space Networks}

 \author{Yuan~Luo, Lin~Gao, and Jianwei~Huang
\IEEEcompsocitemizethanks{
\IEEEcompsocthanksitem
Yuan~Luo, Lin~Gao, and Jianwei~Huang {(corresponding author)} are with Network Communications and Economics Lab (NCEL),
Department of Information Engineering, The Chinese University of Hong Kong, HK,
E-mail: \{yluo, lgao, jwhuang\}@ie.cuhk.edu.hk.
\IEEEcompsocthanksitem
Some of the preliminary results have appeared in \cite{luo2015infocom}.
}
}

%\author{Yuan~Luo,~\IEEEmembership{Student~Member,~IEEE,}
%		Lin~Gao,~\IEEEmembership{Member,~IEEE,}
%        and~Jianwei~Huang,~\IEEEmembership{Senior Member,~IEEE}% <-this % stops a space
%\IEEEcompsocitemizethanks{
%\IEEEcompsocthanksitem
%Authors are with the Network Communications and Economics Lab, Department of Information
%Engineering, The Chinese University of Hong Kong,
%Hong Kong.
%E-mail: \{ly011, lgao, jwhuang\}@ie.cuhk.edu.hk
%\IEEEcompsocthanksitem
%Part of the results have appeared in IEEE INFOCOM 2015.
% }}

\maketitle

\vspace{-18mm}
\begin{abstract}
A database-assisted TV white space network can achieve the goal of green cognitive communication by effectively reducing the energy consumption in cognitive communications.
The success of such a novel network relies on a proper business model that provides substantial incentives for all parties involved.
In this paper, we propose an integrated spectrum and information market for a database-assisted TV white space network, where the geo-location database acts as an online platform providing services to both spectrum market and information market.
We model the interactions among the database operator, the spectrum licensee, and unlicensed users as a three-stage sequential decision process.
Specifically, Stage I characterizes the negotiation between the database and the licensee, in terms of the commission for the licensee to use the spectrum market platform, Stage II models the pricing decisions of the database and the licensee, and Stage III characterizes the subscription behaviors of unlicensed users.
Analyzing such a three-stage model is challenging due to the co-existence of both positive and negative network externalities in the information market.
Nevertheless of this, we are able to explicitly characterize the impact of network externalities on the equilibrium behaviors of all parties involved.
We analytically show that the licensee can never get a market share larger than half in the integrated market.
Our numerical results further show that the proposed integrated market can outperform a pure information market in terms of network profit up to $87\%$.
\end{abstract}

%% Note that keywords are not normally used for peerreview papers.
%\begin{IEEEkeywords}
%TV White Space Networks, Information Market, Nash Bargaining, Game Theory
%\end{IEEEkeywords}}

\IEEEpeerreviewmaketitle

%\maketitle
%
%
%\IEEEdisplaynotcompsoctitleabstractindextext
%
%\IEEEpeerreviewmaketitle
%%\com{Overall suggestion of this paper:
%%\\
%%1. Need to further emphasize the key unique challenges of analyzing the wireless system, and why it is not a straightforward application of the economics literature.
%%\\
%%2. More comprehensive simulations to show the robustness of the conclusion.
%%\\
%%3. Need to further distill the key insights, and emphasize the insights several times at various places of the paper.
%%At this moment, we only emphasize that DB-bearing-risk contact will improve the total profit and the database's profit.
%%What about the WSD-bearning-risk contract? Identifying when it will improve the total profit will also be interesting? }

%\input{ection0-Intro}
%!TEX root = main_hybrid_market.tex
%SourceDoc main_hybrid_market.tex

%%\vspace{-2mm}
%\IEEEraisesectionheading{\section{Introduction}\label{sec:intro}
%%\IEEEPARstart{T}{his} demo file is intended to serve as a ``starter file''

\section{Introduction}\label{sec:intro}
\subsection{Background}

With the explosive growth of mobile smartphones and bandwidth-hungry wireless applications,
the corresponding energy consumption due to telecommunication industry is increasing at a unprecedented speed of $16\%-20\%$ per annum \cite{Gur2011}.
Moreover, according to the Climate Group SMART 2020 Report \cite{climatae2008report}, the information and communication technology (ICT) infrastructures account for $3\%$ of global energy consumption and $2\%$ of global CO$_2$ emissions.
Hence, energy optimization of wireless communications, ranging from equipment manufacturing to core functionalities, becomes increasingly important for protecting our environment, coping with global warming, and facilitating sustainable development.

Cognitive communication has been viewed as a promising paradigm for achieving energy-efficient communications.
The key idea is to allow the cognitive radio device to \emph{adapt} its configuration  and transmission decision  to the real-time radio environment.
Hence, a cognitive radio device can select the best reconfiguration operation that balances the energy consumption and communication quality.
%One of the promising commercial realizations of such cognitive communication technology is \emph{TV white space network}, where unlicensed wireless devices (called white space devices, WSDs) opportunistically exploit the under-utilized broadcast television spectrum (called TV white space, TVWS\footnote{For convenience, we will refer to TV white space as ``TV channel''.}) via a third-party \emph{geo-location} white space database \cite{federal2012third,Ofcom2010geo}.
Obviously, the success of cognitive communication system greatly relies on the accurate detection of radio environment (\emph{e.g.,} locating the idle channels and figuring out the allowable transmission power to minimize interference to existing users).
If a mobile device is fully responsible for the continuous and accurate detection of radio environment, it would consume a significant amount of energy.
The higher accuracy, the higher computational burden on a mobile device, and thus the higher energy consumption.

In order to reduce energy consumption and guarantee the performance of cognitive communication,
%letting a white space database (called geo-location database) assist unlicensed wireless devices (called white space devices, WSDs) to opportunistically exploit the under-utilized UHF/VHF frequency band, which is originally assigned for broadcast television services (hereafter called TV channels) \cite{federal2012third,Ofcom2010geo} is considered as the promising solution.
some spectrum regulators (such as FCC in the USA and Ofcom in the UK), together with standards bodies and industrial organizations,\footnote{
Example include IEEE 802.22 WRAN standard (\url{http://www.ieee802.org/22/}) and the industrial companies such as Google (\url{http://www.google.org/spectrum/whitespace/}), Microsoft (\url{http://whitespaces.msresearch.us/}), and Spectrum Bridge (\url{http://www.spectrumbridge.com/}).}
have advocated a \emph{database-assisted} TV white space network architecture.
In such a network, a white space database (called geo-location database) assists unlicensed wireless devices (called white space devices, WSDs) opportunistically exploit the under-utilized UHF/VHF frequency band, which is originally assigned for broadcast television services (hereafter called TV channels) \cite{federal2012third,Ofcom2010geo}.
The main reason for choosing the UHF/VHF frequency band
%\footnote{Such band is originally assigned for broadcast television services (hereafter called TV channels)\cite{federal2012third,Ofcom2010geo}}
to support cognitive communications is two-fold.
First, this band is largely under-utilized by the TV broadcast services.
Second, this low-frequency band can support long-distance wireless communications with low transmission power (hence low energy consumption), comparing with the current spectrum band used by cellular and WiFi networks.  ~~~~~~~~~~~~~~~~~~~~~~~~~~~~~~~
%because of its excellent range and obstacle penetration characteristics\footnote{Specifically, compared with the traditional spectrum used in traditional communication network, utilizing TV channels require less energy consumption as those frequency band can cover larger area (two to six times), more robust, and requires less infrastructure.}.
%First, in many places there are many unused TV channels (i.e., those unlicensed to any TV licensees),
%often called \emph{TV white spaces} \cite{Microsoft2010report, Ofcom2012tvws, federal2012third}, which can be used for supporting unlicensed non-TV wireless services with the guarantee of communication quality.
%Second, even the licensed TV channels (i.e., allocated to certain TV licensees) may have a low utilization in most time \cite{cognitive}, and hence can be opportunistically reused by unlicensed non-TV wireless services with the permissions of licensees and transmission power constrain.
%~~~~~~~~~~~~~~~~~~~~~~~~~~~~~~~~~~~~~~~~~~~~~~~~~~~~

In such a database-assisted network, the geo-location database houses a global repository of TV licensees, and updates the licensees' channel occupations periodically.
Each WSD obtains the available TV channel information via querying a geo-location database (via some existing communication networks such as cellular or Wi-Fi networks), rather than having to directly sense the TV channels which lead to significant energy consumption.
%sensing the wireless environment that can consume quite some energy.
%WSDs and databases communicate with each other through the Internet.
In other words,
%In such a database-assisted TV white space network,
WSDs are mainly responsible for performing the necessary local computations (\emph{e.g.,} identifying their current locations),
and databases are responsible for performing intensive data processing (\emph{e.g.,} computing the available TV channels for each WSD  based on the channel availabilities and WSD location information).
Such a network architecture can effectively reduce the overall energy consumptions and lead to a green communication ecosystem.

%According to the current related regulations,
%the geo-location databases are operated by third-party companies  (not directly by the regulators or TV license owners), such as SpectrumBridge, Microsoft, and Google.
%Hence, the commercial deployment of such a database-assisted network requires a proper  business model (and associated economics analysis), which offers sufficient incentives to the database operator to cover its capital expense (CapEx) and operating expense (OpEx).
%Recently, researchers have proposed several business and marketing models related to the database-assisted network, which can be categorized into two classes: \emph{Spectrum Market} and \emph{Information Market}.

According to existing related regulations,
the geo-location databases are operated by third-party companies  (instead of directly by the regulators or TV license owners).
These database operators such as SpectrumBridge, Microsoft, and Google need to cover its capital expense (CapEx) and operating expense (OpEx)  through a properly designed business model.
Existing models related to the database-assisted network can be categorized into two classes: \emph{Spectrum Market} and \emph{Information Market}.

%Specifically, spectrum market and information market target different types of TV channels: \emph{licensed} and \emph{unlicensed} TV channels.
%The {\lchs} are those registered to some TV licensees but {under-utilized} by the licensees.
%Hence, the licensees can temporarily lease the under-utilized (licensed) {\chs} to WSDs for the exclusive usage during a short time period.
%The {\uchs} are those \emph{not} registered to any TV licensee at a particular location, hence is the public resource at that location.
%The {\uchs} are usually assigned by the spectrum regulators for the public and shared usage among unlicensed WSDs, and not allowed for trading in a spectrum market.
%Moreover, due to the shared usage by unlicensed WSDs, the quality of {\uchs} is usually not guaranteed.
%Hence, the database can potentially sell its advanced information regarding the quality of {\uchs} to WSDs.

The spectrum market (\emph{e.g.,} \cite{feng2013database,Bogucka2012,liu2013})
%\cite{feng2013database,Bogucka2012,liu2013})
focuses on the trading of
licensed TV channels, which are
%The {\lchs} are those
registered to some TV licensees but are {under-utilized} by the licensees.
Hence, the licensees can temporarily lease the under-utilized (licensed) {\chs} to WSDs which are able to enjoy an exclusive usage right during a short time period.
This will generate some additional revenue for the licensees.
%The key idea of spectrum market model is to let spectrum licensees temporarily lease their {under-utilized} licensed TV channels to unlicensed users for some additional revenue.
%\rev{The secondary spectrum market is mainly designed for the trading of those (non-free) TV channels registered to certain licensee already but under-utilized by the licensee, while not suitable for the unregistered TV channels due to the policy restraints \cite{federal2012third}.}
The database serves as a market platform facilitating such a spectrum market.\footnote{For example, it can act as a spectrum broker or agent, purchase spectrum from licensees and then resell the purchased spectrum to unlicensed users.}
Spectrum Bridge, the world first certified geo-location database, provides such a \emph{database-provided} spectrum market platform called \emph{SpecEx} \cite{SpectrumBridgeCommericial2}.

%A commercial example of such a \emph{database-provided} spectrum market platform is \emph{SpecEx} \cite{SpectrumBridgeCommericial2}, operated by Spectrum Bridge, the world first certified geo-location database in USA.

The information market has been modeled and analyzed for the unlicensed TV channels (\emph{i.e.,} TV white spaces) in our early work \cite{luo2014wiopt, luo2014SDP}.
The {\uchs} are those \emph{not} registered to any TV licensee at a particular location  (for example, outside the official coverage range of the TV towers), hence are the \emph{public resources} at that location.
The spectrum regulators can assign the {\uchs}
%are usually assigned by the spectrum regulators
for the public and shared usage among unlicensed WSDs, and usually do not allow direct trading of such channels in a spectrum market.
%not allowed for trading in a spectrum market.
As these channels will be used by WSDs in a shared manner (in contrast to the exclusive usage in the spectrum market),
the communication quality in these {\uchs} is usually not guaranteed.
Notice that the database knows more advanced information regarding the quality of {\uchs} than unlicensed users.\footnote{For example, based on the knowledge about the network infrastructures of TV licensees and their licensed channels, the database can predict the average interference (from licensed devices) on each TV channel at each location.}
Hence, it can sell this information to the unlicensed users through an information market, which not only improves the unlicensed users' expected communication quality, but also provides additional profit to the database.
%it can make profit by selling such information to WSDs.
%To address this issue, the database can provide some advanced information regarding the quality of unlicensed TV channels (instead of channels themselves) to the unlicensed users, which will improve the latter's expected communication quality.
A commercial example of information market is \emph{White Space Plus} \cite{SpectrumBridgeCommericial}, again operated by Spectrum Bridge.~~~~~~~~~~~~~~~~~~~~~~~~~~~~

All of above listed works considered the spectrum market and information market separately and  independently.
In practice, however, \emph{the licensed TV channels and {\uchs} often co-exist at a particular location}.
Some users may prefer to lease the licensed TV channels for the exclusive usage,
while other users may prefer to share the free {\uchs} with others  (and purchase advanced information if needed).
Hence, a joint formulation and optimization of both spectrum market and information market is important for the practical large scale deployment of the database-based TV white space network.
However, none of the existing work on economics of TV white space networks \cite{feng2013database,Bogucka2012,liu2013} looked at the interaction between spectrum market and information market.
This motivates our study of an \emph{integrated} spectrum and information market for such a database-assisted TV white space network.

\begin{figure}
\centering
 \includegraphics[width=3.1in]{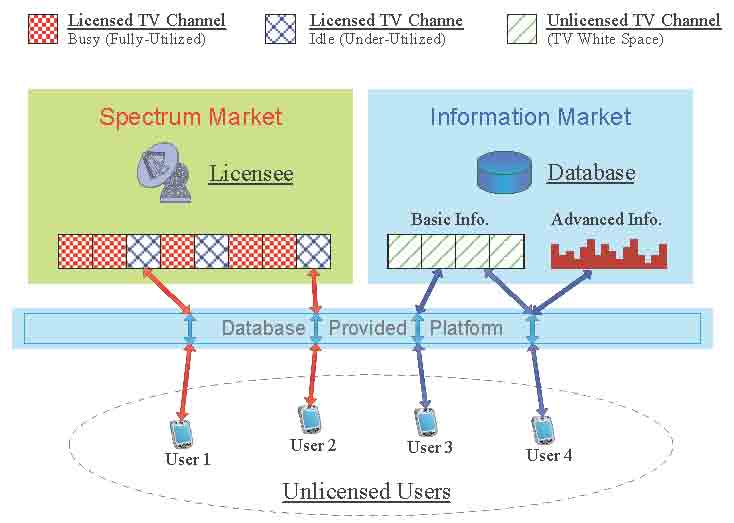}
  \caption{Illustration of database-provided integrated spectrum and information market. Users 1 and 2 lease the licensed TV channels from the spectrum market, and users 3 and 4 share the free {\uchs} with others. User 4 further purchases the advanced information from the information market to improve its performance.}\label{fig:model}
%  \com{should we also divide this into several channels? }
  \vspace{-2mm}
\end{figure}

\subsection{Contributions}

In this paper, we model and study an integrated spectrum and information market for a database-assisted TV white space network,
in which the geo-location database serves as (i) a \emph{spectrum market} platform for the trading of (under-utilized) licensed TV channels between spectrum licensees and unlicensed users\footnote{For convenience, we will call ``white space devices" as ``unlicensed users" in this paper.},
and (ii) an \emph{information market} platform for the trading of information (regarding the {\uchs}) between the database itself and unlicensed users.
Unlicensed users can choose to lease the licensed TV channels from licensees (via the database) for the exclusive usage,
or to share the free {\uchs} with others and purchase the advanced information from the information market if needed.
%In the latter case, users can further decide whether to purchase the advanced information regarding these TV white spaces from the database to enhance the performance.
Figure \ref{fig:model} illustrates such an integrated market.
%In this example, users 1 and 2 choose to lease the registered channels from the TV licensee, while users 3 and 4 choose to use the shared unregistered channels. Moreover, user 4 chooses to purchase the advanced information from the database.

%Our focus in this work is to make a thorough and comprehensive understanding of the  behaviors and equilibrium in such a database-provided hybrid spectrum and information market, so as to offer meaningful insights to a real TV white space network system.
%To this end, we formulate the interactions among the geo-location database (operator), the spectrum licensee, and unlicensed users as a three-layer hierarchical model shown as follows.
%% illustrated in Figure \ref{fig:problem}. Specifically,

To understand the market dynamics and equilibrium behaviors in such an integrated market, we formulate the interactions among the geo-location database (operator), the spectrum licensee, and unlicensed users as a three-stage hierarchical model:
% illustrated in Figure \ref{fig:problem}. Specifically,

\subsubsection{Stage I: Commission Negotiation (Section \ref{sec:layer1})}
In Stage I, the database and the spectrum licensee negotiate regarding the commission for the licensee to use the spectrum market platform.
Specifically, the database provides a platform facilitating the under-utilized licensed TV channels trading between the licensee and unlicensed users. Such a database-assisted licensed spectrum leasing has been widely adopted in the current Licensed Shared Access (LSA) and Authorized Shared Access (ASA) system.
%Specifically, the database, as the spectrum market platform, helps the spectrum licensee display, advertise, and sell the under-utilized {\lchs} to unlicensed users.
In return, the database takes some \emph{commission charge} from each successful transaction between the spectrum licensee and unlicensed users.
%We consider two different commission charging schemes: (i) \emph{revenue sharing} scheme  (RSS) and (ii) \emph{wholesale pricing} scheme (WPS).
%We study the commission charge under both schemes systematically, by using the Nash bargaining theory \cite{harsanyi1977bargaining}.
%In this work,
We consider two different commission charging schemes: (i) \emph{revenue sharing} scheme  (RSS), where the licensee shares a fixed percentage of  revenue with the database, and (ii) \emph{wholesale pricing} scheme (WPS), where the database charges a fixed wholesale price from each successful transaction of the licensee.
We assume that both the database and spectrum licensee have market powers, and study the equilibrium commission charge decisions under both schemes using the  Nash bargaining theory \cite{harsanyi1977bargaining}.~~~~~~~

\subsubsection{{Stage II: Price Competition Game} (Section \ref{sec:layer2})}
In Stage II, the database and the spectrum licensee compete with each other for selling information or channels to unlicensed users.
The spectrum licensee decides the price of the licensed TV channels,
and the database decides the price of the advanced information (regarding the unlicensed TV channels).
%This price competition game model is significantly different from traditional models \cite{Jia2008game,Niyato2009game,Duan2012game}, due to the commission charge on top of the game.\footnote{For example, a higher revenue sharing percentage under RSS will reduce the database's aggressivity in the pricing competition, since it can draw more revenue from the licensee's  transactions.}
We analyze such a price competition game, and show that it is a supermodular game \cite{topkis1998supermodular} with nice properties. ~~~~~~~~~
%study the game equilibrium systematically, by using the supermodular game theory \cite{topkis1998supermodular}.~~~~~~~~~~~~~~~~~~

\subsubsection{{Stage III: User Behavior and Market Dynamics (Section \ref{sec:user_subscript})}}
In Stage III, unlicensed users decide the best subscription decisions, given the database's information price and the licensee's spectrum price.
%Each user has three different choices: (i) leasing the licensed TV channels for the exclusive usage, (ii) sharing the free unlicensed TV white space with others, (iii) sharing the free unlicensed TV white space with others and purchasing the advanced information to improve its performance.
Note that the users need to consider both \emph{negative network externality} (due to congestion and interference) and \emph{positive network externality} (due to the quality of information provided by the database)  of the information market.
%On one hand, the information market has the \emph{negative} network externality, as more users purchasing the advanced information implies that more users share the TV white spaces, and hence increases the congestion level and degrades the user performance.
%On the other hand, the information market also has \emph{positive} network externality, as more users purchasing the advanced information increases the quality/value of the database's information and improves the user performance.
%Such a performance change further stimulates users to adjust their behaviors in the future, hence the users' behaviors dynamically involve.
We will show how the market dynamically evolves based on users' choices, and what the \emph{market equilibrium} point is.

In our model, Stage I and Stage II focus on cooperation and competition, respectively.
In Stage I,  we study the bargaining process between the database and the licensee, and analyze how they reach an collaborative agreement on the leasing service. In Stage II,  we study the pricing strategies of the database and the licensee who target at different user groups.
The licensee targets at those users who choose the licensed channels for the exclusive usage, while the database targets at those users who choose the unlicensed channels.
The goal of the cooperation is to make the pie larger, and the goal of competition is to decide the way to divide the pie \cite{brandenburger2011co}.

The timescales of   three stages are as follows.
\begin{itemize}
	%  \item In the long term, the licensee joins the leasing market and the database maintains the platform for displaying the licensee's channel information. Both of them would bargain with each other in Stage I regarding the commission of the platform.
	%  \item In the medium term, the licensee and the database make pricing decisions in Stage II to maximize their own payoff achieved in the leasing market and information market, respectively.
	%  \item In the short term, users make subscription decisions in Stage III to maximize their own per-period payoff.
	%
	\item The bargaining in Stage I is performed at a large time scale, e.g., once every year;
	
	\item The price competition in Stage II occurs at a medium time scale, e.g., once every month;
	
	\item The user subscription in Stage III changes at a small time scale, e.g., once every day.
	
\end{itemize}

This three-stage hierarchical order of decision making can be explained as follows. In Stage I, the licensee and the database would not frequently re-negotiate the new terms, as it would consume time and resources to reach an new agreement and change the resource based on the new agreement.
They will only negotiate again when the network resources or network infrastructure change.
In Stage II,
%the database and the licensee compete with each other
%by determining  the information price and the channel price, respectively.
the achieved price equilibrium depends on the database's and the licensee's costs for providing the advanced service or leasing service to users.
In practice, these costs will not change frequently (e.g., change per one month), and neither will the price equilibrium.
%In this paper, we assume that the database's and the licensee's pricing decisions change at a medium time scale (e.g., one month), corresponding to the potential change frequency of service costs.
In Stage III, a user's subscription decision depends on his instantaneous preference for data rate (i.e., the user's type $\theta$ in our model).
We divide the whole time period into multiple frames, each lasting for certain time (\emph{e.g.,} one day), and assume that the distribution of users' preference remains the same across different time frames.

We summarize the main contributions as follows.~~~~~~~~~~
%\com{Yuan: According to the professor's suggestion, I do not mention [15][16].}

\begin{itemize}

\item
\emph{Novelty and Practical Significance:}
We propose and study the first integrated spectrum and information market in the literature, for promoting the unlicensed spectrum access to both licensed and unlicensed TV channels. The proposed model captures both the positive and negative network externaltiy of the TV white space network.
%To the best of our knowledge, this is the first paper that proposes and studies a hybrid spectrum and information market for promoting the unlicensed spectrum access to both licensed and unlicensed TV channels.
%secondary utilization of both licensed and unlicensed TV channels.

\item
\emph{Market Equilibrium Analysis:}
We characterize the sufficient condition under which the proposed integrated market has a unique (user subscription) market equilibrium. We prove that the unique equilibrium is stable, in the sense that a small fluctuation on the equilibrium will drive the market back to the equilibrium.
%\com{user purchasing or user subscription}

\item
\emph{Competition Between the Database and the Licensee:}
We study the price competition between the database and the licensee and prove the existence and uniqueness of the price equilibrium. Key contributions of our analysis involves the transformation of the price competition game into an equivalent market share competition game, and the demonstration of the existence and uniqueness of the equilibrium of the transformed game using supermodular game theory.

\item
\emph{Nash Bargaining on the Commission Charge:}
We adopt the Nash bargaining framework to achieve a fair and Pareto-efficient
commission charge between the database and the licensee, under both revenue sharing scheme (RSS) and wholesale pricing scheme (WPS).

%\item
%\emph{Modeling and Solution Techniques:}
%We formulate the interactions as a three-layer hierarchical model, and analyze the model systematically by backward induction, using the market equilibrium theory, supermodular game theory, and Nash bargaining theory.~~~~~~~~~~~~

\item
\emph{Observations and Insights:}
We show that in this integrated spectrum and information market, the market share equilibrium of the licensee is always no more than half.
In terms of the network profit, our proposed integrated market scheme can improve up to $87\%$ comparing  with a pure information market, and the gap with the coordinated benchmark is less than $10\%$.
When further comparing our two proposed commission charging schemes,
%we show that when the negative network externality is dominant, RSS is a better choice for the database, while WPS is a better choice fort the licensee. When the positive network externality is dominant, WPS is a better choice for both the database and the licensee.
we show that the revenue sharing scheme (RSS) always outperforms wholesale pricing scheme (WPS) in terms of social welfare maximization.
In terms of maximizing database's and licensee's own profit, the result will depend on the level of network externality. When the negative network externality is dominant, RSS is a better choice for the database, while WPS is a better choice fort the licensee. When the positive network externality is dominant, WPS is a better choice for both the database and the licensee.
%(shown in Figure \ref{fig:performance_vs_gamma}.a and  \ref{fig:performance_vs_gamma}.b).

%For the database's, the licensee's, and the network profit maximization, there exists a tradeoff between these two schemes.}
%\com{huang:Outperform in which aspect? Are there other types of tradeoff between these two schemes? If RSS is always better than WPS, why should we sudy WPS in this paper? }

%it is more profitable for the whole network to use the revenue sharing scheme when the information market is negative externality dominant, and use the wholesale pricing scheme when the information market is positive externality dominant.

\end{itemize}

%\com{Yuan: Maybe we can compress the contribution section?}

The rest of the paper is organized as follows.
In Section \ref{sec:related}, we review the related literature.
In Section \ref{sec:model}, we present the system model.
In Sections \ref{sec:user_subscript}-\ref{sec:layer1},
we analyze the market equilibrium in Stage III, the price competition game game in Stage II, and the commission bargaining solution in Stage I, respectively.
In Section \ref{sec:simulation}, we provide the simulation results.
Finally, we conclude in Section \ref{sec:con}.

%\input{Section_literature}
%!TEX root = main_hybrid_market.tex
%SourceDoc main_hybrid_market.tex

\section{Related Work}\label{sec:related}

Most of the existing studies on green cognitive communications  aimed at addressing the technical issues.
For example,
Hafeez and Elmirghani in \cite{hafeez2015green} presented a new licensed shared access spectrum sharing scheme to increase the energy efficiency in a network.
Palicot in \cite{palicot2009} demonstrated how to achieve green radio communications by employing cognitive radio technology.
Ji \emph{et al.} in \cite{Ji2013} proposed a platform to explore TV white space in order to achieve green communications in cognitive radio network.
%In \cite{Xie2012}, Xie \emph{et al.} studied the problem of spectrum sharing and power allocation in heterogeneous cognitive radio networks considering the issue of energy efficiency.
Successful commercialization of new green cognitive technology, however, not only relies on sound engineering, but also depends on the proper design of a business model that provides sufficient incentives to the involved parties such as spectrum licensees and the network operators.
The joint study of technology and business issues is relatively under explored in the current green cognitive radio literature.

A common approach for studying market price competition is to model and analyze it as a non-cooperative game.
%Price competition in a market can be modeled as a non-cooperative game.
%[14] considered competition among multiple primary users to sell spectrum to the secondary users.
For example, Niyato \emph{et al.} in  \cite{Niyato2009game} proposed an iterative algorithm to achieve the Nash equilibrium in the competitive spectrum trading market.
%studied the problem of spectrum trading with multiple licensed users selling spectrum opportunities to multiple unlicensed users, and proposed an iterative algorithm to achieve the Nash equilibrium in this competitive network.
%In \cite{Kasbekar2012game}, Kasbekar \emph{et al.} analyzed price competition in spectrum trading market, jointly considering both bandwidth uncertainty and spatial reuse.
Min \emph{et al.} in \cite{Min2012game} studied two wireless service providers' pricing competition by considering spectrum heterogeneity.
Zhu \emph{et al.} in \cite{zhu2014game} studied pricing competition among macrocell service providers via a two-stage multi-leader-follow game.
%\com{Yuan: I can only find one with Linyang Song and Zhu Han with price competition in cognitve radio, i.e. \cite{Niyato2009game}...}
In the above literature, the market is assumed to be associated with the negative network externality or non-externality.
Luo \emph{et al.} in \cite{luo2014SDP} studied the price competition in the information market of TV white space, where the information market is only associated with the {positive} network externality.
In our work, the integrated market is associated with both the positive and negative network externality. Our numerical results show that the database benefits from the positive network externality, while the licensee benefits from the negative network externality.  Furthermore, which commission charging scheme is better for the database or the licensee
%in terms of their own profits maximization
depends on what kind of network externality is dominant in the network. This makes our  market analysis quite different with the above works.
%\com{What new insights we obtain from such analysis? }

%\com{Yuan: do you think we need to talk about something about bargaining related literature?}

%\input{Section_Model}
%!TEX root = main_hybrid_market.tex
%SourceDoc main_hybrid_market.tex

\section{System Model}\label{sec:model}

%\com{\emph{i.e.,},\emph{\emph{e.g.,}}, \emph{et al.}~Globally check and replace. }
%\com{For the rest of the paper, just use "database" and "users", without repeating geo-location and unlicensed. Globally check. }

We consider a database-assisted TV white space network, with a \emph{geo-location {\db}} (or \emph{database} for short) and a set of \emph{unlicensed users} (or \emph{users} for short).
%Let $\Kset = \{ 1,\ldots, \K \}$ denote the $\Nset=\{1,...,\N\}$
There exist some {\uchs}, which can be  used by unlicensed users freely in a shared manner (\emph{{e.g.,}} using CDMA).
%The database is mandatorily required to provide the unlicensed TV channel list (called \emph{basic information}) to any legal user requesting this information\cite{xx}.
Meanwhile, there is a \emph{spectrum licensee}, who owns some licensed channels and wants to lease the under-utilized channels to users for additional revenue.\footnote{In case there are multiple spectrum licensees, we assume that they are coordinated by the single representative. We will study the more general issue of licensee competition in the future work.}
Different from the {\uchs}, the licensed TV channels can be used by users in an exclusive manner (with the permission of the licensee).
Therefore, users can enjoy a better performance (\emph{{e.g.,}} a higher data rate or a lower interference) on the licensed TV channels.
%\com{Do we want to use both? Or it is better to stick with one of them, say "{\uchs}"? This will clearly differentiate from the "licensed TV channels". In this case, remove the "TV white space" starting from this section. (It is still ok to talk about it in Sections I and II.) }
%\com{It is strange to introduce a single price here, without talking about how to use it. Always try to introduce the notation at the place where it will be immediately used. For this term, introduce immediately before (1) in Subsection B. }

%\com{For each section, usually there should be a paragraph summarizing the whole section, before going into details. Add the following paragraph at the beginning of Section III.A: The database provides the following three services to the users. }

\subsection{Services Offered by the Database}
%According to the regulators and the pratice examples,
Motivated by the current regulatory practices and commercial examples \cite{Ofcom2010geo,SpectrumBridgeCommericial2,SpectrumBridgeCommericial}, we assume  that the database provides the following three services to the users.

\subsubsection{Basic Service}
%\emph{Basic service.}
Regulators such as Ofcom in UK and FCC in US require a {\db} to provide an unlicensed user with the following information \cite{federal2012third,Ofcom2010geo}: (i) the list of {\uchs}, and (ii) each channel's transmission constraints such as a user's maximum allowable transmission power.
%According to the regulation policy (\emph{e.g.,} \cite{federal2012third,Ofcom2010geo}), it is mandatory for a {\db} to provide the following information for any {\eu}: (i) the list of {\uchs}, and (ii) the transmission constraints (\emph{e.g.,} maximum allowable transmission power from a user) on each channel in the list.
%%and (iii) other \rev{optional} requirements.
%%\com{Huang: If it the optional, then it is not mendatory.}
Any user can have this \emph{basic (information) service} for free.
%The database needs to provide this \emph{basic (information) service} free of charge to any user.

\subsubsection{Advanced Service}
Beyond the basic information, the {\db} can also provide certain advanced information regarding the quality of TV channels (as SpectrumBridge did in \cite{SpectrumBridgeCommericial}), as long as it does not conflict with the free basic service. pWe refer to such additional service as the \emph{advanced (information) service}.
Appendix A illustrates an example of the advanced information in terms of the interference level on each channel.
%A typical example of such an advanced information is ``the interference level on each channel'', which will be shown in \cite{report}.
With the advanced information, the {\eu} is able to choose a channel with the  highest quality (\emph{e.g.,} with the lowest interference level).
Hence, the {\db} can \emph{sell} this advanced information to users for profit.
This leads to an \emph{information market}.

\subsubsection{Leasing Service}
%\emph{Leasing service.}
As mentioned previously, the database can also serve as a spectrum market \emph{platform} for the trading of licensed channels between the spectrum licensee and users, which we call the \emph{leasing service} (as SpectrumBridge did in \cite{SpectrumBridgeCommericial2}).
In return, the database will charge commission to the spectrum licensee when a trading happens.
Through using the database as the trading platform, the licensees received the \emph{aggregation benefit} \cite{bhargava2004economics}, comparing with the case that they try to directly reach leasing agreement with users.
Specifically, due to the database's proximity to both spectrum licensees and the users, the database's spectrum market platform can aggregate users demand and licensees spectrum, provide trust between participants, and match users and licensees. Hence, the licensees can save time and market efforts in identifying the potential buyers.
%Because of the \emph{aggregation benefit}, the licensee  that the
We consider two different commission charging schemes: (i) \emph{revenue sharing} scheme (RSS), where the licensee shares a fixed percentage  of  revenue with the database, and (ii) \emph{wholesale pricing} scheme (WPS), where the database charges a fixed wholesale price from each successful transaction, regardless of the licensee's revenue from that transaction.\footnote{Both commission charging schemes are widely used in practice. The revenue sharing scheme is widely used in retail markets such as \cite{cachon2001contracting,gerchak2004revenue,dana2001revenue}. The wholesale pricing scheme is widely used in many newsvendor models such as \cite{Niyato2009game,lariviere2001selling,niyato2008spectrum}.}
%We denote $\delta \in [0,1]$ as the revenue sharing percentage under RSS and $w \geq 0$ as the wholesale price under WPS.

%\com{mention the practical example of this case again, and explain why the spectrum licensee can not lease the spectrum directly. Regulation? Transaction cost? (The reason would be why someone is not selling good directly to others, but through Taobao or TianMao. Please check the related theory behind this.)}

\subsection{A User's Choices}

Users can choose either to purchase the licensed channel from the licensee for the exclusive usage, or to share the {\uchs} with others (with and without advanced information).
We assume that all licensed and {\uchs} have the same bandwidth (\emph{e.g.,} 8MHz in UK), and each user only needs one channel (either licensed or unlicensed) at a particular time.

Formally, we denote $\l \in \{\B, \A, \LS\}$ as the  \emph{strategy} of a user, where
%Each user can choose either to purchase the licensed channel from the spectrum licensee for the exclusive usage, or to request the free unlicensed channels from the database for the shared usage.
%We assume that one user requests one channel (either licensed or unlicensed).
%Let $\l$ denote the strategy of a unlicensed user.
%Then, each user {\eu} has the following 3 choices:
% in terms of channel selection under different services.
%After obtaining the available channel list through the free basic service, each {\eu} has 3 choices (denoted by $\l$) in terms of channel selection:
\begin{itemize}
\item[(i)] $\l = \B$:
Choose the basic service (\emph{i.e.,} share unlicensed channels with others, without the advanced information);
%Hence, the user randomly chooses a unlicensed {\tvch} in the available channel list;
\item[(ii)] $\l = \A$:
Choose the advanced service (\emph{i.e.,} share unlicensed channels with others, with the advanced information).
%Hence, the user picks the best {\ch} in the available channel list (\emph{e.g.,}  with the lowest interference);
\item[(iii)] $\l = \LS$:
Choose the leasing service (\emph{i.e.,} lease the licensed channel for the exclusive usage).
\end{itemize}

%\com{Why it is "expected"? Where does the randomness come from?}
We further denote $\RB(\Probl)$, $\RA(\Proba, \Probl)$, and $\RL$ as the {expected} \emph{utilities} that a user can achieve from choosing the basic service ($\l = \B$), the advanced service ($\l = \A$), and the leasing service ($\l = \LS$), respectively. Here, $\Probb$, $\Proba$, and $\Probl$ denote the fractions of {\eus} choosing the basic service, the advanced service, and the leasing service, respectively.
For convenience, we refer to $\Probb$, $\Proba$, and $\Probl$ as the \emph{market shares} of the basic service, the advanced service, and the leasing service, respectively.
Obviously, $\Probb, \Proba, \Probl \geq 0$ and $\Probb + \Proba + \Probl = 1$.
As explained in Section \ref{sec:network_externality}, the values of $\RB(\Probl)$ and $\RA(\Proba, \Probl)$ depend on all users' choices, while the value of $\RL$ is independent of market share.
The \emph{payoff} of a {\eu} is defined as the difference between the achieved utility and the cost (\emph{i.e.,} the  {information price} when choosing the advanced service, or the  {leasing price} if choosing the leasing service).
Let $\th$ denote the \eu's evaluation for the achieved utility\footnote{We consider a simplified but rather general model, where the QoS of a user is a linear function of the expected data rate.}, $\pl \geq 0$ denote the leasing price of the licensee, and $\pa \geq 0$ denote the (advanced) {information price} of the database.
Then, the payoff of a user with an evaluation factor $\th$ is
\begin{equation}\label{eq:utility-basic}
\textstyle
\Ueu_{\th} = \left\{
\begin{aligned}
&\textstyle  \th \cdot \RB(\Probl) ,      &&  \ \text{if} ~ \l = \B, \\
&\textstyle  \th \cdot \RA(\Proba, \Probl)   -  \pa , &&  \  \text{if} ~ \l  = \A, \\
&\textstyle  \th \cdot \RL - \pl ,      &&  \ \text{if} ~ \l = \LS.
\end{aligned}
\right.
\end{equation}

%Let $\Probb$, $\Proba$, and $\Probl$ denote the fractions of {\eus} choosing the basic service, the advanced service, and the leasing service, respectively.
%For convenience, we refer to $\Probb$, $\Proba$, and $\Probl$ as the \emph{market shares} of the basic service, the advanced service, and the leasing service, respectively.

%We further denote $\RB$, $\RA$, and $\RL$ as the {expected} \emph{utility} that a user can achieve from choosing the basic service ($\l = \B$), the advanced service ($\l = \A$), and the leasing service ($\l = \LS$), respectively.
%The values of $\RB$ and $\RA$ are not constants, but depend on all users' choices as explained in Section \ref{sec:network_externality}.
%The \emph{payoff} of a {\eu} is defined as the difference between the achieved utility and the cost (\emph{i.e.,} the  {information price} when choosing the advanced service, or the  {leasing price} if choosing the leasing service).
%Let $\th$ denote the \eu's evaluation for the achieved utility, $\pl \geq 0$ denote the leasing price of the licensee, and $\pa \geq 0$ denote the (advanced) {information price} of the database.
%Then, the payoff of a user with an evaluation factor $\th$ is
%\begin{equation}\label{eq:utility-basic}
%\textstyle
%\Ueu_{\th} = \left\{
%  \begin{aligned}
%  &\textstyle  \th \cdot \RB ,      &&  \ \text{if} ~ \l = \B, \\
%  &\textstyle  \th \cdot \RA   -  \pa , &&  \  \text{if} ~ \l  = \A, \\
%  &\textstyle  \th \cdot \RL - \pl ,      &&  \ \text{if} ~ \l = \LS.
%   \end{aligned}
%\right.
%\end{equation}

A rational user will choose a strategy $\l \in \{\B, \A, \LS\}$ to maximize its payoff.
Note that {\eus} are heterogeneous in terms of $\th$, which characterizes how different users evaluate the same data rate.
For example, for a user who wants to send a plain text email, his evaluation for a small data rate may be similar to that for a high data rate, as a small data rate is enough for finish sending the email. This can be captured by a small $\th$.
On the other hand, if the email contains a large attachment and needs to be sent within a short amount of time (such as tens of seconds), then the user has a much higher valuation for a high data rate and hence a large $\theta$. Similarly, for a user who is watching a high-quality online video, his evaluation for a small data rate may be much lower than that for a high data rate. This can also be captured by a large $\th$.

Let $\ca$ denote the energy consumption cost of the database for providing the advance service, and let $\cl$ denote the energy consumption cost of the licensee for providing the leasing service. For the rest of the paper, we will also use "operational cost" to refer to these costs.
%\footnote{For convenience, in the paper, we denote the database's and the licensee's energy consumption cost as ``operational cost''.}.

We further denote $\delta \in [0,1]$ as the revenue sharing percentage under RSS, and $w \geq 0$ as the wholesale price under WPS.
We assume a unit population of agents., (\emph{i.e.,} the total number of {\eus} is equal to $1$).
Then, the  \emph{payoffs} (profits) of the spectrum licensee $\Usl$ and the database $\Udb$, which are defined as the difference between the revenue obtained by providing the services and the cost, are given as follow.
The payoffs under the RSS scheme (Scheme I) are
\begin{equation}\label{eq:u1}
\left\{
\begin{aligned}
\Usl \eq \Uslrs &= ( \pl -\cl)  \Probl   (1 - \delta)
\\
\Udb \eq \Udbrs &= (\pa - \ca)   \Proba + ( \pl - \cl)   \Probl   \delta ,
\end{aligned}
\right.
%\Usl \eq \Uslrs = ( \pl -\cl)  \Probl   (1 - \delta),
%~~~~
%\Udb \eq \Udbrs = (\pa - \ca)   \Proba + ( \pl - \cl)   \Probl   \delta ,
\end{equation}
and under the WPS scheme (Scheme II) are
%under the RSS scheme (Scheme I), and
\begin{equation}\label{eq:u2}
\left\{
\begin{aligned}
\Usl \eq \Uslwp &= (\pl -\w)  \Probl - \cl  \Probl,
\\
 \Udb \eq \Udbwp &= \pa \Proba + \w   \Probl - \ca  \Proba.
\end{aligned}
\right.
%\Usl \eq \Uslwp = (\pl -\w)  \Probl - \cl  \Probl,
%~~~~
%\Udb \eq \Udbwp = \pa \Proba + \w   \Probl - \ca  \Proba.
\end{equation}
%under the WPS scheme (Scheme II).

%\com{What is the impact of user population size N?
%If we are not using N, we should call the payoff "normalized".
%Actually, are we using "N" in the paper at all? }
%\com{Yuan: Since we only use N in the simulation part, I change it to the normalized payoff and I also change the simulation figures accordingly.}

\subsection{Positive and Negative Network Externalities}\label{sec:network_externality}
%\com{Explain what is network externalities?}
%Network externality has been defined as a change in the benefit, or surplus, that an agent derives from a good when the number of other agents consuming the same kind of good changes.
Network externalities arise when a {\eu}'s experiencing of consuming a service/product depends on the behavior of other {\eus} in the same network \cite{bhargava2004economics}. In the integrated market that we study, there coexist two types of network externalities.
The \emph{negative} network externality characterizes the user performance degradation due to an increased level of congestion.
The \emph{positive} network externality corresponds to the increasing quality of the (advanced) information as more users purchase the information.
%The \emph{negative} network externality corresponds to the increasing level of congestion and degradation of user performance, due to more users sharing the same {\uchs}.
%The \emph{positive} network externality, which is due to the quality increase of the (advanced) information as more users purchasing the information.
Next we analytically quantify these two network externalities.
As  $\Proba + \Probb  + \Probl  = 1$, sometimes we also denote the total fraction of {\eus} using {\uchs} as $1- \Probl$ in the rest of the paper.

We first have the following observations for a user's expected utility of three strategy choices:
\begin{itemize}
\item
\emph{$\RL$ is a constant independent of $\Proba $, $\Probb$, and $ \Probl $.}
This is because a user can access to the licensed channel exclusively, hence the communication performance on such a channel does not depend on the choices of others.
%This is because the licensed channels are used in an exclusive manner, hence the performance of a user on such a channel does not depend on the choices of others.
\item
\emph{$\RB$ is non-increasing in $1 - \Probl $} (the total fraction of users using {\uchs}) due to the congestion effect.
This is because the {\uchs} must be used in a shared manner, hence more users using these channels increases the level of congestion (interference) and reduces the performance of each user. We denote the congestion effect caused by {\eus} using the same {\uchs} as \emph{negative network externality}.
 \item
\emph{$\RA$ is non-increasing in $1 - \Probl$, due to the negative network externality}. This is because the {\uchs} are shared by {\eus}, independent of their choices of subscribing to the {\db}'s advanced service or not.
%similar as $\RB$.
%\com{for both A and B, or just for A? }
%\rev{This is the \emph{negative network externality}.}
 \item
\emph{$\RA$ is non-decreasing in $\Proba$}.
%\com{why}
As more users subscribe to the database's advanced service, the more information (\emph{e.g.,} users channel choices and transmission power levels) the database knows. In such case, the database can estimate more accurate channel information, which benefits the users who subscribe to the advanced service.
More detailed explanation is provided in Appendix A.
%The detailed explanation is provided in \cite{report}.
Such benefit that increases with the {\eus} choosing the advanced service is called the \emph{positive network externality}.
\end{itemize}

%For convenience, we use $\Probl $ (the leasing service's market share) and $\Proba$ (the advanced service's market share) to represent the market shares, and use $\Probb = 1 - \Probl - \Proba$ to represent the basic service's market share.
%%can be directly computed with $\Probl $ and $ \Proba $.
We write $\RB$ as a non-increasing function of  $ 1- \Probl$, \emph{i.e.,}
%\footnote{Later we will also write $\Proba + \Probb $ as $1- \Probl$, since $\Proba + \Probb  + \Probl  = 1$.}
%$$
\begin{equation}
\label{eq:basic_define}
\textstyle
\RB(\Probl) \triangleq \fx( 1 - \Probl).
\end{equation}
%$$

We write $\RA$ as the combination of a
non-increasing function of $ 1 - \Probl$ and a non-decreasing function of $\Proba $, \emph{i.e.,}
%$$
\begin{equation}
\label{eq:advanced_define}
\textstyle
\RA(\Proba, \Probl) \triangleq \fx(1 - \Probl) + \gy(\Proba).
\end{equation}
%$$
%where $\fx(\cdot)$ is a non-increasing function, and $\gy(\cdot)$ is a non-decreasing function.
Function $\fx (\cdot)$ reflects the congestion effect, and is identical for $\RB $ and  $\RA$ (as users experience the same congestion effect in both basic and advanced services).
Function $\gy(\cdot)$ reflects the performance gain induced by the advanced information, \emph{i.e.,} the (advanced) information value.

%As there is no congestion on the licensed channel and the additional advanced information can improve a user's expected utility, it is reasonable to have the following assumption.
We have the following natural assumption:
\begin{assumption}\label{assume:utility_relationship}
The expected utilities achieved by choosing different services satisfy
$$\RL > \RA(\Proba, \Probl) > \RB(\Probl).$$
\end{assumption}

Comparing with {\uchs}, there is no congestion on the {\lchs}. Hence, the expected utility provided by the leasing service is larger than that provided by the advanced service  (\emph{i.e.,}  $\RL>\RA(\Proba, \Probl)$ and $\RL > \RB(\Probl)$).  As advanced information provides benefit to the users, we have $\RA(\Proba, \Probl) > \RB(\Probl)$.
%\footnote{If we set $\RL < \RB(\Probl)$, then users will never choose the leasing service even with a zero channel price $\pl$. In this case, our model degenerates to the pure information market, similar as that in \cite{luo2014wiopt}.
%	If we set $\RA(\Proba, \Probl) = \RB(\Probl)$, then users will never choose the advanced service even with a zero information price $\pa$.
%	In this case, our model degenerates to a simpler market, where the licensee provides the leasing service and the database provides the basic service only.
%	%	In this case, our model degenerates to a monopoly spectrum market (where the licensee is the monopolist).
%	The analysis of such a model is simpler than the most general case that we consider here. }
\footnote{{If we set $\RL < \RB(\Probl)$, no one will choose the leasing service even the leasing service is free of charge. In this case, our integrated model degenerates to the pure information market that is analyzed in \cite{luo2014wiopt}.
If we set $\RA(\Proba, \Probl) = \RB(\Probl)$, then no user will choose the advanced service even the database provides the advanced service for free.
In this case, our integrated model degenerates to a simpler market, where the licensee provides the leasing service and the database provides the basic service only.
The analysis of such a model is simpler than the most general case that we consider here. }}
%In this sense, our integrated market model generalizes both discussed markets.}}
%\com{Yuan: whether we need to put the footnote in the main text.}

%assume that \rev{$\RL$ is always larger than $\RA $ and $\RB $.
%We can further assume that $\RA$ is larger than $\RB$ due to additional advanced information\footnote{{Note that if we set $\RL < \RB$, then users will never choose the leasing service even with a zero channel price $\pl$.
%In this case, our model degenerates to the pure information market, similar as that in \cite{luo2014wiopt}.
%Moreover, if we set $\RA = \RB$, then users will never choose the advanced service even with a zero information price $\pa$.
%In this case, our model degenerates to a monopoly spectrum market (where the licensee is the monopolist).
%In this sense, our hybrid market model generalizes both the pure spectrum market and pure information market.}}.}
%\com{These two assumptions are important and should be put in Assumption environments. }
%To facilitate the later analysis,
We further introduce the following assumptions on functions $\fx (\cdot)$ and $\gy (\cdot)$.
%\vspace{-4mm}
\begin{assumption}\label{assum:congestion}
Function $\fx(\cdot)$ is non-negative, non-increasing, convex, and continuously differentiable.
\end{assumption}
%\vspace{-4mm}
\begin{assumption}\label{assum:positive}
Function $\gy(\cdot)$ is non-negative, non-decreasing, concave, and continuously differentiable.
\end{assumption}
%\vspace{-2mm}

Due to the increasing marginal performance degradation under congestion, we assume that function $\fx(\cdot)$ is non-increasing and convex. Such assumption is widely used to model the network congestion effect in wireless networks (see, \emph{e.g.,} \cite{shetty2010congestion,johari2010congestion} and references therein).
Because of the diminishing marginal performance improvement induced by the advanced information, we assume that function $\gy(\cdot)$ is non-decreasing and concave.
Note that the above generic functions $\fx(\cdot)$ and $\gy(\cdot)$ can potentially model a wide range of scenarios where advanced information has different meanings.
%We will provide more detailed discussions in \cite{report}.
We will provide more detailed discussions in the Appendix A.

%The non-increasing and convexity assumption of $\fx(\cdot)$ is due to the increasing of marginal performance degradation under congestion, and is widely used to model the network congestion effect in wireless networks (see, \emph{e.g.,} \cite{shetty2010congestion,johari2010congestion} and references therein).
%The non-decreasing and concavity assumption of $\gy(\cdot)$ results from the diminishing of marginal performance improvement induced by the advanced information.
%Note that the above generic functions $\fx(\cdot)$ and $\gy(\cdot)$ can potentially model a wide range of scenarios where advanced information has different meanings.
%%be used to generalize a lot of scenarios with the explicit advanced information definition.
%We will provide more detailed discussions in the Appendix A.
%%\emph{e.g.,} those proposed by Luo \emph{et al.} in \cite{luo2014wiopt, luo2014SDP} (where the advanced information is the interference level on each channel). Due to space limit, we leave the detailed discussion in \cite{report}.

%\vspace{-3mm}
\subsection{Three-Stage Interaction Model}

%
%Our focus in this work is to make a thorough and comprehensive understanding of the  behaviour and equilibrium in such a database-provided hybrid spectrum and information market, so as to offer meaningful insights to a real TV white space network system.
%To this end, we formulate the interactions among the geo-location database (operator), the spectrum licensee, and unlicensed users as a three-layer hierarchical model illustrated in Figure \ref{fig:problem}.
%Specifically,

Based on the above discussion, an integrated spectrum and information market involves the interactions among the database, the spectrum licensee, and the users.
Hence, we formulate the interactions as a three-stage hierarchical model as shown in Figure \ref{fig:layer}.
%To make a thorough and comprehensive understanding of the behaviour and equilibrium in such a market, we formulate the interactions as a three-layer hierarchical model illustrated in Figure \ref{fig:layer}.

\begin{figure}[t]
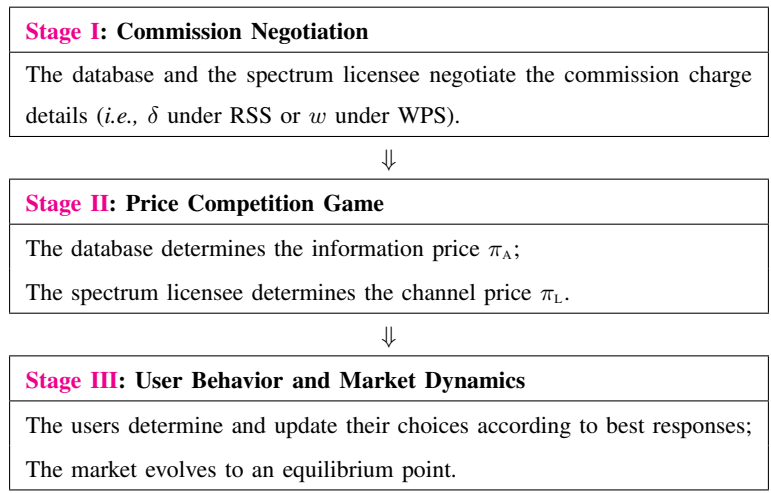

	\centering
	\footnotesize
	\begin{tabular}{|m{3.8in}|}
		\hline
		\textbf{\revh{Stage I}: Commission Negotiation}
		\\
		\hline
		The database and the spectrum licensee negotiate the commission charge details (\emph{i.e.,}
		$\delta $ under RSS or $w$ under WPS).
		\\
		\hline
		\multicolumn{1}{c}{$\Downarrow$} \\
		\hline
		\textbf{\revh{Stage II}: Price Competition Game}
		\\
		\hline
		The database determines the information price $\pa$;
		\\
		The spectrum licensee determines the channel price $\pl$.
		\\
		\hline
		\multicolumn{1}{c}{$\Downarrow$} \\
		\hline
		\textbf{\revh{Stage III}: User Behavior and Market Dynamics}
		\\
		\hline
		The users determine and update their choices according to best responses;
		% their best choices;
		\\
		The market evolves to an equilibrium point.
		\\
		\hline
	\end{tabular}
	\caption{Three-stage Interaction Model}
	\label{fig:layer}
	%\vspace{-7mm}
\end{figure}

Specifically, Stage I captures the negotiation process between the database and the spectrum  licensee, who negotiate the commission charge details of the spectrum market platform, \emph{i.e.,} the revenue sharing factor $\delta $ under RSS, or the wholesale price $\w$ under WPS.
Stage II studies the price competition between the database and the spectrum licensee, where the database determines the advanced information price $\pa$, and the spectrum licensee determines the leasing licensed channel price $\pl$.
Stage III focuses on the subscription behaviors of users, where each user makes his best choice, and dynamically updates the subscription choice based on the current market shares.

In the following sections, we will use backward induction to analyze this three-stage interaction model.

%\input{Section_Subscription_dynamic}
%!TEX root = main_hybrid_market.tex
%SourceDoc main_hybrid_market.tex

\section{Stage III -- User Behavior and Market Equilibrium}
\label{sec:user_subscript}

In this section, we study the user behavior and market dynamics in Stage III, given the database's information price $\pa$ and the licensee's channel price $\pl$ (in Stage II).
In the following, we first discuss the {\eu}'s best response choice, then show how the user behavior dynamically evolves, finally discuss how the market converges to an equilibrium point.

%\vspace{-3mm}
\subsection{{\Eu}'s Best Response}\label{sec:user-best}
%\com{Yuan: Best Response Strategy or Best Response. Do we need to change a strategy as a response, correspondingly?}

Equations \eqref{eq:utility-basic}, \eqref{eq:basic_define} and \eqref{eq:advanced_define} show that users' choices of services depend on the current market shares of different services. Hence given the market shares, users can compute their best response choices, which in turn will affect the market shares. Next we will characterize such a process in details.

%As \eqref{eq:basic_define} and \eqref{eq:advanced_define} show that the utility provided by the basic service and the advanced service of database are varying with the database's and the licensee's market shares, each {\eu} will form a belief on the utility of database and make a subscription decision.

For convenience, we introduce a virtual time-discrete system with slots $\t=1,2,\ldots$, where {\eu}s change their decisions at the beginning of every slot, based on the market shares at the end of the previous time slot.
\footnote{
%		We would like to clarify that the WSD subscription game
%	is a one-shot static game, where all WSDs choose their strategies once and simultaneously.
	The main purpose of
	introducing the virtual time-discrete system, similar as the best response iterative algorithm in classic game theoretic analysis,
	is to characterize the relation between the price and the market equilibrium, and to facilitate the calculation of
	database's optimal price strategy later. Such an analysis technique (i.e., using a dynamic system to understand the outcome of a one-shot system)
	has been extensively adopted in the existing literature, e.g., \cite{manshaei2008evolution,shaolei2011}.}
Let $( \Probl^{\t}, \Proba^{\t}, \Probb^{\t} )$ denote the market shares at the end of slot $t$ satisfying $(\Probl^{\t}, \Proba^{\t}) \in \Probset $, where $\Probset$ is the market share feasible set defined as $\Probset = \{ ( \Probl, \Proba) \in \Rsetp | \Probl + \Proba \leq 1 \}$.
%Then we
%study the best response of   {\eus}, given the prices $\{\pl, \pa\}$ and the market state $\{\Probl^{\t}, \Proba^{\t}, \Probb^{\t}\}$ where $\Probb^{\t} + \Proba^{\t} + \Probl^{\t} = 1$.
For convenience, we assume that $\th$ is uniformly distributed in $[0,1]$ for all \eus.\footnote{Uniform assumption has been frequently used in the past networking literature (\emph{e.g.,} \cite{manshaei2008evolution,shetty2010congestion,duan2011investment}), and the relaxation to more general distribution often does not change the main insights.}
As each {\eu} will choose a strategy that maximizes its payoff defined in (\ref{eq:utility-basic}),
a type-$\th$ {\eu}'s best response is
%\footnote{{Here, ``iff'' stands for ``if and only if''.
%Note that we omit the case of $\th \cdot \RL -  \pl = \max\{ \th \cdot \RA(\Proba^{\t},\Probl^{\t})  -  \pa ,\ \th \cdot \RB(\Probl^{\t}) \}$, $\th \cdot \RA(\Proba^{\t},\Probl^{\t}) -  \pa = \max\{ \th \cdot \RL  -  \pl ,\ \th \cdot \RB(\Probl^{\t}) \}$, and $\th \cdot \RB(\Probl^{\t})  = \max\{ \th \cdot \RL - \pl, \th \cdot \RA(\Proba^{\t},\Probl^{\t})  -  \pa \}$, which are negligible (\emph{i.e.,} occurring with zero probability) due to the continuous distribution of $\th$.}}
\begin{equation}\label{eq:utility_function}
\left\{
\begin{aligned}
\l_{\th}^* = \LS, & \mbox{~~~~iff~~} \th \cdot \RL -  \pl > \max\{ \th \cdot \RA(\Proba^{\t},\Probl^{\t})  -  \pa ,\ \th \cdot \RB(\Probl^{\t}) \},
\\
\l_{\th}^* = \A, & \mbox{~~~~iff~~} \th \cdot \RA(\Proba^{\t},\Probl^{\t})  -  \pa >  \max\{ \th \cdot \RL  -  \pl ,\
 \th \cdot \RB(\Probl^{\t}) \},
\\
\l_{\th}^* = \B, & \mbox{~~~~iff~~} \th \cdot \RB(\Probl^{\t}) > \max\{ \th \cdot \RL  -  \pl, \ \th \cdot \RA(\Proba^{\t},\Probl^{\t}) -  \pa \},
\end{aligned}
\right.
\end{equation}
where $\RB(\Probl^{\t}) =  \fx(1- \Probl^{\t})$ and $\RA(\Proba^{\t},\Probl^{\t}) = \fx(1- \Probl^{\t}) + \gy(\Proba^{\t})$.
\footnote{Here we have written $\Proba^0 + \Probb^0$ as $1-\Probl^0$. For convenience, we will use $\Probl $ (the leasing service's market share) and $\Proba$ (the advanced service's market share) to represent the market state, since the basic service's market share $\Probb = 1 - \Probl - \Proba$ can be directly computed with $\Probl $ and $ \Proba $.}

To better illustrate the above best response, we introduce the following notations:
\begin{equation}\label{eq:p-thres}
%\textstyle
\thlb^{\t} \eq \frac{ \pl}{ \RL- \RB(\Probl) },
~~~~
\thab^{\t} \eq \frac{ \pa}{ \RA(\Proba,\Probl) - \RB(\Probl) },
~~~~
\thla^{\t} \eq \frac{\pl-\pa}{\RL - \RA(\Proba,\Probl)}.
\end{equation}

%\com{plug in A and B (as functions of market shares) here, and give it an equation number. Then refer to this equation in Lemma 1. This makes it clear about the connection between the derived and the initial market shares. }
%\begin{equation*}\label{eq:p-thres}
%\textstyle \pth_{02} = \frac{ \p_0}{ \RS- \RB },
%~~\textstyle \pth_{12} = \frac{ \p_1}{ \RA- \RB },
%~~\mbox{and}~~ \pth_{01} = \frac{\p_0-\p_1}{\RS - \RA}.
%\end{equation*}
%\begin{equation*}\label{eq:p-thres}
%\textstyle \thlb(\Probl^0) = \frac{ \pl}{ \RL- \RB },
%~~\textstyle \thab(\Proba^0) = \frac{ \pa}{ \RA- \RB },
%~~\mbox{and}~~ \thla(\Probl^0,\Proba^0) = \frac{\pl-\pa}{\RL - \RA}.
%\end{equation*}
%\com{Yuan: do you think if it is better to use the following equation would be better?
%\begin{equation*}\label{eq:p-thres}
%\textstyle \thlb(\Probl^0) = \frac{ \pl}{ \RL- \fx(1- \Probl^0) },
%~~\textstyle \thab(\Probl^0) = \frac{ \pa}{ \fx(1- \Probl^0) },
%~~\mbox{and}~~ \thla(\Probl^0,\Proba^0) = \frac{\pl-\pa}{\RL - \fx(1 - \Probl^0) - \gy(\Proba^0)}.
%\end{equation*}
%}
The above three notations denote three user type thresholds.
For example, $\thlb^{\t}$ is defined as the type of user who is indifferent between choosing the leasing service and the basic service (i.e.,  both services provide the user with the same expected payoff).
Hence, a user with a type $\th > \thlb^{\t}$ would achieve a higher expected payoff when choosing the leasing service than choosing the basic service.
Similarly, a user with the type threshold $\thab^{\t}$ is indifferent between the basic service and advanced service, and a user with the type threshold $\thla^{\t}$ is indifferent between the leasing service and advanced service.
Combining these three user types thresholds together, we can compute the market share of each service.
%As both $\RA(\Proba^{\t}, \Probl^{\t})$ and $\RB(\Probl^{\t})$ are functions of market shares $\{\Probl^{\t}, \Proba^{\t} \}$, $\thlb^{\t} $, $\thab^{\t}$, and $\thla^{\t}$ are also the functions of $\{\Probl^{\t}, \Proba^{\t}\}$.}

%Intuitively, $\thlb^{\t}$ denotes the smallest {\eu} type who prefers the leasing service than the basic service,
%$\thab^{\t}$ denotes the smallest user type who prefers the advanced service than the basic service,
%and $\thla^{\t}$ denotes the smallest {\eu} type who prefers the leasing service than the advanced service.
%As both $\RA(\Proba^{\t}, \Probl^{\t})$ and $\RB(\Probl^{\t})$ are functions of market shares $\{\Probl^{\t}, \Proba^{\t} \}$, $\thlb^{\t} $, $\thab^{\t}$, and $\thla^{\t}$ are also the functions of $\{\Probl^{\t}, \Proba^{\t}\}$.

\begin{figure}
\centering
  \includegraphics[width=3in]{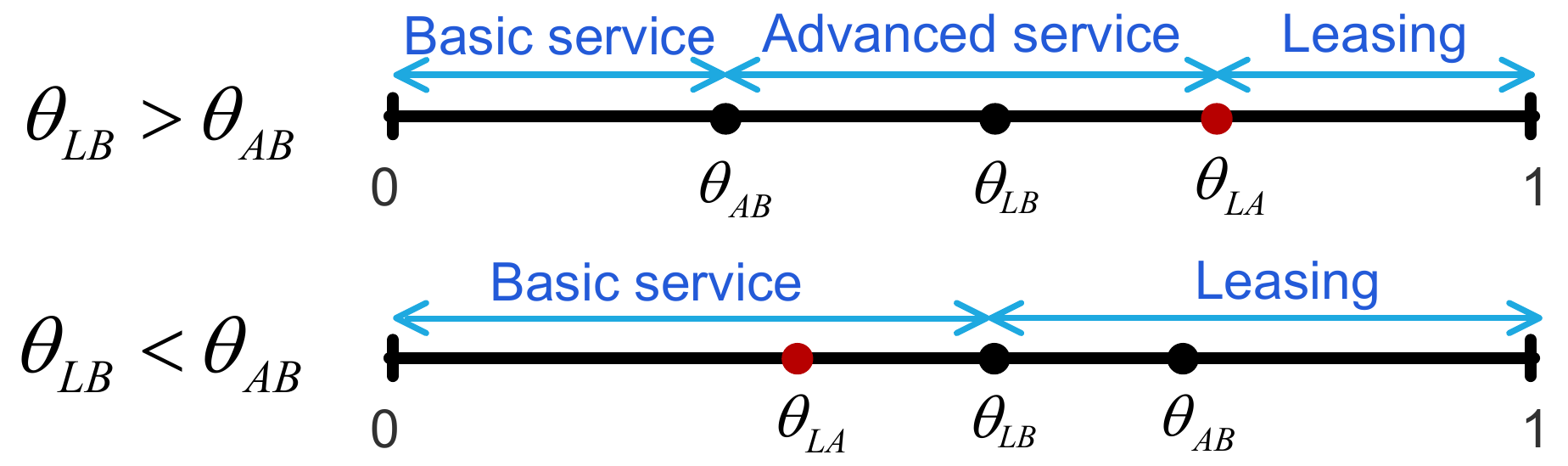}
%  \vspace{-3mm}
  \caption{Illustration of $\thlb$, $\thab$, and $\thla$.}\label{fig:threshold}
%  \vspace{-3mm}
\end{figure}

%Figure \ref{fig:threshold} illustrates the relationship of $\thlb$, $\thab$, and $\thla$.
%The upper subfigure illustrates the relationship when  $\thlb > \thab$.
%In this case, it is easy to check that
%%$$
%%\thla - \thlb = \frac{\pl(\RA-\RB) - \pa(\RL-\RB)}{(\RL-\RA)(\RL-\RB)} > 0 ,
%%$$
%%since $\RL > \RA $, $\RL > \RB $, and
%%$\pl (\RA-\RB) > \pa(\RL-\RB) $ as $\thlb > \thab$.
%%Hence, we have:
%$\thla > \thlb > \thab $.
%This subfigure illustrates that when  $\thlb > \thab$, (i)
%the best strategy of users with $\th \in [0, \thab)$ is to choose the basic service,
%(ii) the best strategy of users with  $\th \in (\thab, \thla)$ is
%to choose the advanced service,
%and (iii) the best strategy of users with  $\th \in (\thla, 1]$ is
%to choose  the leasing service.
%The lower subfigure illustrates the relationship when $\thlb < \thab$.
%We can similarly check that
%$\thla - \thlb < 0$, and hence $\thla < \thlb < \thab $.
%This subfigure shows that when  $\thlb < \thab$, (i)
% the best strategy of users with $\th \in [0, \thlb)$
% is to choose the basic service,
%(ii)  the best strategy of users with $\th \in (\thlb, 1]$
% is to choose the leasing service,
% and (iii) no user is willing to choose the advanced service.~~~~~~~~~~~~~

Figure \ref{fig:threshold} illustrates several possible relationships among $\thlb^{\t}$, $\thab^{\t}$, and $\thla^{\t}$.
Intuitively,
the users with low values of $\th$ are more willing to choose the free basic service.
The users with medium values of $\th$ are willing to choose the advanced service, in order to achieve a relatively large utility with a relatively low service cost.
The users with high values of $\th$ are more willing to choose the leasing service so that they can obtain a large utility.
Notice that we have $\thlb^{\t} < \thab^{\t}$ if the information price $\pa$ is high or the
information value $\RA(\Proba^{\t},\Probl^{\t})-\RB(\Probl^{\t})$ is low. In this case, users will not choose the advanced service, as shown in the bottom subfugure in Figure \ref{fig:threshold}.

%Figure \ref{fig:threshold} illustrates several possible relationships among $\thlb^{\t}$, $\thab^{\t}$, and $\thla^{\t}$.
%Intuitively, the users with high values of $\th$ are more willing to choose the leasing service in order to achieve a large utility.
%The users with low values of $\th$ are more willing to choose the basic service so that they will pay zero service cost.
%The users with medium values of $\th$ are willing to choose the advanced service, in order to achieve a relatively large utility with a relatively low service cost.
%Notice that when the information price $\pa$ is high or the information value $\RA(\Proba^{\t},\Probl^{\t})-\RB(\Probl^{\t})$ is low, we could have $\thlb^{\t} < \thab^{\t}$, in which case no users will choose the advanced service (which corresponds to the bottom subfigure in Figure \ref{fig:threshold}).

%\begin{figure*}
%\centering
%  \includegraphics[width=3.4in]{market_share_fix}
%  ~~
%  \includegraphics[width=3.4in]{market_share_fix2}
%  ~~
%  \vspace{-2mm}
%  \caption{Market share of (a) the \lh~, and (b) the \db~when $\RL > \RA$.}\label{fig:demand}
%  \vspace{-4mm}
%\end{figure*}

Next we characterize the market shares in slot $\t + 1$ due to users' best responses.
%(called the \emph{derived market shares}) resulting from the users' best choices mentioned above.
%Such derived market shares are important for analyzing
This can help us understand the user behaviour dynamics and market evolutions in the next subsection.
As we assume that $\th$ is uniformly distributed in $[0,1]$, the newly derived market shares $\{ \Probl^{\t+1},\Proba^{\t+1} \}$ in slot $\t+1$, given any market shares $\{\Probl^{\t}, \Proba^{\t} \}$ at the end of $\t$, are
%Recall that $\th$ is uniformly distributed in $[0,1]$.
%Then, given any market shares $\{\Probl^{\t}, \Proba^{\t} \}$ at the end of $\t$, the newly derived market shares $\{ \Probl^{\t+1},\Proba^{\t+1} \}$ in slot $\t+1$ are
\begin{itemize}
\item
If $\thlb^{\t} > \thab^{\t}$, then $\Probl^{\t+1} = 1 -  \thla^{\t}$ and $\Proba^{\t+1}  = \thla^{\t} - \thab^{\t}$;
\item
If $\thlb^{\t} \leq \thab^{\t}$, then $\Probl^{\t+1} = 1 - \thlb^{\t}$ and $\Proba^{\t+1}  = 0$.
\end{itemize}

We summarize this in Lemma \ref{lemma:market-share}.
%\vspace{-2mm}
\begin{lemma}\label{lemma:market-share}
Given any pair of market shares $( \Probl^{\t},\Proba^{\t} ) \in \Probset$ at the end of slot $\t$, the derived pair of market shares $( \Probl^{\t+1},\Proba^{\t+1})\in\Probset$ in slot $\t+1$ are given by
\begin{equation}\label{eq:user-prob-1}
\textstyle
\left\{
\begin{aligned}
\Probl^{\t+1}  & \textstyle
= \max \big\{ 1 - \max\{ \thla^{\t}, \thlb^{\t} \} ,\ 0  \big\},\\
\Proba^{\t+1}  & \textstyle
= \max \big\{ \min\{ \thla^{\t} ,1 \} - \thab^{\t} ,\ 0  \big\},
\end{aligned}
\right.
\end{equation}
where $\thlb^{\t}$, $\thab^{\t}$, and $\thla^{\t}$ are given in \eqref{eq:p-thres}.
\end{lemma}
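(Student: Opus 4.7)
The plan is to prove Lemma 1 by translating each user's best response \eqref{eq:utility_function} into a partition of the type space $[0,1]$ and then reading off the Lebesgue measures, using that $\th$ is uniform. The three payoff functions $U_B(\th) = \th \RB(\Probl^t)$, $U_A(\th) = \th \RA(\Proba^t,\Probl^t) - \pa$, and $U_L(\th) = \th \RL - \pl$ are all affine in $\th$ with slopes strictly ordered $\RB < \RA < \RL$ by Assumption \ref{assume:utility_relationship}. Their pairwise crossings occur exactly at $\thab$, $\thla$, $\thlb$ as given in \eqref{eq:p-thres}, and because the lower-slope line takes the higher intercept, the user with type below the crossing prefers the flatter line while the user above prefers the steeper one.

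First I would pin down the ordering of the three thresholds. From the algebraic identity $(U_L - U_B) = (U_L - U_A) + (U_A - U_B)$ evaluated at $\th = \thlb$, the sign of $U_A - U_B$ at $\th = \thlb$ (equivalently, whether $\thlb \gtrless \thab$) is opposite to that of $U_L - U_A$ at $\th = \thlb$ (equivalently, whether $\thlb \lessgtr \thla$). This yields exactly two cases: (i) $\thab < \thlb < \thla$, equivalent to $\thlb > \thab$; or (ii) $\thla < \thlb < \thab$, equivalent to $\thlb \le \thab$ (the equality case is degenerate and all three lines meet). This dichotomy is the key structural fact, and it matches the two bullets the authors state just before the lemma.

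Next, in case (i) the three piecewise-maximum intervals are $[0,\thab)$ for basic, $(\thab,\thla)$ for advanced, $(\thla,1]$ for leasing, so $\Proba^{t+1} = \thla - \thab$ and $\Probl^{t+1} = 1 - \thla$; in case (ii) the advanced line is dominated everywhere, so only $\thlb$ matters, giving $\Proba^{t+1} = 0$ and $\Probl^{t+1} = 1 - \thlb$. I would then check that the compact formula \eqref{eq:user-prob-1} uniformly encodes both cases: since $\max\{\thla,\thlb\} = \thla$ in case (i) and $=\thlb$ in case (ii), the leasing expression collapses correctly; and since $\thla - \thab > 0$ in case (i) while $\thla - \thab < 0$ in case (ii), the outer $\max\{\cdot,0\}$ handles the collapse of the advanced mass to zero in case (ii).

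The last step, which will be the main obstacle to stating cleanly, is handling the boundary effects when a threshold leaves $[0,1]$. Specifically, if $\thla > 1$ (leasing too expensive for anyone, in case (i)) the leasing measure must be clipped to $0$ and advanced extends to $\th = 1$, which is why the formula uses $\min\{\thla,1\}$ inside $\Proba^{t+1}$ and an outer $\max\{\cdot,0\}$ inside $\Probl^{t+1}$; similarly $\thlb > 1$ in case (ii) forces $\Probl^{t+1} = 0$. I would verify these clippings case by case, and also note the corner $\thab \ge 1$ (no one buys advanced even though case (i) holds), in which the $\max\{\cdot,0\}$ on $\Proba^{t+1}$ again gives the right answer. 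Combining the two cases and the clippings yields precisely \eqref{eq:user-prob-1}, completing the proof.
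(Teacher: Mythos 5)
Your proposal is correct and follows essentially the same route as the paper's proof: establish the threshold-ordering dichotomy ($\thab < \thlb < \thla$ iff $\thlb > \thab$, else $\thla < \thlb < \thab$), partition $[0,1]$ accordingly, and read off the uniform measures; your affine-identity argument for the ordering is just a cleaner packaging of the paper's explicit sign computation of $\thla - \thlb$. Your explicit treatment of the boundary clippings ($\thla$ or $\thab$ outside $[0,1]$) is in fact more careful than the paper's own proof, which omits that verification.
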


%\com{refer to the equation number of the thresholds definitions. }
%\com{For any analytical results (Lemma/Proposition/Theorem), we need to provide the proof.
%If there is a page limit (often for conference or journal initial submission), we should at least provide a proof sketch, and refer readers to the online technical report for details.
%Please globally check. }

Under the assumption that all \eus~update the best responses once synchronously, we can get the results in Lemma \ref{lemma:market-share}.
The detailed proof can be found in Appendix \ref{lemma:market-share-proof}.
%Due to the stringent page limit of JSAC, we provide all the proofs in \cite{report}.
%The results in Lemma \ref{lemma:market-share} assume that all \eus~update the best responses once synchronously.
Since $\thlb^{\t}$, $\thab^{\t}$, and $\thla^{\t}$ are functions of market shares $\{\Probl^{\t}, \Proba^{\t}\}$, the derived market shares $\{\Probl^{\t+1} ,\Proba^{\t+1} \}$ in slot $\t+1$  are also functions of  $\{\Probl^{\t},\Proba^{\t} \}$, and hence can be written as  $\Probl^{\t+1}(\Probl^{\t}, \Proba^{\t})$ and $\Proba^{\t+1}(\Probl^{\t}, \Proba^{\t})$.

\subsection{{Market Dynamics and Equilibrium}}

When the market shares change,
%the information structure will also change according to (\ref{eq:known_inf}) and (\ref{eq:unknown_inf}).
the users' payoffs (on the advanced service and basic service) change accordingly, as both $\RA(\Proba^{\t},\Probl^{\t})$ and $\RB(\Probl^{\t})$ change.
As a result, users will update their best responses repeatedly, hence the market shares will evolve dynamically until reaching a {stable} point (called \emph{market equilibrium}).
In this subsection, we will study such a market dynamics and equilibrium, given fixed prices $\{ \pl, \pa \}$  (which are decided in Stage II).~~~~~~~~~~~~~~~~~~~

%For convenience, we introduce a virtual time-discrete system with slots $\t=1,2,\ldots$, where {\eu}s change their decisions at the beginning of every slot, based on the derived market shares in the previous slot.
%Let $( \Probl^{\t}, \Proba^{\t} )$ denote the market shares derived at the end of slot $t$ (which serve as the initial market shares for the next slot $t+1$).

Base on the analysis in Section \ref{sec:user-best}, let $( \Probl^{0}, \Proba^{0} ) \in \Probset$ denote the \emph{initial market shares} in slot $\t = 0$ and $( \Probl^{\t}, \Proba^{\t} ) \in \Probset$ denote the market shares derived at the end of slot $t$ (which serve as the initial market shares for the next slot $t+1$).
We further denote $\triangle \Probl  $ and $\triangle \Proba $ as the changes (dynamics) of market shares between two successive time slots, \emph{e.g.,} $\t$ and $\t+1$, that is,
%\begin{equation}\label{eq:user-prob-diff}
%\begin{aligned}
%\textstyle
%\triangle \Probl(\Probl^{\t}, \Proba^{\t}) &= \Probl^{\t+1} - \Probl^{\t }, ~
%\triangle \Proba(\Probl^{\t}, \Proba^{\t}) &= \Proba^{\t+1} - \Proba^{\t },
%\end{aligned}
%\end{equation}
\begin{equation}\label{eq:user-prob-diff}
\textstyle
\left\{
\begin{aligned}
\triangle \Probl(\Probl^{\t}, \Proba^{\t}) & \textstyle
= \Probl^{\t+1} - \Probl^{\t },\\
\triangle \Proba(\Probl^{\t}, \Proba^{\t}) & \textstyle
= \Proba^{\t+1} - \Proba^{\t },
\end{aligned}
\right.
\end{equation}
where $( \Probl^{\t+1}, \Proba^{\t+1} ) \in \Probset$ are the derived market share in slot $\t+1$, and can be computed by Lemma \ref{lemma:market-share}.

%A positive (negative) $\triangle \Probl$ (or $\triangle \Proba$) implies that the market share $\Prob$ (or $\Prob$) will increase (decrease) along the dynamics.
%Note that both $\triangle \Probl $ and $\triangle \Probl $ are the functions of $\Probl^{\t-1}$ and $\Proba^{\t-1}$. Hence, we can write $\triangle \Probl $ and $\triangle \Proba$ as  $\triangle \Probl(\Probl^{\t-1}, \Proba^{\t-1})$ and $\triangle \Proba(\Probl^{\t-1}, \Proba^{\t-1})$.
Obviously, if both $\triangle \Probl $ and $\triangle \Proba $ are zero in  a slot $\t+1$, \emph{i.e.,} $\Probl^{\t+1} = \Probl^{\t} $ and $\Proba^{\t+1} = \Proba^{\t} $, then users will no longer change their strategies in the future. This implies that the market achieves the \emph{market equilibrium}.
Formally,

%A \emph{market equilibrium} is defined as a fixed point of the market shares.
%In other words, if an equilibrium market share is achieved, it will not change any more. Formally,
%\vspace{-2mm}
\begin{definition}[Market Equilibrium]\label{def:stable-pt}
A pair of market shares $\BProb^{*} = \{ \Probl^{*}, \Proba^{*} \}\in\Probset$ is a market equilibrium, if and only if
\begin{equation} \label{eq:market_equilibrium}
\textstyle
\triangle \Probl (\Probl^*, \Proba^*) = 0, \mbox{~~~and~~~}
\triangle \Proba (\Probl^*, \Proba^*) = 0.
\end{equation}
\end{definition}

Next, we study the existence and uniqueness of the market equilibrium, and further characterize the market equilibrium analytically.
These results will help us analyze the price competition game in Stage II (Section \ref{sec:layer2}).
%\vspace{-2mm}
\begin{proposition}[Existence and Uniqueness]\label{lemma:uniqueness-eq_pt}
Given any feasible price pair $( \pl, \pa)$,
there exists at least one market equilibrium $(\Probl^*, \Proba^*) \in \Probset$. Furthermore, the market equilibrium is unique and the market dynamics globally converges to it starting from any initial point $\{ \Probl^{0} ,\Proba^{0} \}\in\Probset$ if
\begin{equation}\label{eq:stable_condition}
%%\max_{\Probl, \Proba \in [0,1]}
%%\textstyle
%%\max_{\Proba \in [0,1]} \frac{ \gy^{\prime}(\Proba) }{ \gy(\Proba)  } \cdot \frac{ \RL - \RB(\Probl, \Proba)  }{ \RL - \RA(\Probl, \Proba) }  \leq 1,\quad \forall \Probl, \Proba \in [0,1].
%\frac{ \gy^{\prime}(\Proba) }{ \gy(\Proba)  } \cdot \frac{ \RL - \RB(\Probl) }{ \RL - \RA(\Proba, \Probl) }  \leq 1,~\forall (\Probl, \Proba)\in\Probset,
\max_{(\Probl, \Proba)\in\Probset} \frac{ \gy^{\prime}(\Proba) }{ \gy(\Proba)  } \cdot \frac{ \RL - \RB(\Probl) }{ \RL - \RA(\Proba, \Probl) }  \leq \frac{1}{\kappa},
\end{equation}
%\com{What is the range of the variables? }
where $\kappa = \max_{(\Probl, \Proba)\in\Probset} \{ \frac{\pl - \pa}{ \RL - \RA(\Proba, \Probl)} , \frac{\pa}{\RA(\Proba, \Probl) - \RB(\Probl)} \} $, and $\gy^{\prime}(\Proba)$ is the first-order derivative of $\gy(\cdot)$ with respect to $\Proba$.
\end{proposition}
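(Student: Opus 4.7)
The plan is to prove existence via Brouwer's fixed-point theorem and to obtain uniqueness together with global convergence via Banach's fixed-point theorem, with condition \eqref{eq:stable_condition} supplying the contraction bound.

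First I would observe that Lemma~\ref{lemma:market-share} defines a self-map $T\colon \Probset\to\Probset$ by $T(\Probl^{t},\Proba^{t})=(\Probl^{t+1},\Proba^{t+1})$. The threshold quantities $\thlb^{t},\thab^{t},\thla^{t}$ in \eqref{eq:p-thres} are continuous on $\Probset$: by Assumptions~\ref{assum:congestion}--\ref{assum:positive} the functions $\fx$ and $\gy$ are $C^{1}$, and Assumption~\ref{assume:utility_relationship} keeps every denominator strictly positive. The outer $\max\{\cdot,0\}$ and $\min\{\cdot,1\}$ preserve continuity, so $T$ is continuous. Since $\Probset$ is nonempty, compact, and convex in $\mathbb{R}^{2}$, Brouwer's theorem yields a fixed point of $T$, which is a market equilibrium in the sense of Definition~\ref{def:stable-pt}. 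This proves the first assertion for every feasible $(\pl,\pa)$, regardless of \eqref{eq:stable_condition}.

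For uniqueness and convergence, I would compute the Jacobian of $T$ on the open interior region where the outer clipping is inactive. Using the identity $\RL-\RB=(\RL-\RA)+\gy(\Proba)$ (immediate from \eqref{eq:basic_define}--\eqref{eq:advanced_define}) and the rewrites $\thla=(\pl-\pa)/(\RL-\RA)$, $\thab=\pa/\gy(\Proba)$, the dominant entry takes the clean form
\begin{equation*}
\left|\frac{\partial \Proba^{t+1}}{\partial \Proba^{t}}\right|
=\gy^{\prime}(\Proba)\!\left[\frac{\thla}{\RL-\RA}+\frac{\thab}{\gy(\Proba)}\right]
\leq \kappa\,\frac{\gy^{\prime}(\Proba)}{\gy(\Proba)}\,\frac{\RL-\RB}{\RL-\RA},
\end{equation*}
which condition \eqref{eq:stable_condition} bounds by $1$; strictness (giving a uniform Lipschitz constant $c<1$) is inherited from the fact that the supremum in \eqref{eq:stable_condition} is attained on the compact $\Probset$. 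The remaining three entries carry at most one of the factors $\thla,\thab$ and hence admit analogous bounds after invoking Assumption~\ref{assum:congestion} to extract a uniform bound on $|\fx^{\prime}|$ over $\Probset$. Choosing an appropriately weighted $\ell_{\infty}$ norm on $\mathbb{R}^{2}$ then gives $\|T(x)-T(y)\|\leq c\|x-y\|$ on the differentiable region, and the $1$-Lipschitz clipping operators in Lemma~\ref{lemma:market-share} preserve this bound on all of $\Probset$. Banach's theorem then delivers both the uniqueness of the equilibrium and geometric convergence of $(\Probl^{t},\Proba^{t})$ to $(\Probl^{*},\Proba^{*})$ from any initial $(\Probl^{0},\Proba^{0})\in\Probset$.

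The delicate step I anticipate is the bookkeeping for the off-diagonal entries. The partial derivatives carry mixed signs --- $\partial\Probl^{t+1}/\partial\Probl^{t}\leq 0$, $\partial\Probl^{t+1}/\partial\Proba^{t}\leq 0$, $\partial\Proba^{t+1}/\partial\Probl^{t}\geq 0$, $\partial\Proba^{t+1}/\partial\Proba^{t}\geq 0$ --- so the row and column sums of $|J_{T}|$ are not all dominated in a single sweep by the scalar appearing in \eqref{eq:stable_condition}. The challenge is designing weights that simultaneously absorb the $-\fx^{\prime}$ factor driving the $\Probl$-row and the combined $\thla+\thab$ structure driving the $\Proba$-row, without inflating the bound beyond $\kappa\,(\gy^{\prime}/\gy)\,(\RL-\RB)/(\RL-\RA)$; this is also where the convexity of $\fx$ (Assumption~\ref{assum:congestion}) is expected to enter quantitatively, through a uniform control on $|\fx^{\prime}|$ on the compact domain $\Probset$.
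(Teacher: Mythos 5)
Your existence argument is correct and is genuinely a different route from the paper's: the paper does not invoke Brouwer, but instead reduces the fixed-point system to one-dimensional equations (the equation $\triangle \Probl(\Probl,\Proba)=0$ with $\Proba=0$ in the case $\thlb<\thab$, and a single-variable function $\triangle\hat{\Prob}_{\textsc{a}}(\Proba)$ obtained by eliminating $\Probl$ via $\Probl=1-\Proba-\pa/\gy(\Proba)$ in the case $\thlb>\thab$) and applies the intermediate value theorem after checking signs at the endpoints; uniqueness is then argued via strict monotonicity of these reduced one-dimensional maps. Your Brouwer argument avoids the case split and the elimination step, at the price of having to verify continuity of the clipped best-response map on $\Probset$, which you do correctly (the denominators stay positive by Assumption~\ref{assume:utility_relationship}).

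The convergence part is where the genuine gap sits, and you have correctly located it but not closed it. Your estimate for $\partial\Proba^{t+1}/\partial\Proba^{t}$ is exactly the paper's key computation (via $\RL-\RB=(\RL-\RA)+\gy(\Proba)$). However, the Jacobian entries generated by the dependence of $\thla$ on $\Probl$, namely $-(\pl-\pa)\fx^{\prime}(1-\Probl)/(\RL-\RA)^{2}$, are \emph{not} controlled by condition \eqref{eq:stable_condition}, which contains no information about $\fx^{\prime}$ at all. Hence no choice of weights in a weighted $\ell_{\infty}$ norm can absorb them in general: for $|\fx^{\prime}|$ large the map is expansive in the $\Probl$ direction while \eqref{eq:stable_condition} still holds. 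The ``bookkeeping'' you defer is therefore the crux of the proof, not a technicality, and your proposal does not supply the missing idea. For reference, the paper's own proof takes the same contraction route with the \emph{unweighted} sup norm: it orders the two points componentwise, argues that the sup-norm of the image difference is attained at the $h_{2}=\thla-\thab$ component, and then discards the $\fx^{\prime}$ term from the mean-value expansion — a step that, as written, drops a non-negative term from an intended upper bound, so the difficulty you flag is real and is not cleanly resolved there either. A further shared weakness: \eqref{eq:stable_condition} only delivers a Lipschitz modulus $\kappa_{d}\le 1$, whereas Banach's theorem needs $\kappa_{d}<1$; attainment of the maximum on the compact set $\Probset$ does not upgrade ``$\le$'' to ``$<$'', so strictness must either be assumed in \eqref{eq:stable_condition} or obtained by a separate argument.
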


We prove the convergence of this market dynamics based on the contraction mapping theorem \cite{bertsekas1989parallel},
%with the detailed proof in \cite{report}.
with the detailed proof in Appendix \ref{lemma:uniqueness-eq_pt-proof}.
A practical implication of condition (\ref{eq:stable_condition}) is that the existence of a unique equilibrium requires the information value $\gy(\Proba)$ (which corresponds to positive network externality) increases slowly with $\Proba$.
Note that the condition (\ref{eq:stable_condition}) is sufficient but not necessary for the uniqueness. Our numerical simulations show that the market converges to a unique equilibrium for a wide range of prices, even when the condition (\ref{eq:stable_condition}) is violated.
Nevertheless, the sufficient condition in (\ref{eq:stable_condition}) suggests that there could be multiple equilibrium points if the impact of positive network externality is significant.

Suppose the uniqueness condition (\ref{eq:stable_condition})  is satisfied. We characterize the unique equilibrium by the following theorem.
%\vspace{-1mm}
\begin{theorem}[Market Equilibrium]\label{thrm:stable-eq_pt}
Suppose the uniqueness condition (\ref{eq:stable_condition}) holds.
Then, for any feasible price pair $( \pl, \pa)$,
%the unique market equilibrium is given by
\begin{itemize}
\item[(a)]
If $ \left.\thlb(\Probl, \Proba)\right|_{\Probl = 0} \leq \left.\thab(\Probl, \Proba)\right|_{\Proba = 0}$, there is a unique market equilibrium  $\BProb^{\dag} = \{ \Probl^{\dag}, \Proba^{\dag} \}$ satisfies
%    $\Prob^*_1 = 0$, and  $\Prob^*_0$ satisfies
    \begin{equation}\label{eq:NE-pt-1}
%   \left\{
%      \begin{aligned}
 \textstyle  \Probl^{\dag} = 1 - \thlb(\Probl^{\dag},\Proba^{\dag}) , \mbox{~~~and~~~}
  \Proba^{\dag} = 0;
%       \end{aligned}
%    \right.
      %\textstyle  1 - \pth_{02}(\Prob^*_0) - \Prob^*_0 = 0;
    \end{equation}

\item[(b)]
If $ \left.\thlb(\Probl, \Proba)\right|_{\Probl = 0} > \left.\thab(\Probl, \Proba)\right|_{\Proba = 0}$, there is a unique market equilibrium  $\BProb^{*} = \{ \Probl^{*}, \Proba^{*} \}$ satisfies
    \begin{equation}\label{eq:NE-pt-22}
    \left\{
      \begin{aligned}
      &\textstyle  \Probl^{*} = 1 - \thla(\Probl^{*}, \Proba^{*}) , \\
      &\textstyle  \Proba^{*} = \thla(\Probl^{*}, \Proba^{*}) - \thab(\Probl^{*}, \Proba^{*}).
       \end{aligned}
    \right.
%\textstyle  \Probl^{*} = 1 - \thla(\Probl^{*}, \Proba^{*}), \mbox{~~~and~~~}
%\textstyle  \Proba^{*} = \thla(\Probl^{*}, \Proba^{*}) - \thab(\Probl^{*}, \Proba^{*}).
    \end{equation}
\end{itemize}
\end{theorem}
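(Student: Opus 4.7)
The plan is to derive the equilibrium characterization by combining Lemma \ref{lemma:market-share} (the best-response update) with the equilibrium condition of Definition \ref{def:stable-pt}, and then split into the two cases according to which sub-case of Lemma \ref{lemma:market-share} is active at the fixed point. The first step I would take is to record a key monotonicity: since $\RA(\Proba,\Probl)-\RB(\Probl)=\gy(\Proba)$, the threshold $\thab(\Probl,\Proba)=\pa/\gy(\Proba)$ depends only on $\Proba$ and is non-increasing in $\Proba$ (Assumption \ref{assum:positive}), while $\thlb(\Probl,\Proba)=\pl/(\RL-\fx(1-\Probl))$ depends only on $\Probl$ and is non-decreasing in $\Probl$ (Assumption \ref{assum:congestion}). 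In particular, $\left.\thlb\right|_{\Probl=0}$ is the global minimum of $\thlb$ on $\Probset$ and $\left.\thab\right|_{\Proba=0}$ is its global maximum, so the case (b) hypothesis controls the entire feasible region while the case (a) hypothesis does not.

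For case (b), $\left.\thlb\right|_{\Probl=0}>\left.\thab\right|_{\Proba=0}$, the monotonicity above immediately gives $\thlb(\Probl,\Proba)\geq\left.\thlb\right|_{\Probl=0}>\left.\thab\right|_{\Proba=0}\geq\thab(\Probl,\Proba)$ for every $(\Probl,\Proba)\in\Probset$. Hence the ``$\thlb^{\t}>\thab^{\t}$'' branch of Lemma \ref{lemma:market-share} is active throughout, and the update rule reduces to $\Probl^{\t+1}=1-\thla^{\t}$, $\Proba^{\t+1}=\thla^{\t}-\thab^{\t}$. Imposing $(\Probl^{\t+1},\Proba^{\t+1})=(\Probl^*,\Proba^*)$ directly produces \eqref{eq:NE-pt-22}, and uniqueness of the solution is inherited from Proposition \ref{lemma:uniqueness-eq_pt}.

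For case (a), $\left.\thlb\right|_{\Probl=0}\leq\left.\thab\right|_{\Proba=0}$, I would argue by ruling out a three-tier equilibrium. Suppose for contradiction the unique equilibrium has $\Proba^{*}>0$; then Lemma \ref{lemma:market-share} forces the ``$\thlb^{\t}>\thab^{\t}$'' branch, yielding $\Probl^{*}=1-\thla^{*}$, $\Proba^{*}=\thla^{*}-\thab^{*}$, and $\thla^{*}>\thab^{*}$. A short algebraic check (clearing denominators in \eqref{eq:p-thres}) reduces $\thla^{*}>\thab^{*}$ to the identity $\thlb^{*}>\thab^{*}$, both being equivalent to $\pl(\RA-\RB)>\pa(\RL-\RB)$. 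I would then substitute $\thab^{*}=1-\Probl^{*}-\Proba^{*}$ (obtained by adding the two fixed-point equations) and combine it with the monotonicities of $\fx,\gy$ to contradict $\left.\thlb\right|_{\Probl=0}\leq\left.\thab\right|_{\Proba=0}$. Once $\Proba^{*}>0$ is ruled out, the equilibrium must lie in the ``$\thlb\leq\thab$'' branch, yielding $\Proba^{\dag}=0$ and $\Probl^{\dag}=1-\thlb(\Probl^{\dag},0)$; existence of this fixed point follows from the intermediate value theorem applied to the continuous, non-increasing map $\Probl\mapsto 1-\thlb(\Probl,0)$ on $[0,1]$, and uniqueness from Proposition \ref{lemma:uniqueness-eq_pt}.

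The main obstacle will be executing the contradiction step in case (a). Because $\thlb$ grows in $\Probl$ while $\thab$ shrinks in $\Proba$, the comparison of $\thlb^{*}$ and $\thab^{*}$ at an interior equilibrium cannot be settled by monotonicity relative to the boundary values alone, since both monotonicities pull in the ``wrong'' direction. The crux will therefore be to exploit the explicit equilibrium identity $\thab^{*}=\Probb^{*}=1-\Probl^{*}-\Proba^{*}$ together with the convexity of $\fx$ and concavity of $\gy$ to convert the case (a) inequality $\pl\,\gy(0)\leq\pa(\RL-\fx(1))$ into the required inequality $\thlb^{*}\leq\thab^{*}$, thereby closing the contradiction.
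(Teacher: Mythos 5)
Your treatment of case (b) is correct and is essentially the paper's own argument done carefully: since $\thlb$ depends only on $\Probl$ and is minimized at $\Probl=0$, while $\thab$ depends only on $\Proba$ and is maximized at $\Proba=0$, the hypothesis $\left.\thlb\right|_{\Probl=0}>\left.\thab\right|_{\Proba=0}$ forces $\thlb>\thab$ on all of $\Probset$, so the $\thlb>\thab$ branch of Lemma \ref{lemma:market-share} is active everywhere, the fixed-point condition reduces to \eqref{eq:NE-pt-22}, and uniqueness follows from Proposition \ref{lemma:uniqueness-eq_pt}. (The paper's appendix writes the sandwich $\thlb\leq\left.\thlb\right|_{\Probl=0}\leq\left.\thab\right|_{\Proba=0}\leq\thab$ for case (a); with the monotonicity directions you correctly derive, such a sandwich only closes in case (b) --- exactly the asymmetry you flag.)

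The gap is in case (a), and it is not merely a matter of ``executing'' the step you defer: the contradiction you aim for cannot be reached with the tools you list. At a putative interior equilibrium with $\Proba^{*}>0$ you have $\pl\,\gy(\Proba^{*})>\pa\,(\RL-\fx(1-\Probl^{*}))$, while the case (a) hypothesis gives $\pl\,\gy(0)\leq\pa\,(\RL-\fx(1))$. Multiplying these yields only $\gy(\Proba^{*})/\gy(0)>\bigl(\RL-\fx(1-\Probl^{*})\bigr)/\bigl(\RL-\fx(1)\bigr)$, and since $\gy$ is non-decreasing and $\fx$ is non-increasing the left side is $\geq 1$ while the right side is $\leq 1$: the two inequalities are mutually consistent, so no contradiction can follow from monotonicity, convexity of $\fx$, concavity of $\gy$, or the identity $\thab^{*}=1-\Probl^{*}-\Proba^{*}$ alone. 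Any argument for case (a) must invoke the quantitative uniqueness condition \eqref{eq:stable_condition} --- the bound on $\gy'/\gy$ is what prevents $\gy(\Proba^{*})/\gy(0)$ from growing enough to sustain an interior fixed point --- and your sketch never brings it in. Relatedly, even verifying that the candidate in \eqref{eq:NE-pt-1} is a fixed point of \eqref{eq:user-prob-1} requires $\thlb(\Probl^{\dag},0)\leq\pa/\gy(0)$, which is strictly stronger than the stated hypothesis $\left.\thlb\right|_{\Probl=0}\leq\pa/\gy(0)$ because $\thlb$ increases in $\Probl$. So case (a) remains open in your proposal.
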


\begin{proof}
First, we obtain the derived market shares by substituting the market shares given in (\ref{eq:NE-pt-1}) or (\ref{eq:NE-pt-22}) into (\ref{eq:user-prob-1}).
Then, we can check that the above derived market shares satisfy the equilibrium condition (\ref{eq:market_equilibrium}).
%For the detailed proof, please refer to \cite{report}.
For the detailed proof, please refer to Appendix \ref{thrm:stable-eq_pt-proof}.
\end{proof}

A practical implication of Theorem 1 is that the information price $\pa$ should not be too high or the information value (i.e., $\RA-\RB$) should be large enough; otherwise, no users will choose the advance service at the equilibrium.
%The proof sketch is as follows. By utilizing a similar proof method in Proposition \ref{lemma:existence-eq_pt} and Proposition \ref{lemma:uniqueness-eq_pt}, we can get the conclusion according to Lemma \ref{lemma:market-share}.

%Suppose $\BProb^{*} = \{ \Probl^{*}, \Proba^{*} \}$ is the unique market equilibrium under prices $\{\pl, \pa\}$, then we can write the unique market equilibrium $\BProb^{*} = \{ \Probl^{*}, \Proba^{*} \}$
%as functions of $\{\pl, \pa\}$.

%Finally, we provide some useful properties for the market equilibrium,  regarding the changing of prices $\{\pa, \pl\}$, which are useful for analyzing the price competition game in Layer II  (Section \ref{sec:layer2}).
%%\begin{property}
%Suppose $\BProb^{*} = \{ \Probl^{*}, \Proba^{*} \}$ is the unique market equilibrium under prices $\{\pl, \pa\}$. Then, we have
%\begin{itemize}
%\item
%$\Probl^{*}\in[0,1] $ is decreasing in $\pl $, and increasing with $\pa$;
%\item
%$\Proba^{*}\in[0,1] $ is decreasing in $\pa $, and increasing with $\pl$.
%\end{itemize}
%%\end{property}
%Hence, we can write the unique market equilibrium $\BProb^{*} = \{ \Probl^{*}, \Proba^{*} \}$
%as functions of $\{\pl, \pa\}$.

%\input{Section_DB_pricing}
%!TEX root = main_hybrid_market.tex
%SourceDoc main_hybrid_market.tex

%\vspace{-2mm}
\section{Stage II -- Price Competition Game Equilibrium}\label{sec:layer2}

In this section, we study the price competition between the database and the spectrum licensee in Stage II, given the commission negotiation solution in Stage I. Based on the analysis of Stage III in Section IV, the database and spectrum licensee are able to predict the user behavior and market equilibrium in Stage III  when making their pricing decisions.
We will analyze the  pricing equilibrium under both the revenue sharing scheme (RSS) and the wholesale price scheme (WPS).

%We first define the Price Competition Game (PCG) explicitly.
%%\begin{game}[Price Competition Game (PCG)]
%%\begin{itemize}
%%\item
%%\emph{Players:} The database and the spectrum licensee;
%%\item
%%\emph{Strategies:} The database's strategy is the nonnegative information price  $\pa$, and the licensee's strategy is the nonnegative channel price  $\pl$;
%%\item
%%\emph{Payoffs:} The payoffs of players are defined in (\ref{eq:u1}) under the RSS scheme, and (\ref{eq:u2}) under the WPS scheme.
%%\end{itemize}
%%\end{game}

\begin{definition}[Price Competition Game]
	The Price Competition Game (PCG) is defined as follows.
	\begin{itemize}
		\item
		\emph{Players:} The database and the spectrum licensee;
		\item
		\emph{Strategies:} Information price  $\pa \geq 0$ for the database, and $\pl \geq 0$ for the licensee;
		\item
		\emph{Payoffs:}  Payoff is defined in (\ref{eq:u1}) under RSS, and in (\ref{eq:u2}) under WPS.
	\end{itemize}
\end{definition}

For the rest of this section, we assume that condition (8) holds. Then, we write the unique market equilibrium $\BProb^{*} = \{ \Probl^{*}, \Proba^{*} \}$ in Stage III as functions of prices $(\pl, \pa)$, \emph{i.e.,} $\Probl^{*} (\pl, \pa)$ and $\Proba^{*} (\pl, \pa)$.
Intuitively, we can interpret $\Probl^{*}$ and $\Proba^{*}$ as the \emph{demand} functions of the licensee and the \db, respectively.

%Then we consider two kinds of competitive settings, namely revenue-sharing market and wholesale-price-based market.
%In what follows, we first consider the revenue-sharing market, followed by the wholesale-price-based market.
%\vspace{-3mm}
\subsection{Revenue Sharing Scheme --- RSS}\label{sec:layer2-rss}

We fist study the game equilibrium under RSS, where the licensee shares a fixed percentage $\delta \in [0, 1]$ of revenue with the database.
By (\ref{eq:u1}), the payoffs of the \lh~and the \db~can be written as:
\begin{equation}\label{eq:sl-profit-dynamic-rv}
\left\{
\begin{aligned}
\Uslrs(\pl , \pa) & = ( \pl - \cl) \cdot \Probl^{*}(\pl, \pa) \cdot (1 - \delta),
\\
%\end{aligned}
%\end{equation}
%\begin{equation}\label{eq:db-profit-dynamic-rv}
% \begin{aligned}
\Udbrs(\pl , \pa) &= ( \pa - \ca )  \Proba^{*}(\pl, \pa) + ( \pl  - \cl)  \Probl^{*}(\pl, \pa)  \delta.
\end{aligned}
\right.
\end{equation}

%Then, the price competition game between the database and the licensee can be written as $\Gamma_{\textsc{(i)}} = ( \mathcal{J}, \{\p_j\}_{j\in\mathcal{J}},  \{\U_j\}_{j \in \mathcal{J}} )$, where
%%$\Gamma_{\textsc{(i)}} = ( \{ \lh, \db \}, \{ \pl, \pa \}, \{ \Uslrs, \Udbrs \})$, where
%\begin{itemize}
%\item $\mathcal{J} = \{ \lh, \db \}$ is the set of game players;
%\item $\{\p_j\}_{j\in\mathcal{J}}$ is the strategy of player $j$: $\pl$ is the strategy of the \lh~while $\pa$ is the strategy of the \db;
%\item $\{\U_j\}_{j \in \mathcal{J}}$ is the payoff function of player $j$: $\Uslrs$ is the payoff of spectrum licensee defined in (\ref{eq:sl-profit-dynamic-rv}) and $\Udbrs$ is the payoff of the database defined in (\ref{eq:db-profit-dynamic-rv}).
%\end{itemize}
%%\com{Yuan: I am not quite sure about what does "Define in a Game environment."?}

%\com{Can globally replace "Price Nash Equilibrium" by "Price Equilibrium".}
\begin{definition}[Price Equilibrium]\label{def:nash}
A price pair  $( \pl^{*}, \pa^{*} )$ is a Nash equilibrium, if
%\com{if and only if? }
\begin{equation}\label{eq:db-price-dynamic}
\left\{
\begin{aligned}
\textstyle\pl^{*} & = \arg \max_{\pl \geq 0}\ \Uslrs(\pl , \pa^{*}),
\\
\textstyle\pa^{*} & = \arg \max_{\pa \geq 0}\ \Udbrs(\pl^{*} , \pa).
\end{aligned}
\right.
%\textstyle\pl^{*}  = \arg \max_{\pl \geq 0}\ \Uslrs(\pl , \pa^{*}),~~\mbox{and}~~
%\textstyle\pa^{*} = \arg \max_{\pa \geq 0}\ \Udbrs(\pl^{*} , \pa).
\end{equation}
\end{definition}

%Directly solving the Nash equilibrium can be challenging, mainly due to the difficulty in analytically characterizing the market equilibrium $\{\Probl^{*}(\pl, \pa),\Proba^{*} (\pl, \pa) \}$ under a particular price pair $\{\pl, \pa\}$.

It is difficult to analytically characterize the market equilibrium $\{\Probl^{*}(\pl, \pa),\Proba^{*} (\pl, \pa) \}$ under a particular price pair $\{\pl, \pa\}$.
% It is challenging to solve the Nash equilibrium.
We tackle this challenge by transforming the original price competition game (PCG) into an
equivalent \emph{market~share~competition game} (MSCG).
In such a case, the market shares are the strategies of the database and the licensee, while the prices are the functions of the market shares.

%Hence we can view the market shares as the strategies of the database and the licensee, and the prices as functions of the market shares.

A key observation of such a transformation is that, under the uniqueness condition (\ref{eq:stable_condition}), there is a \emph{one-to-one} correspondence between the market equilibrium $\{\Probl^{*}, \Proba^* \}$ and the prices  $\{\pl, \pa\}$.
Because of this, once the \lh~and the \db~choose the prices $\{\pl, \pa\}$, they have equivalently chosen the market shares $\{\Probl^{*}, \Proba^* \}$.
Substitute $\thla=\frac{\pl-\pa}{\RL - \RA}$ and $\thab=\frac{ \pa}{ \RA- \RB }$ into (\ref{eq:NE-pt-22}), we can derive the inverse function of (\ref{eq:NE-pt-22}), where prices are functions of market shares defined on $\Probset$, \emph{i.e.,}\footnote{Note that we omit the trivial case in (\ref{eq:NE-pt-1}),
%which can be a special case of \ref{eq:price-market-share-rs} by setting $\Proba = 0$.
where the database has a zero market share in terms of the advanced service.
In this case, the licensee can determine the market share splitting (between leasing service and basic service from the database) by optimizing his leasing price $\pl$.
%Thus, the licensee can optimize its price independently.
%%and the database's price is meaningless (due to the zero market share).
%We provide the detailed discussions in the appendix.
%Notice that in practice, the database will compare the payoff  under both case \eqref{eq:NE-pt-1} and case \eqref{eq:NE-pt-22}, and choose the best one.}
}
%	 as this will not happen at the pricing equilibrium of Stage II. \com{Based on yesterday discussion, will the database always have a positive market share under $L>A>B$?}}
\begin{equation}\label{eq:price-market-share-rs}
\textstyle
\left\{
  \begin{aligned}
  \textstyle  \pl(\Probl , \Proba )   = & ( 1 - \Probl ) \cdot \left( \RL - \fx(1-\Probl) - \gy(\Proba) \right)   + ( 1 - \Probl - \Proba )\cdot \gy(\Proba) , \\
 \textstyle  \pa (\Probl , \Proba ) =& ( 1 - \Probl - \Proba )\cdot \gy(\Proba).
   \end{aligned}
   \right.
\end{equation}
The payoffs of the database and the licensee can be rewritten as:
%Accordingly, the payoffs of two players can be written as:
\begin{equation}\label{eq:sl-profit-dynamic-rv-xx}
\left\{
\begin{aligned}
\textstyle
\Urslrs(\Probl , \Proba) & = ( \pl(\Probl, \Proba) - \cl )  \cdot \Probl \cdot (1 - \delta) ,
\\
\textstyle
\Urdbrs(\Probl , \Proba) & = ( \pa(\Probl , \Proba) -\ca ) \Proba + ( \pl (\Probl , \Proba) - \cl)  \Probl  \delta.
\end{aligned}
\right.
\end{equation}

%Hence, we define the equivalent {Market Share Competition Game}---MSCG as follows. \com{Use Game environment, and use bullets to define players, strategies, and payoffs. }
\begin{definition}[Market Share Competition Game]
	The equivalent Market Share Competition Game (MSCG) is defined as follows.
	\begin{itemize}
		\item
		\emph{Players:} The database and the spectrum licensee;
		\item
		\emph{Strategies:} Market share  $\Proba$ for the database, and $\Probl$ for the licensee, where $(\Probl, \Proba) \in \Probset$;
		\item
		\emph{Payoffs:} Payoffs are defined in (\ref{eq:sl-profit-dynamic-rv-xx}).
	\end{itemize}
\end{definition}

%where the strategy of each player is its market share (\emph{i.e.,} $\Probl$ for the licensee and $\Proba$ for the database), and the prices $\{\pl, \pa\}$ are functions of the market shares $\{\Probl , \Proba  \}$.

\begin{definition}[Market Share Equilibrium]\label{def:nash_mscg}
Market shares $(\Probl^*, \Proba^*)\in\Probset$ is a Market Share Equilibrium if
%\com{if and only if? }
\begin{equation}\label{eq:db-price-dynamic}
%\left\{
%\begin{aligned}
%\textstyle \Probl^*& = \arg \max_{\Probl} \Uslrs(\Probl , \Proba^*),
%\\
%\textstyle \Proba^* &= \arg \max_{\Proba} \Udbrs(\Probl^*, \Proba).
%\end{aligned}
%\right.
\textstyle \Probl^* = \arg \max_{\Probl} \Uslrs(\Probl , \Proba^*),
~~\text{and}~~\Proba^* = \arg \max_{\Proba} \Udbrs(\Probl^*, \Proba).
%\textstyle \Probl^* = \arg \max_{\Probl} \Uslrs(\Probl , \Proba^*),~~\mbox{and}~~
%\textstyle \Proba^* = \arg \max_{\Proba} \Udbrs(\Probl^*, \Proba).
\end{equation}
\end{definition}

%\rev{Similarly, a pair of market shares $(\Probl^*, \Proba^*)$ is called a Nash equilibrium of MSCG, if $\Probl^* = \arg \max_{\Probl} \Uslrs(\Probl , \Proba^*) $ and $\Proba^* = \arg \max_{\Proba} \Udbrs(\Probl^*, \Proba) $.}
%\com{Use Definition to define "Market Share Nash Equilibrium".  if and only if? }

%
%
%
%the Thus, we transform the original price competition game into the following market share competition game, where
%$$
%\textstyle
%\Urslrs(\Probl , \Proba) = \pl^*(\Probl, \Proba) \cdot \Probl \cdot (1 - \delta)
%$$
%and
%$$
%\textstyle
%\Urdbrs(\Probl , \Proba) = \pa^*(\Probl , \Proba) \cdot \Probl + \pl^*(\Probl , \Proba) \cdot \Probl \cdot \delta
%$$
%and $\pl^* (\Probl , \Proba)$ and $\pa^* (\Probl , \Proba)$ is given by the following equation.
%%\begin{proposition}
%%Suppose $\Prob^{0}_1 < \Prob^{0}_2$. The information prices that leads to a market equilibrium $\{\Prob_1, \Prob_2\}$ are
%\begin{equation}\label{eq:price-market-share-rs}
%\left\{
%  \begin{aligned}
%  &\textstyle  \pl^* = [1 - \Probl ]\cdot [ \RL - \fx(\Probl) - \gy(\Proba) ]  + ( 1 - \Probl - \Proba )\cdot \gy(\Probl) \\
%  &\textstyle  \pa^* = ( 1 - \Probl - \Proba )\cdot \gy(\Proba)
%   \end{aligned}
%\right.
%\end{equation}
%The solution in the case of $\Prob^{0}_1 > \Prob^{0}_2$ is symmetric.
%\end{proposition}
We first show that the equivalence between the original PCG and the above MSCG.
%\vspace{-2mm}
\begin{proposition}[Equivalent Games]\label{lemma:game_tranform}
If $\{\Probl^{*}, \Proba^*\}$ is a Market Share Equilibrium of MSCG, then $ \{\pl^*, \pa^*\}$ given by (\ref{eq:price-market-share-rs}) is a Price Equilibrium of PCG.
%If $ \{\pl^*, \pa^*\}$ is a Nash equilibrium of PCG, then $\{\Probl^{*}, \Proba^*\}$ given by (\ref{eq:NE-pt-22}) is a Nash equilibrium of MSCG.
\end{proposition}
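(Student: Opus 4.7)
My plan is to leverage the one-to-one correspondence between prices $(\pl,\pa)$ and market equilibria $(\Probl^*, \Proba^*)$ established in Theorem \ref{thrm:stable-eq_pt} under the uniqueness condition (\ref{eq:stable_condition}), combine it with a payoff-equivalence substitution, and then argue by contradiction.

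First, I would establish payoff equivalence. Let $\Phi(\pl,\pa) := (\Probl^*(\pl,\pa), \Proba^*(\pl,\pa))$ denote the equilibrium map induced by the market dynamics of Stage III. By Theorem \ref{thrm:stable-eq_pt}, $\Phi$ is a bijection onto (the interior of) $\Probset$ whose inverse is given exactly by the formulas in (\ref{eq:price-market-share-rs}). Direct substitution of the equilibrium market shares into the payoff expressions (\ref{eq:sl-profit-dynamic-rv}) and (\ref{eq:sl-profit-dynamic-rv-xx}) then yields
\begin{equation*}
\Uslrs(\pl,\pa) = \Urslrs\bigl(\Phi(\pl,\pa)\bigr), \qquad \Udbrs(\pl,\pa) = \Urdbrs\bigl(\Phi(\pl,\pa)\bigr),
\end{equation*}
for every feasible price pair $(\pl,\pa)$.

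Next, I would set up the contradiction. Suppose $(\Probl^*, \Proba^*)$ is a Market Share Equilibrium of MSCG but the associated price pair $(\pl^*, \pa^*) := (\pl(\Probl^*,\Proba^*), \pa(\Probl^*,\Proba^*))$ fails to be a Price Equilibrium of PCG. Without loss of generality, assume the licensee has a profitable price deviation $\tilde{\pl}$ with $\Uslrs(\tilde\pl, \pa^*) > \Uslrs(\pl^*, \pa^*)$. Setting $(\tilde{\Probl}, \tilde{\Proba}) := \Phi(\tilde\pl, \pa^*)$, the bijection gives $\pa(\tilde{\Probl}, \tilde{\Proba}) = \pa^*$ and $\pl(\tilde{\Probl}, \tilde{\Proba}) = \tilde{\pl}$, while payoff equivalence yields $\Urslrs(\tilde{\Probl}, \tilde{\Proba}) > \Urslrs(\Probl^*, \Proba^*)$.

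The step I expect to be the principal obstacle is converting this into a direct contradiction with the MSCG equilibrium condition, which only forbids unilateral improvements in $\Probl$ while $\Proba$ is held at $\Proba^*$. The PCG deviation typically produces $\tilde{\Proba} \neq \Proba^*$, so the pair $(\tilde{\Probl}, \tilde{\Proba})$ is generally not a MSCG unilateral deviation. To close this gap I would exploit two structural facts: (i) the licensee's MSCG payoff $(\pl(\Probl,\Proba)-\cl)\Probl(1-\delta)$ depends on $\Proba$ only through the induced price $\pl(\Probl,\Proba)$; and (ii) the database's MSCG first-order optimality at $(\Probl^*, \Proba^*)$ ties the local variations of $\pa(\Probl,\Proba)$ and $\pl(\Probl,\Proba)$ together. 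Parameterizing the constraint curve $\{(\Probl,\Proba)\in\Probset : \pa(\Probl,\Proba) = \pa^*\}$ by $\Probl$ near $(\Probl^*,\Proba^*)$ and using an implicit-function/Jacobian argument that relates $\partial_{\pl}\Probl^*(\pl^*,\pa^*)$ to $\partial_{\Probl}\pl(\Probl^*,\Proba^*)$ through the inverse map (\ref{eq:price-market-share-rs}), I would show that a first-order improvement along the $\pa = \pa^*$ curve, after accounting for the database's first-order condition in MSCG, translates into a first-order improvement along the $\Proba = \Proba^*$ line -- contradicting the licensee's MSCG optimality and thereby establishing the proposition.
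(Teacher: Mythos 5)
You have correctly located the crux, which is more than the paper's own proof does: a unilateral deviation in PCG (changing $\pl$ with $\pa$ fixed) is not a unilateral deviation in MSCG (changing $\Probl$ with $\Proba$ fixed), because $\pa(\Probl,\Proba)=(1-\Probl-\Proba)\gy(\Proba)$ depends on $\Probl$ and, conversely, $\Proba^{*}(\pl,\pa)$ depends on $\pl$. The problem is that your plan for closing this mismatch is only a sketch, and it is exactly here that the entire content of the proposition lives. Concretely, using your own payoff identity $\Uslrs(\pl,\pa)=\Urslrs(\Phi(\pl,\pa))$ and the chain rule, the licensee's MSCG first-order condition $\partial_{\Probl}\Urslrs=0$ at $(\Probl^{*},\Proba^{*})$ leaves
\[
\frac{\partial \Uslrs}{\partial \pl}(\pl^{*},\pa^{*}) \;=\; \frac{\partial \Urslrs}{\partial \Proba}(\Probl^{*},\Proba^{*})\cdot \frac{\partial \Proba^{*}}{\partial \pl}(\pl^{*},\pa^{*}),
\]
and a direct computation from \eqref{eq:price-market-share-rs} gives $\partial_{\Proba}\Urslrs=(1-\delta)\,\Probl^{*}\bigl(-\Proba^{*}\gy^{\prime}(\Proba^{*})-\gy(\Proba^{*})\bigr)<0$, while $\partial_{\pl}\Proba^{*}>0$ (raising the leasing price pushes users toward the unlicensed channels and hence toward the advanced service). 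The database's first-order condition constrains $\Urdbrs$, not this cross term in the \emph{licensee's} payoff, so your claim that ``a first-order improvement along the $\pa=\pa^{*}$ curve translates into a first-order improvement along the $\Proba=\Proba^{*}$ line'' does not follow from the ingredients you list. As written, the bridging step is unproven, and the computation above indicates it would need either an additional argument or an additional hypothesis to go through.

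For comparison, the paper's proof of this proposition takes none of your route: it writes down the MSCG best-response conditions, substitutes the equilibrium market shares into the inverse-demand formulas \eqref{eq:price-market-share-rs}, and asserts that one can ``easily check'' that the resulting prices satisfy the PCG best-response conditions. It never confronts the deviation-set mismatch that you flag. So your diagnosis is sharper than the paper's verification, but neither your sketch nor the paper's one-line check actually establishes the equivalence of unilateral deviations; a complete proof would have to either show directly that the two first-order conditions coincide under the stated assumptions, or restate the result as an equivalence of the fixed-point (first-order) systems rather than of Nash equilibria.
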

%\vspace{-2mm}

%Our key results about the existence of NE is below.

%\begin{proposition}
%The market share competition game is a supermodular game with respect to $\Proba$ and $-\Probl$.
%\end{proposition}

We next characterzie the Market Share Equilibrium of the MSCG. We first give the following proposition which bounds the market shares maximizing the database's and the licensee's expected payoffs in \eqref{eq:db-price-dynamic}.

\begin{proposition}[Boundary of Market Share Equilibrium]\label{lemma:market_share_boundary}
	For any $\{\Probl^{*}, \Proba^*\}$ that is a solution of \eqref{eq:db-price-dynamic}, we have
%	If $\{\Probl^{*}, \Proba^*\}$ is a Market Share Equilibrium of MSCG, then it must satisfy
$\Probl^*\in (0,1/2)$ and $\Probl^* + \Proba^* < 1$.
\end{proposition}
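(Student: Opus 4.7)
My plan is to extract both bounds from the players' first-order conditions in the Market Share Competition Game, exploiting the sign structure imposed by Assumptions~\ref{assume:utility_relationship}--\ref{assum:positive}. For $\Probl^{*} < 1/2$, substituting \eqref{eq:price-market-share-rs} into \eqref{eq:sl-profit-dynamic-rv-xx} collapses the licensee's payoff (up to the factor $1-\delta$) to $\Probl \bigl[(1-\Probl)(\RL - \fx(1-\Probl)) - \Proba^{*}\gy(\Proba^{*}) - \cl\bigr]$, since the two $\gy(\Proba)$ terms inside $\pl$ combine into $-\Proba\gy(\Proba)$. Differentiating in $\Probl$ and setting the derivative to zero gives
\[
(1-2\Probl^{*})\bigl(\RL - \fx(1-\Probl^{*})\bigr) + \Probl^{*}(1-\Probl^{*})\fx'(1-\Probl^{*}) \;=\; \Proba^{*}\gy(\Proba^{*}) + \cl.
\]
Now $\RL - \fx(1-\Probl^{*}) > 0$ strictly by Assumption~\ref{assume:utility_relationship}, the second term on the left is non-positive because $\fx' \leq 0$ (Assumption~\ref{assum:congestion}), and the right-hand side is non-negative by Assumption~\ref{assum:positive}. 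This forces $1 - 2\Probl^{*} \geq 0$; with a strictly positive operational cost $\cl > 0$ the inequality tightens to $\Probl^{*} < 1/2$.

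For the lower bound $\Probl^{*} > 0$, I would evaluate the same derivative at $\Probl = 0^{+}$, which reduces to $\RL - \fx(1) - \Proba^{*}\gy(\Proba^{*}) - \cl$. Under the natural parameter regime implied by Assumption~\ref{assume:utility_relationship} (combined with the boundedness of $\Proba^{*}\gy(\Proba^{*})$ forced by the database's own optimality), this is strictly positive, so $\Probl^{*} = 0$ cannot be a local maximizer of the licensee's payoff. For $\Probl^{*} + \Proba^{*} < 1$, I would proceed by contradiction: if $\Probl^{*} + \Proba^{*} = 1$, then \eqref{eq:price-market-share-rs} gives $\pa(\Probl^{*}, \Proba^{*}) = 0$, and a direct comparison against the deviation $\Proba = 0$ (holding $\Probl^{*}$ fixed) yields
\[
\Urdbrs(\Probl^{*}, 0) - \Urdbrs(\Probl^{*}, 1 - \Probl^{*}) = \ca(1-\Probl^{*}) + (1-\Probl^{*})\gy(1-\Probl^{*})\Probl^{*}\delta \;\geq\; 0,
\]
which is strictly positive under mild non-degeneracy (e.g., $\ca > 0$), so the database would strictly prefer to deviate, contradicting the equilibrium property.

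The main obstacle I anticipate is the boundary analysis: the licensee's FOC delivers the upper bound $\Probl^{*} \leq 1/2$ almost immediately, but sharpening to the strict inequalities $\Probl^{*} > 0$, $\Probl^{*} < 1/2$, and $\Probl^{*} + \Proba^{*} < 1$ requires comparing payoffs across boundary candidates and ruling out knife-edge parameter configurations (vanishing operational costs, flat externality functions). I expect the authors' proof to invoke strict positivity of $\cl$ and $\ca$, or strict monotonicity of $\fx$ and $\gy$ at the relevant boundary points, to close each of these gaps.
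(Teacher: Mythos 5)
Your derivation of $\Probl^{*}<1/2$ is exactly the paper's argument: the authors write the same first-order condition $(1-2\Probl^{*})(\RL-\fx(1-\Probl^{*}))+(1-\Probl^{*})\Probl^{*}\fx'(1-\Probl^{*})-\Proba\gy(\Proba)-\cl=0$ and conclude from Assumptions~\ref{assume:utility_relationship} and \ref{assum:congestion} that $(1-2\Probl^{*})(\RL-\fx(1-\Probl^{*}))>0$, hence $\Probl^{*}<1/2$; your simplification of $\pl$ to $(1-\Probl)(\RL-\fx(1-\Probl))-\Proba\gy(\Proba)$ is correct and matches their computation. Where you diverge is the bound $\Probl^{*}+\Proba^{*}<1$: the paper again uses a first-order condition, this time the database's, observing that in $\partial\Urdbrs/\partial\Proba=(1-\Probl-\Proba^{*})(\gy(\Proba^{*})+\Proba^{*}\gy'(\Proba^{*}))-\Proba^{*}\gy(\Proba^{*})-\ca-\Probl\delta\gy(\Proba^{*})-\Probl\Proba^{*}\delta\gy'(\Proba^{*})=0$ every term after the first is strictly negative, which forces $1-\Probl-\Proba^{*}>0$ outright. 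Your deviation argument (comparing $\Proba=1-\Probl^{*}$, where $\pa=0$, against $\Proba=0$) is a valid alternative and the payoff difference $\ca(1-\Probl^{*})+(1-\Probl^{*})\gy(1-\Probl^{*})\Probl^{*}\delta$ is computed correctly, but it only excludes the exact boundary $\Probl^{*}+\Proba^{*}=1$; the paper's FOC route rules out $\Probl^{*}+\Proba^{*}\geq 1$ in one stroke, which matters because the maximization in \eqref{eq:db-price-dynamic} is not a priori restricted to $\Probset$ (that is precisely the point of Proposition~\ref{lemma:market_share_boundary}). Finally, you are right to flag the strictness and lower-bound issues as the weak spots: the paper's own proof silently assumes $\cl>0$ and $\ca>0$ to get strict inequalities, and it never actually proves $\Probl^{*}>0$ at all, so your admittedly incomplete argument for that piece (requiring $\RL-\fx(1)>\Proba^{*}\gy(\Proba^{*})+\cl$) is no worse than what the authors supply.
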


Proposition \ref{lemma:market_share_boundary} shows
that the licensee will achieve an equilibrium market share that is smaller than half.
Intuitively, the objective of the licensee is to maximize its own profit,
%the licensee will achieve an equilibrium market share smaller than $1/2$. Intuitively,
and a larger market share of licensee corresponds to a smaller market price, which does not necessarily increase the licensee's profit.
Another key insight of Proposition \ref{lemma:market_share_boundary} is that considering
\eqref{eq:db-price-dynamic} alone is enough to guarantee the feasibility constraint of $(\Probl, \Proba)\in\Probset$ in Definition \ref{def:nash_mscg}.
%shows that the solutions of \eqref{eq:db-price-dynamic} can optimally satisfy the feasible $(\Probl, \Proba)\in\Probset$ region requirement defined in Definition \ref{def:nash_mscg}.
%%the market share equilibrium of MSCG can be computed according to \eqref{eq:db-price-dynamic}, without considering
%%the market share equilibrium of MSCG satisfies the feasible condition, i.e., falls in $\Probset$.
Hence, we can study the existence and uniqueness of the MSCG market share equilibrium by analyzing \eqref{eq:db-price-dynamic} only.
%without verifying the feasible region condition (\emph{i.e.,}  $(\Probl^*, \Proba^*)\in\Probset$ ).

%specifying  any additional constraint between $\Probl$ and $\Proba$ such as $\Probl + \Proba\leq 1$.
%This is essential for formulating the MSCG as a supermodular game.
%For the database, the market share equilibrium will fall into the feasible region $\Probset$.
%
%	
%As the objectives of the licensee and the database are to maximize their own profit (\emph{i.e.,} achieve a tradeoff between the price and market share), Lemma \ref{lemma:market_share_boundary} shows
%that the licensee will achieve an equilibrium market share smaller than $1/2$. Intuitively, a larger market share of $\RL$ requires a smaller market price, which may reduce the payoff of the licensee.
%%that the licensee will raise leasing price to serve a minority of {\eus} (\emph{i.e.,} those with high value of $\th$) to maximize its payoff.
%%%serve a minority of {\eus} (\emph{i.e.,} those with high value of $\th$) at a high price
%%\com{What do we mean a "high" price? How does Lemma 2 suggest that the price is high?} to maximize its payoff.
%%Similarly, the database will also raise its advanced information price to serve a minority of {\eus} (\emph{i.e.,} those with median value of $\th$) to maximize its payoff.
%%% with high valuations to maximize its payoff. Similar case for the database.
%Based on Lemma \ref{lemma:market_share_boundary}, we can study the existence and uniqueness of the MSCG market share equilibrium.

%\com{Global change Market share Nash equilibrium by Market Share Equilibrium}
%\com{Market share range}

\begin{theorem}[Existence and Uniqueness of Market Share Equilibrium]\label{thrm:NE-existence}
%\label{lemma:existence-eq_pt}
Given the commission charge $\delta$,
there exists at least one Market Share Equilibrium of MSCG $(\Probl^*, \Proba^*)\in\Probset$. Furthermore, a sufficient condition for the uniqueness of the market share equilibrium is
%(\Probl^*, \Proba^*)\in(0,1/2)^2$ is
\begin{equation}\label{eq:unique_NE_price}
%%%$$
% \textstyle
% - \frac{  \partial^2{ \Urslrs({\Prob}_{\textsc{l}} , \Proba) } }{ \partial{ (-{\Prob}_{\textsc{l}}) }^2 } \geq \frac{  \partial^2{ \Urslrs( {\Prob}_{\textsc{l}} , \Proba)  } }{ \partial{ {(-{\Prob}_{\textsc{l}}) } }\partial{ \Proba  } },
%~\text{and}~
% \textstyle - \frac{  \partial^2{ \Urdbrs( {\Prob}_{\textsc{l}} , \Proba) } }{ \partial{ (-{\Prob}_{\textsc{l}}) }^2 } \geq \frac{  \partial^2{ \Urdbrs( {\Prob}_{\textsc{l}} , \Proba)  } }{ \partial{ { \Proba } }\partial{ (-{\Prob}_{\textsc{l}})  } }.
%%%$$
%%$$
\textstyle
 \frac{  \partial^2{ \Urslrs({\Prob}_{\textsc{l}} , \Proba) } }{ \partial{ {\Prob}_{\textsc{l}} }^2 } \leq \frac{  \partial^2{ \Urslrs( {\Prob}_{\textsc{l}} , \Proba)  } }{ \partial{ {{\Prob}_{\textsc{l}} } }\partial{ \Proba  } }
~\text{and}~
\textstyle  \frac{  \partial^2{ \Urdbrs( {\Prob}_{\textsc{l}} , \Proba) } }{ \partial{ {\Prob}_{\textsc{a}} }^2 } \leq \frac{  \partial^2{ \Urdbrs( {\Prob}_{\textsc{l}} , \Proba)  } }{ \partial{ { \Proba } }\partial{ {\Prob}_{\textsc{l}}  } }, ~~\textstyle \forall (  \Probl, \Proba ) \in \Probset.
%$$
\end{equation}
\end{theorem}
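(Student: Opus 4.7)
The plan is to handle existence and uniqueness separately. For existence, I first invoke Proposition \ref{lemma:market_share_boundary} to restrict attention to a compact convex subset of $\Probset$ on which $\Probl \leq 1/2$, which still contains every candidate equilibrium. On this set both $\Urslrs$ and $\Urdbrs$ are continuous (by Assumptions \ref{assum:congestion}--\ref{assum:positive} on $\fx$ and $\gy$, combined with the explicit price expressions in \eqref{eq:price-market-share-rs}), so the best-response correspondences
\begin{equation*}
B_L(\Proba) \eq \arg\max\nolimits_{\Probl} \Urslrs(\Probl, \Proba), \qquad B_D(\Probl) \eq \arg\max\nolimits_{\Proba} \Urdbrs(\Probl, \Proba)
\end{equation*}
are nonempty and upper hemicontinuous by Weierstrass and the Maximum Theorem. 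To apply Kakutani's fixed-point theorem I need $B_L$ and $B_D$ to be convex-valued, which follows once $\Urslrs$ is verified quasi-concave in $\Probl$ and $\Urdbrs$ quasi-concave in $\Proba$. Using the explicit forms of $\pl(\Probl, \Proba)$ and $\pa(\Probl, \Proba)$ from \eqref{eq:price-market-share-rs}, this quasi-concavity reduces to a second-derivative sign check that uses only the convexity of $\fx$ and concavity of $\gy$ (Assumptions \ref{assum:congestion}--\ref{assum:positive}); this delivers the existence of at least one market-share equilibrium $(\Probl^{*}, \Proba^{*}) \in \Probset$.

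For uniqueness, at any interior equilibrium (guaranteed by Proposition \ref{lemma:market_share_boundary}) the first-order conditions $\partial \Urslrs / \partial \Probl = 0$ and $\partial \Urdbrs / \partial \Proba = 0$ implicitly define single-valued best-response functions $r_L(\Proba)$ and $r_D(\Probl)$, whose slopes are
\begin{equation*}
r_L'(\Proba) = - \frac{\partial^2 \Urslrs / \partial \Probl \partial \Proba}{\partial^2 \Urslrs / \partial \Probl^2}, \qquad r_D'(\Probl) = - \frac{\partial^2 \Urdbrs / \partial \Proba \partial \Probl}{\partial^2 \Urdbrs / \partial \Proba^2}.
\end{equation*}
The sufficient condition \eqref{eq:unique_NE_price} is designed precisely so that, after accounting for the strictly negative sign of the own second derivatives (the same sign that underlies the quasi-concavity used above), both slopes are bounded in magnitude by $1$. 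Hence the composition $r_L \circ r_D$ is a non-expansive self-map on the compact interval $[0, 1/2]$, and a monotone-convergence / contraction-mapping argument --- analogous in spirit to the one behind Proposition \ref{lemma:uniqueness-eq_pt} --- forces it to admit a unique fixed point. That fixed point is the unique market-share equilibrium of MSCG.

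The main obstacle will be the sign tracking in the uniqueness step. Condition \eqref{eq:unique_NE_price} is stated asymmetrically, and its interpretation as a bound on $|r_L'|$ and $|r_D'|$ depends on whether the cross partials are positive or negative, i.e., on whether MSCG exhibits strategic complements or substitutes in a given region of $\Probset$. Extracting clean slope bounds therefore requires computing $\partial^2 \Urslrs / \partial \Probl^2$, $\partial^2 \Urslrs / \partial \Probl \partial \Proba$, and their database counterparts explicitly from \eqref{eq:sl-profit-dynamic-rv-xx} and \eqref{eq:price-market-share-rs}, via a careful chain-rule calculation in $\fx', \fx'', \gy', \gy''$. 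A secondary subtlety is ensuring that each payoff is strictly concave in its own argument so the best responses are genuinely functions (not correspondences); this strict concavity is itself a consequence of \eqref{eq:unique_NE_price} once the cross-partial signs are pinned down, closing the logical loop between the existence and uniqueness parts.
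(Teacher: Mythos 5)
Your route (Kakutani for existence, a best-response slope/contraction argument for uniqueness) is genuinely different from the paper's, which proves the MSCG is a supermodular game in $(-\Probl,\Proba)$ --- computing $\partial^2\Urslrs/\partial(-\Probl)\partial\Proba=(1-\delta)[\gy(\Proba)+\Proba\gy'(\Proba)]\ge 0$ and the analogous $(1+\delta)$ expression for the database --- and then gets existence from Topkis/Tarski and uniqueness from the dominant diagonal condition, which is exactly \eqref{eq:unique_NE_price}. The crucial advantage of the supermodular route is that it needs \emph{no} concavity of the payoffs, and that is precisely where your existence argument breaks. Your claim that quasi-concavity of $\Urdbrs$ in $\Proba$ ``reduces to a second-derivative sign check that uses only the convexity of $\fx$ and concavity of $\gy$'' is false. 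Writing $\Urdbrs=(1-\Probl-\Proba)\Proba\gy(\Proba)-\ca\Proba+\delta(\cdots)$ and differentiating twice in $\Proba$ gives
\begin{equation*}
\frac{\partial^2\Urdbrs}{\partial\Proba^2}=\bigl(1-\Probl-\delta\Probl-2\Proba\bigr)\bigl(2\gy'(\Proba)\bigr)+\bigl(1-\Probl-\delta\Probl-\Proba\bigr)\Proba\,\gy''(\Proba)-2\gy(\Proba)-2\Proba\gy'(\Proba),
\end{equation*}
which is positive for small $\Proba$ whenever $\gy'$ is large relative to $\gy$ (take $\gy$ linear with a steep slope, which satisfies Assumption \ref{assum:positive}). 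So the database's best response need not be convex-valued, Kakutani does not apply, and the existence half of your proof has a genuine gap that Assumptions \ref{assum:congestion}--\ref{assum:positive} cannot close.

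The uniqueness half has a second, independent problem: condition \eqref{eq:unique_NE_price} is a \emph{weak} inequality, so at best it bounds the best-response slopes by $1$ in magnitude, making $r_L\circ r_D$ non-expansive rather than a strict contraction. A non-expansive self-map of a compact interval need not have a unique fixed point (the identity map is the obvious counterexample), so ``non-expansive $\Rightarrow$ unique fixed point'' is not a valid step. The paper's dominant-diagonal argument from supermodular game theory is the standard way to convert exactly this weak inequality into uniqueness, and it also sidesteps your remaining unresolved issue (the circularity between strict concavity, single-valuedness of the best responses, and the sign of the cross-partials), because supermodularity pins the cross-partial signs down globally by direct computation rather than by assumption. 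If you want to salvage your outline, you should replace Kakutani with the Tarski-type existence argument on the monotone best-response map and replace the contraction step with the dominant-diagonal uniqueness lemma; at that point you will have reconstructed the paper's proof.
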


\begin{proof}[Proof Sketch]
First, we can prove that the MSCG is a supermodular game (with a proper strategy transformation), and thus has at least one Nash Equilibrium. Based on the supermodular game theory \cite{topkis1998supermodular}, the MSCG has a unique Nash Equilibrium as long as it satisfies the sufficient condition given in \eqref{eq:unique_NE_price}.
%with respect to $\Proba$ and $-\Probl$
\end{proof}

For a supermodular game with a unique Nash equilibrium,  several commonly used updating rules are guaranteed to converged to the NE \cite{topkis1998supermodular}. In this paper, we use the best response algorithm as in \cite{huang2006distributed}. Due to the space limit, we provide the detailed algorithm in
%\cite{report}.
Appendix \ref{appendix:best_response}.
	
Once we obtain the Market Share Equilibrium $(\Probl^*, \Proba^*)$ of MSCG, we can compute the Price  Equilibrium $(\pl^*, \pa^*)$ of the original PCG by (\ref{eq:price-market-share-rs})

\subsection{Wholesale Pricing Scheme --- WPS}\label{sec:layer2-wps}

We now study the game equilibrium under WPS, where the database charges a fixed wholesale price $w \geq 0$ from each successful transaction of the licensee.
By (\ref{eq:u2}), the payoffs of the \lh~and the \db~can be written as:
\begin{equation}\label{eq:sl-profit-dynamic-whole}
\left\{
\begin{aligned}
\Uslwp(\pl , \pa) &= ( \pl - \w) \cdot \Probl^{*}(\pl , \pa) - \cl \Probl^{*}(\pl , \pa),
\\
%\end{aligned}
%\end{equation}
%\begin{equation}\label{eq:db-profit-dynamic-whole}
%\begin{aligned}
\Udbwp(\pl , \pa) &= ( \pa - \ca ) \cdot \Proba^{*}(\pl , \pa) + \w \cdot \Probl^{*}(\pl , \pa).
\end{aligned}
\right.
\end{equation}

%\begin{definition}[Nash Equilibrium]\label{def:nash-ws}
%A pair of prices $( \pl^{*}, \pa^{*} )$ is called a Nash equilibrium, if
%\begin{equation}\label{eq:db-price-dynamic}
%\left\{
%\begin{aligned}
%\pl^{*} & = \arg \max_{\pl \geq 0}\ \Uslwp(\pl , \pa^{*}),
%\\
%\pa^{*} & = \arg \max_{\pa \geq 0}\ \Udbwp(\pl^{*} , \pa).
%\end{aligned}
%\right.
%\end{equation}
%\end{definition}

With the similar analysis used in Section \ref{sec:layer2-rss}, we can transform the Price Competiton Game (PCG) into an equivalent Market Share Competition Game (MSCG), and show that the MSCG is a supermodular game.
Our key results about the existence and uniqueness of game equilibrium are as follows.
\begin{theorem}\label{them:uniq_wps}
	Given the wholesale price $\w$, there exists a unique Market Share Equilibrium $\Probl^* \in (0,1/2)$ and $\Probl^*+\Proba^* < 1$
%	$\{\Probl^{*}, \Proba^*\} \in (0,1/2)^2$
	for the MSCG, and thus a unique Price Equilibrium, if the following conditions are satisfied
	\begin{equation}\label{eq:unique_NE_price_WPS}
	\textstyle
	\frac{  \partial^2{ \Urslwp({\Prob}_{\textsc{l}} , \Proba) } }{ \partial{ {\Prob}_{\textsc{l}} }^2 } \leq \frac{  \partial^2{ \Urslwp( {\Prob}_{\textsc{l}} , \Proba)  } }{ \partial{ {{\Prob}_{\textsc{l}} } }\partial{ \Proba  } }
	~\text{and}~
	\textstyle  \frac{  \partial^2{ \Urdbwp( {\Prob}_{\textsc{l}} , \Proba) } }{ \partial{ {\Prob}_{\textsc{a}} }^2 } \leq \frac{  \partial^2{ \Urdbwp( {\Prob}_{\textsc{l}} , \Proba)  } }{ \partial{ { \Proba } }\partial{ {\Prob}_{\textsc{l}}  } }, ~~\forall (  \Probl, \Proba ) \in \Probset.
	%$$
	\end{equation}
\end{theorem}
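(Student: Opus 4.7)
My plan is to mirror the four-step strategy used for RSS in Section \ref{sec:layer2-rss}: (i) transform the PCG under WPS into an equivalent Market Share Competition Game via the inverse mapping \eqref{eq:price-market-share-rs}, (ii) establish a boundary result that restricts attention to the interior of $\Probset$, (iii) show that the transformed game is supermodular and hence admits a Nash equilibrium, and (iv) invoke the sufficient condition \eqref{eq:unique_NE_price_WPS} to obtain uniqueness. Because condition (8) still guarantees a one-to-one correspondence between prices $(\pl,\pa)$ and the induced market equilibrium $(\Probl^*,\Proba^*)$, the inverse \eqref{eq:price-market-share-rs} is valid irrespective of whether the commission scheme is RSS or WPS; only the objective functions differ.

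First, I would substitute $\pl(\Probl,\Proba)$ and $\pa(\Probl,\Proba)$ from \eqref{eq:price-market-share-rs} into \eqref{eq:sl-profit-dynamic-whole} to obtain the transformed payoffs
\begin{equation*}
\Urslwp(\Probl,\Proba) = \bigl(\pl(\Probl,\Proba) - \w - \cl\bigr)\,\Probl,
\qquad
\Urdbwp(\Probl,\Proba) = \bigl(\pa(\Probl,\Proba) - \ca\bigr)\,\Proba + \w\,\Probl.
\end{equation*}
The structural similarity with the RSS case is immediate: the licensee's problem is formally that of RSS with $\delta = 0$ and an inflated operational cost $\cl + \w$, while the database's objective differs from RSS only by an additive term $\w\Probl$ that does not depend on the database's own strategy $\Proba$. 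Hence the first-order condition $\partial \Urdbwp/\partial \Proba = 0$ coincides in form with the RSS counterpart, and the licensee's first-order condition reduces to the RSS one after relabeling the cost. I would reuse these reductions to port Proposition \ref{lemma:market_share_boundary}: at any interior solution of \eqref{eq:db-price-dynamic} the licensee's best response lies in $(0,1/2)$ and $\Probl^* + \Proba^* < 1$, so that the unconstrained best-response problems automatically respect the simplex $\Probset$.

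Next, I would verify supermodularity of the MSCG under WPS. Following the same strategy transformation used for RSS (e.g., replacing $\Proba$ by $-\Proba$ when needed to align the signs of the cross partials), I would show that $\partial^2 \Urslwp/\partial \Probl \partial \Proba$ and $\partial^2 \Urdbwp/\partial \Proba \partial \Probl$ have a consistent sign on $\Probset$; because $\w$ enters $\Urslwp$ only linearly in $\Probl$ and enters $\Urdbwp$ as the additive $\w\Probl$, all relevant cross partials are identical to their RSS counterparts (with $\delta = 0$). Existence of a Nash equilibrium then follows from Topkis's theorem \cite{topkis1998supermodular}. For uniqueness, I would verify that \eqref{eq:unique_NE_price_WPS} implies the standard diagonal-dominance condition on the best-response Jacobian, yielding a contraction that rules out multiple equilibria, precisely as in the proof of Theorem \ref{thrm:NE-existence}. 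Finally, the one-to-one correspondence between $(\Probl^*,\Proba^*)$ and $(\pl^*,\pa^*)$ via \eqref{eq:price-market-share-rs} translates the unique Market Share Equilibrium into a unique Price Equilibrium of the original PCG.

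The main obstacle will be the boundary analog of Proposition \ref{lemma:market_share_boundary}. Under RSS the licensee's payoff is a positive scalar multiple $(1-\delta)$ of the ``gross'' margin $(\pl - \cl)\Probl$, so the argmax structure is preserved; under WPS, however, the effective cost becomes $\cl + \w$, and one must check that the wholesale price $\w$ chosen in Stage I does not push the licensee out of the market (i.e., that $\pl^*(\cdot) > \cl + \w$ at the equilibrium) and does not distort the half-market-share bound. I expect this to follow from the convexity of $\fx(\cdot)$ and concavity of $\gy(\cdot)$ in Assumptions \ref{assum:congestion}--\ref{assum:positive}, but it will require carefully tracking how $\w$ enters the first-order conditions, which is the principal departure from the RSS analysis.
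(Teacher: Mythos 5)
Your proposal is correct and follows essentially the same route as the paper's proof: transform the PCG into an MSCG via the inverse price map, re-derive the first-order conditions to get $\Probl^*\in(0,1/2)$ and $\Probl^*+\Proba^*<1$ (the extra $-\w$ term only makes the negative part of the licensee's FOC more negative, so the bound goes through unchanged), establish supermodularity in $(-\Probl,\Proba)$ via the cross partials $\gy(\Proba)+\Proba\gy'(\Proba)\ge 0$, and conclude uniqueness from the dominant-diagonal condition. Your reading of WPS as ``RSS with $\delta=0$ and effective cost $\cl+\w$'' is a clean shortcut that the paper does not state explicitly but that matches its direct computation term for term.
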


%\com{market share constraint}

The detailed discussions in Appendix H.
%Due to the space limit, we leave the detailed discussions in \cite{report}.

Although we use the similar method to derive the Nash equilibrium of the PCG game under both RSS and WPS, these two equilibria are quite different.
Intuitively, under RSS, the objective of the database is consistent with that of the licensee.
This can be shown by the common term in both players' payoffs given in
(\ref{eq:sl-profit-dynamic-rv}).
Hence, the database becomes less aggressive in competing with spectrum licensee under RSS that under WPS.
%in terms of market competition under RSS.

To emphasize the fact that the equilibrium payoffs in Stage II depend on $\delta$ (under RSS) or $\w$ (under WPS), we will write
%We further notice that the database's and the licensee's equilibrium payoffs (under Nash equilibrium) are functions of $\delta$ under RSS, and functions of $\w$ under WPS.
%Hence, we will also write
the equilibrium payoff of the database as $\Udbrs (\delta)$ under RSS and $\Udbwp(\w)$ under WPS. Similarly, we will
write the equilibrium payoff of the licensee as $\Uslrs (\delta)$ under RSS and $\Uslwp(\w)$ under WPS.

%\input{Section_Bargaining}
%!TEX root = main_hybrid_market.tex
%SourceDoc main_hybrid_market.tex
%\vspace{-2mm}
\section{Stage I -- Commission Bargaining Solution}\label{sec:layer1}

%\begin{figure*}
%  \centering
%  \vspace{-2mm}
%%  \includegraphics[width=2.2in]{fig_w_bar}
%  \includegraphics[width=2.2in]{fig_w_fixed_network_effect}
%  ~~~~~~~~~~~~~~~~~~~~~~~~~~~~~~~~~~~~~~~~
%  \includegraphics[width=2.2in]{Fig_NE_price_fixed_network_effect}
%%  \includegraphics[width=2.2in]{Fig_NE_market_share_fixed_network_effect}
%\vspace{-2mm}
%  \caption{(a) Equivalent wholesale price, (b) the \db's and the \lh's retail price (where LH denotes Licensee and DB denotes database) vs $\RL$ under revenue sharing scheme(RSS) and wholesale price scheme(WPS)}.\label{fig:price-vs-diff-L}
%%  , and (c) the \db's and the \lh's market share vs $\RL$ under revenue-sharing-scheme(RSS) and wholesale-price-scheme(WPS).}
%  \vspace{-5mm}
%\end{figure*}

%In this section, we study the commission negotiation among the database and the spectrum licensee in Layer I, based on the predictions of the game equilibrium in Layer II and the market equilibrium in Layer III.
%Specifically, under the RSS scheme, the database and the spectrum licensee  negotiate the revenue sharing factor $\delta \in [0, 1]$, while under the WPS scheme, they negotiate the wholesale price $\w \geq 0$.

In this section, we study the commission negotiation among the database and the spectrum licensee in Stage I, based on their predictions of the price equilibrium in Stage II and the market equilibrium in Stage III\footnote{In our proposed bargaining model, the database and the licensee only conduct the bargaining if this leads to a payoff increase for both sides.
Otherwise they can choose not to bargain and do not cooperate in Stage I.
Furthermore,
based on the discussions in the previous paragraph, we can see that the licensee does not have the full market power. Hence a bargaining model is suitable for such a market.
}.

Specifically, we want to find a feasible revenue sharing percentage $\delta \in [0, 1]$ under RSS, or a feasible wholesale price $\w \geq 0$ under WPS, that is \emph{satisfactory} for both the database and the spectrum licensee.
This is motivated by the fact that both the database operator (\emph{e.g.,} Google, Microsoft, and SpectrumBridge) and spectrum licensees (\emph{e.g.,} AT$\&$T, Verizon, and China Mobile) have considerable market power, and one side can not determine $\delta$ or $\w$ alone.
We formulate the commission negotiation problem as a \emph{bargaining problem}, and solve it using the Nash bargaining theory \cite{harsanyi1977bargaining}.
%The key motivation of using the Nash bargaining theory is as follows. The Nash bargaining theory is expected to yield a Pareto-efficient and fair outcome. Moreover, a unique bargaining solution requires mild technical conditions. Hence, by applying bargain theory, the database and the spectrum licensee are willing to execute the negotiation result.

%\com{Motivate the choice of bargaining model. This means that both sides have similar market power? What is one side has a much larger market power than the other? Will bargaining framework still work (with the proper choices of exponents), or we need to use a different model (such as a four-stage model)? }

%\rev{In Layer I, the \lh~and the \db~bargain with each other in terms of the commission fee (i.e., $\delta$ for revenue-sharing scheme and $\w$ for wholesale-price scheme). We first formulate the problem as a basic two-person one-to-one bargaining. Then we use Nash bargaining theory to find the optimal $\delta$ and $w$ under these two commission scheme.
%}
%\vspace{-2mm}
\subsection{Revenue Sharing Percentage $\delta$-Bargaining under RSS}\label{sec:layer1-rss}

We first study RSS, where we want to determine the revenue sharing percentage $\delta \in [0, 1]$.
%based on the Nash bargaining theory, which is expected to be satisfactory for both the database and the licensee.~~~~~~~~~~~~~~~~~~~~~~~~~~

We first derive the database's and the licensee's payoffs when reaching an agreement and when \emph{not} reaching any agreement. This allows us to characterize the payoff improvements due to successful bargaining.
%(hence reaching the disagreement).
%Notice that reaching an agreement
%(say $\delta$) implies that the database will provide the spectrum market platform for the licensee, and charge a fixed percentage $\delta$ of revenue from the licensee; while reaching the disagreement implies that the database will \emph{not} provide the spectrum market platform for the licensee, hence there will be no spectrum market and the hybrid market degenerates to a pure information market.\footnote{Here, we assume that there is no third-party spectrum market platform for the spectrum licensee.}

Formally, when reaching an agreement $\delta$, the database's and the licensee's payoffs are  $\Udbrs (\delta)$ and $\Uslrs (\delta)$ derived in
Section \ref{sec:layer2-rss}, respectively.
When not reaching any agreement (reaching the disagreement), the licensee's profit is $\Uslo = 0$, and the database's profit is $\Udbo = \pa^{\dag} \cdot \Proba^{\dag}(\pa^{\dag}) $, where $\pa^{\dag}$ and $\Proba^{\dag}(\pa^{\dag})  $ are the database's optimal price and the corresponding market share in the pure information market.\footnote{Such an optimal price and the corresponding market share can be derived in the same way as in Section \ref{sec:layer2-rss}, by simply setting $\thla = 1$.}
Then, the Nash bargaining solution is the solution of the following optimziation problem,
%formally given by
\begin{equation}\label{eq:NBS-RS}
\begin{aligned}
\max_{\delta \in [0,1]}~ &\left( \Udbrs(\delta) - \Uslo \right) \cdot \left( \Uslrs(\delta) - \Udbo \right) \\
\text{s.~t.~} & \Udbrs(\delta) \geq \Uslo,~~ \Uslrs(\delta) \geq \Udbo.
\end{aligned}
\end{equation}

Analytically solving (\ref{eq:NBS-RS}) may be difficult, if we are not able to analytically characterize $\Udbrs(\delta)$  and $\Uslrs(\delta) $.
%as it is hard to characterize the analytical forms of $\Udbrs(\delta)$  and $\Uslrs(\delta) $.
Nevertheless, we notice that the bargaining variable $\delta$ lies in a closed and bounded range of $[0,1]$, and the objective function of (\ref{eq:NBS-RS}) is bounded.
Hence, there exists at least one optimal solution for (\ref{eq:NBS-RS}).
As our numerical results show that the obejctive function of \eqref{eq:NBS-RS} is approximately quadratic in $\delta$, the optimal solution is unique and can be found by using  one-dimensional search methods (\emph{e.g.,} \cite{bargai2001search}).
%	$\left( \Udbrs(\delta) - \Uslo \right) $ linear increases with $\delta$, while $\left( \Uslrs(\delta) - \Udbo \right) $ linear decreases with $\delta$, there exist }
% which can be found by using standard one-dimensional search methods {(\emph{e.g.,} \cite{bargai2001search})}.
%\vspace{-2mm}
%\begin{proposition}[$\delta$-Bargaining Solution]\label{lemma:existence-NBS}
%There must exist an optimal solution for the problem (\ref{eq:NBS-RS}).
%\com{Possible to have more than one optimal solution? }
%%\rev{If the objective function of (\ref{eq:NBS-RS}) is monotonic, the optimal solution is unique.}
%\end{proposition}
%\vspace{-2mm}
%\com{This result is quite trivial and should be removed. }

%\vspace{-3mm}
\subsection{Wholesale Price $\w$-Bargaining under WPS}\label{sec:layer1-wps}

We now study WPS, where the database charges the spectrum licensee a fixed wholesale price $\w\geq 0$ for each successful  transaction of the latter.
%Similarly, our focus is to determine this wholesale price $\w \geq 0$ based on the Nash bargaining theory.

When reaching an agreement $\w$, the database's and the licensee's payoffs are  $\Udbwp (\w)$ and $\Uslwp (\w)$ derived in
Section \ref{sec:layer2-wps}, respectively.
When not reaching any agreement (reaching the disagreement), the licensee's profit is $\Uslo = 0$, and the database's profit is $\Udbo = \pa^{\dag} \cdot \Proba^{\dag}(\pa^{\dag}) $, which is same as that under RSS (Section \ref{sec:layer1-rss}).
The Nash bargaining problem is
\begin{equation}\label{eq:NBS-WP}
\begin{aligned}
\max_{\w \geq 0}~ &\left( \Udbwp (\w) - \Uslo \right) \cdot \left( \Uslwp (\w) - \Udbo \right) \\
\text{s.~t.~} & \Udbwp (\w)  \geq \Uslo,~~ \Uslwp (\w) \geq \Udbo.
\end{aligned}
\end{equation}

%
%
%
%Using a similar approach in the revenues sharing market, we can also formulate the Nash Bargaining Problem in the Wholesale-Price-Based Market except the \lh's utility function $\Usla$ and the \db's utility function $\Udba$ are changed into:
%\begin{equation}\label{eq:db-profit-dynamic-wp}
%\begin{aligned}
%\Udbawp(\w) &=  \big( \pl^{*}(\w) - \w \big) \cdot \Probl^{*}(\pl^{*}, \pa^{*})
%\end{aligned}
%\end{equation}
%\begin{equation}\label{eq:sl-profit-dynamic-wp}
%\begin{aligned}
%\Uslawp(\w) &= \pa^{*} \cdot \Proba^{*}(\w) + \w \cdot \Probl^{*}(\pl^{*}, \pa^{*})
%\end{aligned}
%\end{equation}
%
%The Nash Bargaining Solution (NBS) is an agreement $\w^{*} \in A$ that solves the following problem:
%\begin{equation}\label{eq:NBS-WP}
%\begin{aligned}
%\max_{\w \in A}~ &\bigg( \Udbawp(\w) - \Uslo \bigg) \cdot \bigg( \Uslawp(\w) - \Udbo \bigg) \\
%\text{s.~t.~} & \Udbawp(\w) \geq \Uslo,~~ \Uslawp(\w) \geq \Udbo.
%\end{aligned}
%\end{equation}

We further notice that that we can restrict the bargaining variable $\w$ within a closed and bounded set, say $[0, \RL]$, while not affecting the optimality of the solution.
This is no user will choose the leasing service when $\w > \RL$, in which case the spectrum licensee will get a zero payoff. This is certainly not an optimal solution of \eqref{eq:NBS-WP}.
%We further notice that  the objective function of (\ref{eq:NBS-WP}) is bounded.
Similar as Section \ref{sec:layer1-rss},
there would exist a unique solution for (\ref{eq:NBS-WP}) as shown in our simulations, and which can be found effectively through one-dimensional numerical search.

%\vspace{-2mm}
%\begin{proposition}[$\w$-Bargaining Solution]\label{lemma:existence-NBS-wp}
%%There must exist at least one optimal solution for the optimization problem (\ref{eq:NBS-WP}). In addition, the optimal solution can be effectively found through numerical methods such as the exhaustive search.
%There must exist an optimal solution for the problem (\ref{eq:NBS-WP}).
%%In addition, the solution can be solved by numerical one-dimensional search.
%\end{proposition}

%\input{Section_Simulation2}
%\vspace{-3.0mm}
%!TEX root = main_hybrid_market.tex
%SourceDoc main_hybrid_market.tex

%\vspace{-4mm}
\section{Simulation Result}\label{sec:simulation}
We evaluate the system performance
%(\emph{e.g.,} the \db's profit and the network profit)
(\emph{e.g.,} the \db's profit, the network profit, and the social welfare)
achieved under both revenue sharing scheme (RSS) and wholesale price scheme (WPS)  through extensive numerical studies. We will focus on the impact of system parameters (\emph{i.e.,} the degree of network externality and the \lh's operating cost) on system performance.

\subsection{Simulation Setting}\label{sec:simulaton_function}
As a concrete example,
we choose $\fx( \Probl ) = \alpha_1 - {\beta_1}\cdot ( 1 - \Probl )^{\gamma_1}$ to model the negative network externality with the following justifications.
%The reasons that we use function $\fx$ to characterize the negative network externality (congestion effect) are as follows.
First, when $\Probl = 1$, there is no congestion among users choosing {\uchs}, as all {\eus}  lease {\lchs} for exclusive usage. In this case, the utility provided by the basic service is $\alpha_1$.
%Second, $\fx( \Probl )$ decreases in $1 - \Probl$,
%%\emph{i.e.,} the more {\eus} using {\uchs}, the less value of $\fx(\Probl)$,
%which corresponds to a negative network externality.
%%The more {\eus} using {\uchs}, \emph{i.e.,}  large value of $1 - \Probl$, the less value of $\fx(\Probl)$.
Second, when $\Probl = 0$, all {\eus} choose the shared {\uchs}, and the congestion is maximum.
In this case, the utility provided by the basic service decreases to $\alpha_1 - \beta_1$.
Finally, the parameter $\gamma_{1} \in (0,1]$ models the elasticity of the negative network externality.
A small $\gamma_{1}$ means that the value of $\fx(\Probl)$ will be small even with a large $\Probl$. This means that a small fraction of {\eus} using {\uchs} will cause a large enough congestion.
%the information gain induced by the database's advanced service is $\gy(\Proba) = \alpha_{2}$, which
%%characterizes the minimum benefit brought by the database's advanced service.
%depends on the database's knowledge of TV stations' location and TV stations's transmission power.
%Second, ${\Proba}^{\gamma_2}$ increases in $\Proba$, hence corresponds to a positive network externality.
%Third, when $\Proba = 1$, the information gain induced by the database's advanced service is $\gy(\Proba) = \beta_2$, which
%%characterizes the maximum benefit brought by the database's advanced service.
%depends on the database's knowledge of {\uchs} occupation information.
%Finally, this function allows us to use a single parameter $\gamma_{2} \in (0,1]$ to model the elasticity of the network externality.
%A small $\gamma_{2}$ means that the value of $\gy(\Proba)$ can be reasonably large even with a small $\Proba$. This means that a small amount of database's market share can be enough to bring a large information gain to {\eus} by subscribing to the  database's advanced service.

Similarly, we use function $\gy(\Proba) = \alpha_2 + (\beta_2 - \alpha_2) \cdot {\Proba}^{\gamma_2}$ to model the positive network externality.
Specifically, $\alpha_1$ denotes the minimum benefit brought by the database's advanced information on {\uchs} occupation, and $\beta_2$ denotes the maximum benefit brought by the database's advanced information. The parameter $\gamma_2 \in (0, 1]$ characterizes the elasticity of the positive network externality.\footnote{{These two functions generalize the linear network externality models from many existing literatures such as \cite{easley2010effect}, satisfy our Assumptions \ref{assum:congestion} and \ref{assum:positive}, and are used to model the network effect in the literature (see, \emph{e.g.,} \cite{wu2014exploring}).}}

%Note that this function generalizes the linear network externality models in many existing literatures such as \cite{easley2010effect,prasad2010effect}.

%\com{Explain the physical meanings of various parameters in this choice.
%Any reference supporting such choices? }

To characterize the interaction of negative and positive network externality in the information market, we first derive the first-order derivative of $\RA$ with respect to $\Proba$, \emph{i.e.,}
$
\textstyle
\frac{ \partial{\RA} }{ \partial{\Proba} } = -{\gamma_1} \cdot{\beta_1}\cdot ( \Proba + \Probb )^{\gamma_1 - 1} + {\gamma_2}\cdot {\beta_2}\cdot ( \Proba )^{\gamma_2-1}.
$
Notice that if $\frac{ \partial{\RA} }{ \partial{\Proba} } > 0$, the value of advance information increases with the market share $\Proba$ (\emph{i.e.,} the percentage of users purchasing the advanced information).
%In this case, we can intuitively say that the positive network externality is dominant.
We call this the \emph{positive network externality dominant} case (or simply positive dominant).
On the other hand, if $\frac{ \partial{\RA} }{ \partial{\Proba} } < 0$,
the value of advance information decreases with the market share $\Proba$. We call this the \emph{negative network externality dominant} case (or simply negative dominant).
%, and we can intuitively say that the negative network externality is dominant.

To facilicate our study, we set $\gamma_1 = \gamma_2$ and change the value of $\beta_1$ and $\beta_2$ to examine different cases of network externality.
%achieve different degrees of network externality.
Specifically, we use $\lambda = \beta_2/\beta_1$ to represent the degree of network externality.
Obviously, if $\lambda = \beta_2/\beta_1  >  (\frac{\Proba + \Probb}{\Proba})^{1 - \gamma_1}$,
%\com{We should use the same threshold as the one in the next line? Otherwise there is a gap of the range of $\lambda$ that we do not cover.  }
then $\frac{ \partial{\RA} }{ \partial{\Proba} } > 0$ for any $\{ \Proba, \Probb \}$, hence the information market is positive dominant.
If $\lambda < (\frac{\Proba + \Probb}{\Proba})^{1 - \gamma_1}$,
%If $\lambda < 1$,
then $\frac{ \partial{\RA} }{ \partial{\Proba} } < 0$, hence the information market is negative dominant.
Hence an increasing value of $\lambda$ implies that the positive network externality is getting relatively stronger.
%Then, a small $\lambda$ denotes that positive network effect is dominant, while a large $\lambda$ denotes that negative network externality is dominant.

Unless specified otherwise, we assume that
$\alpha_1 = 1$, $\alpha_2 = 1$, $\beta_1 = 1$, $\gamma_1 = 0.6$, $\gamma_2 = 0.6$, and $\ca = 0.2$  in the rest of the numerical studies.
%\com{These three notations are NOT used in Figure 5. Normally we would introduce them when they are FIRST used. However, if you just want to keep them here, then please say which exact figures that they will be used.}
%\com{The third party scheme requires a bit more explanation (say in a footnote). Is it still a three-stage decision model? What kind of decisions would various parties make in each stage? }

%\begin{figure*}
%  \centering
%  \includegraphics[width=2.2in]{}
%  \includegraphics[width=2.2in]{}
%  \includegraphics[width=2.2in]{}
%    \vspace{-2mm}
%  \caption{(a) Equivalent wholesale price, (b) The \db's and the \lh's retail price, (c) The \db's and the \lh's market share vs the level of network externality $\lambda$ under the revenue sharing scheme (RSS) and the wholesale price scheme (WPS). }\label{fig:parameter_vs_gamma}
%%  \vspace{-7mm}
%\end{figure*}

%==================================================
\subsection{The Impact of Network Effect Parameter $\lambda$}
%\com{Since Subsections B and C are similar as Subsection A, I propose NOT to go through each subfigures in Figures 7, 8, 9, 10. Instead, depending on what key messages we provide at the beginning of the these two subsections, we only discuss the related subfigures.
%Also, some of the reason can be used to explain multiple subfigures. We can make the discussions "insight/reason" centric, instead of "subfigure centric". }
%\com{Use a short paragraph to summarize the key messages of this subsection, before going to figure details. }
%\com{After the detailed discussions of Figures 5 and 6, will we be able to simplify/streamline the discussions of Figures 7 and 8, 9 and 10? }

%Figure \ref{fig:parameter_vs_gamma} show (a) the Nash bargaining solution of both RSS  and WPS, (b) the equilibrium retail prices in Stage II, and (c) the equilibrium market share in Stage III under different network effect $\lambda$.

Figure \ref{fig:performance_vs_gamma} illustrates (a) the \db's profit, (b) the \lh's profit, and (c) the  network profit,  \emph{i.e.,} the aggregate profit of the \db~and the \lh~
%and (c) the normalised social welfare
achieved under different network effect.
%%($\lambda$ from $0.4$ to $1.8$).
%For fair comparison, we transform the revenue sharing factor $\delta^{*}$ under RSS into the equivalent wholesale price by $\w^{*} = \delta^{*} \cdot (\Probl^{*} - \cl)$.
%%Figure \ref{fig:parameter_vs_gamma} is similar as Figure \ref{fig:parameter_vs_L}, except that we study the impact of different network effect instead of leasing quality $\RL$.
Here we choose
%show the Nash bargaining solution of both RSS scheme and WPS scheme, and the corresponding equilibrium retail prices and equilibrium market share under different network effect
$\lambda$ from $0.4$ to $1.8$, \emph{i.e.,} the network effect changes from strong negative externality to strong positive externality. In this simulation, we fix the \lh's operational cost as $\cl = 0.9$ and the quality of leasing service as $\RL = 6$.

\begin{figure*}
	\centering
	\includegraphics[width=2.2in]{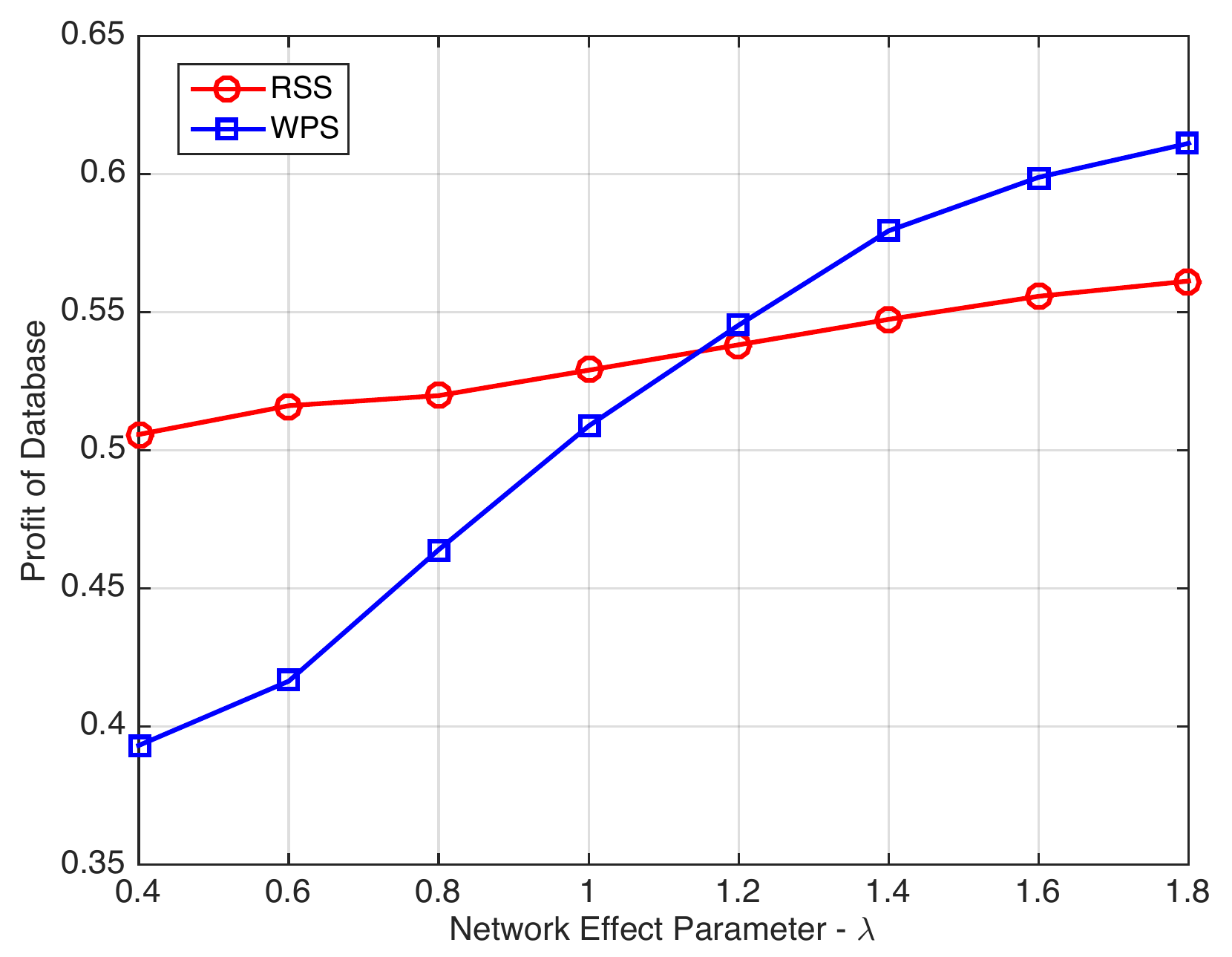}
	\includegraphics[width=2.2in]{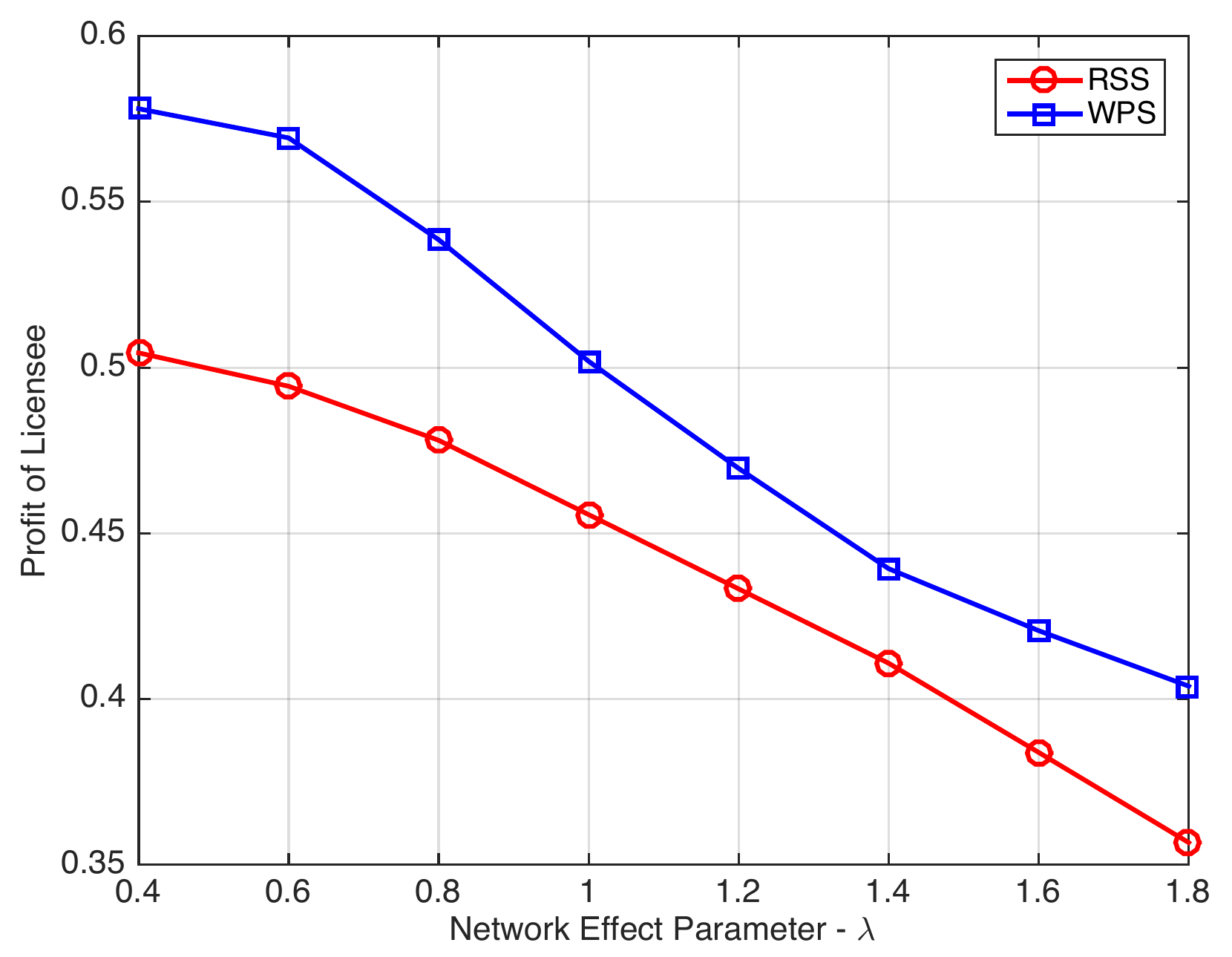}
	\includegraphics[width=2.18in]{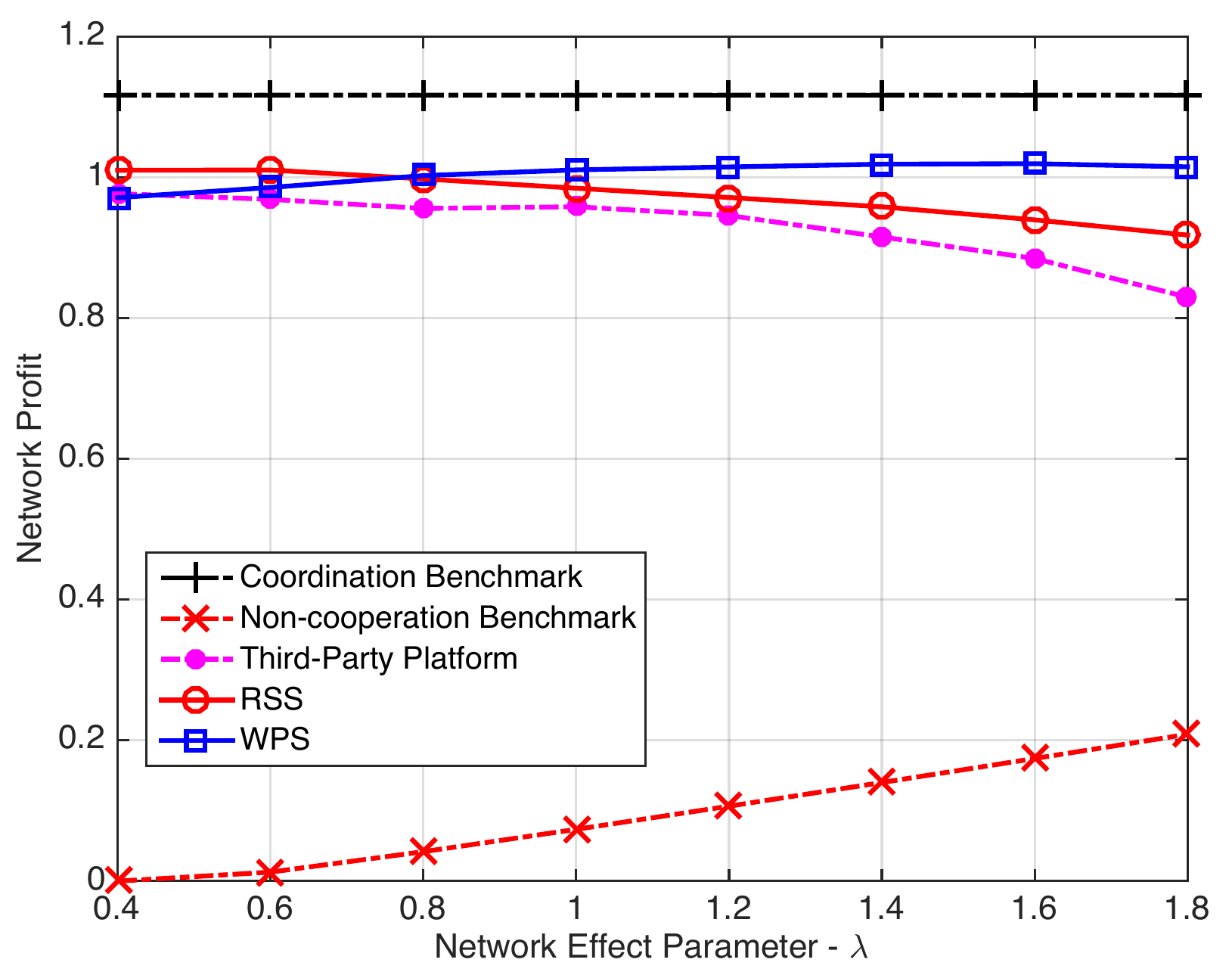}
%	\vspace{-2mm}
	\caption{(a) The \db's profit, (b) The \lh's profit, (c) Network profit vs the level of network externality $\lambda$ under the revenue sharing scheme (RSS) and the wholesale price scheme (WPS). }\label{fig:performance_vs_gamma}
	%  \vspace{-7mm}
\end{figure*}

In Figure \ref{fig:performance_vs_gamma}.c, we use the black dash-dot line (with mark $+$) to denote the \emph{coordination benchmark}, where the \db~and the \lh~act as an integrated party to maximize their aggregate profit. We use the red dash-dot line (with mark $\times$) to denote the \emph{non-cooperation benchmark} (with pure information market only), where the \db~does not display the \lh's licensed \ch~information. The brown dash-dot line (with mark $\bullet$) denotes the case where the \lh~sells channels on \emph{a third-party spectrum market platform}, where the \lh~can display his licensed information for free. In a third-party scheme, the licensee displays his licensed information in a third-party platform, instead of using the database's platform. Comparing with the three-stage model, such a third-party model degenerates to a two-stage model, where the database and the licensee compete with each other for selling information or channels to {\eus} in Stage I, and {\eus} decide the best subscription decisions in Stage II.
%\footnote{{In a third-party scheme, the licensee displays his licensed information in a third-party platform, instead of using the database's platform. Compare with the three-stage model, such a third-party model degenerates to a two-stage model, where the database and the licensee compete with each other for selling information or channels to {\eus} in Stage I, and {\eus} decide the best subscription decisions in Stage II.}}

%\com{List the four points in an Observation environment, and call these four points (a), (b), (c), and (d). Later on we refer to them as Observation 1(a), Observation 1(b), ...}

The key observations from Figure \ref{fig:performance_vs_gamma} are as follows.

\begin{OA}\label{observation:network}
\hfill 	
\begin{description}
 \item[(a)]
	 The database's profits achieved under both schemes increase with $\lambda$ (Figure \ref{fig:performance_vs_gamma}.a), while the licensee's profits achieved under both schemes decrease with $\lambda$ (Figure \ref{fig:performance_vs_gamma}.b).
 \item[(b)]
	 When the negative network externality is dominant ($\lambda$ is small), RSS is a better choice for the database; When the positive network externaltiy is dominant ($\lambda$ is large), WPS is a better choice for the database (Figure \ref{fig:performance_vs_gamma}.a).
 %  while WPS is a better choice fort the licensee (shown in Figure \ref{fig:performance_vs_gamma}.a and  \ref{fig:performance_vs_gamma}.b).
 \item[(c)]
	 WPS is always a better choice for the licensee (Figure \ref{fig:performance_vs_gamma}.b).
 %When the positive network externality is dominant ($\lambda$ is large), WPS is a better choice for both the database and the licensee (shown in Figure \ref{fig:performance_vs_gamma}.a and  \ref{fig:performance_vs_gamma}.b).
 \item[(d)]
	 The proposed RSS and WPS alway outperform the non-cooperation scheme and the third-party scheme in terms of network profit. (\emph{e.g.,} the performance gain between WPS and the non-cooperation scheme is up to $87\%$).
%	 \com{This is misleading, as the $87\%$ improvement is wrt to non-cooperation scheme, but not to the third-party scheme.}.
	 Meanwhile, both schemes can achieve a network profit close to the co-ordinated benchmark (Figure \ref{fig:performance_vs_gamma}.c).
\end{description}
\end{OA}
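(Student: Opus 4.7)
The plan is to justify the four parts of Observation \ref{observation:network} through a combination of comparative-statics arguments on the Stage II price-competition equilibrium and revealed-preference arguments on the Stage I bargaining outcome. Throughout, I exploit the fact that $\lambda = \beta_2/\beta_1$ enters the model only through $\RA(\Proba,\Probl) = \fx(1-\Probl) + \gy(\Proba)$, with $\gy$ strictly increasing in $\lambda$ for every $\Proba>0$, while $\RB(\Probl)$ and $\RL$ are independent of $\lambda$. This single monotonicity underlies every comparative-statics claim below.

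For part (a), I would differentiate the inverse-demand relations \eqref{eq:price-market-share-rs} with respect to $\lambda$ and apply the implicit function theorem to the first-order conditions that characterize the market-share equilibrium of Theorem \ref{thrm:NE-existence}. The key intermediate claim is that $\partial \Proba^{*}/\partial\lambda \geq 0$ and $\partial \Probl^{*}/\partial\lambda \leq 0$: a larger $\lambda$ makes the advanced service strictly more valuable relative to the leasing service, so demand shifts away from the licensee toward the database. Feeding this back into $\Urdbrs$ and $\Uslrs$ (and their WPS counterparts) via the envelope theorem yields monotonicity of the Stage II equilibrium payoffs in $\lambda$. Since the Stage I Nash product inherits its $\lambda$-dependence entirely from these Stage II payoffs, a standard monotone comparative statics argument extends the monotonicity through the bargaining stage.

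For parts (b) and (c), the plan is to compare RSS and WPS by leveraging the structural difference in how the database shares in leasing revenue. Under RSS the database's payoff contains the term $(\pl-\cl)\Probl\,\delta$, which is co-monotonic with the licensee's own profit; under WPS the analogous term is $\w\,\Probl$, which is decoupled from $\pl$. When negative externality dominates (small $\lambda$), leasing captures most of the demand, so a proportional share of that demand is more lucrative for the database than a flat per-transaction fee, favoring RSS; when positive externality dominates (large $\lambda$), $\Proba^{*}$ grows and the dominant revenue channel is the database's own information sale, so WPS---which does not burden the licensee's margin---keeps $\pl$ low, demand healthy, and the wholesale revenue flowing. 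For the licensee the comparison is unambiguous: WPS inflates her marginal cost by a fixed additive $\w$, while RSS scales her entire margin by the factor $(1-\delta)$; solving the Stage II best response under each scheme and comparing the resulting equilibrium markups should confirm that the licensee's equilibrium payoff is always higher under WPS.

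For part (d), I would use a choice-set-expansion argument. The coordinated benchmark maximizes the joint profit over the superset of all feasible contracts, so it bounds both schemes from above; conversely, RSS and WPS each strictly enlarge the admissible action set relative to the pure information market (which forbids leasing altogether) and relative to the third-party scheme (which effectively fixes the commission at zero), so the network profit under either proposed scheme weakly dominates both benchmarks. Quantifying the $87\%$ gap and the sub-$10\%$ distance to the coordinated benchmark is inherently numerical and relies on the closed forms in Section \ref{sec:simulaton_function}. The main obstacle throughout this program is that the Stage I bargaining variables $\delta$ and $\w$ are implicit in the maximizers of \eqref{eq:NBS-RS} and \eqref{eq:NBS-WP}, so every comparative-statics step must be pushed through an envelope argument at the Stage II equilibrium rather than computed directly; verifying that the uniqueness condition \eqref{eq:unique_NE_price} holds uniformly in $\lambda$ over the relevant parameter range is a necessary but routine side-check.
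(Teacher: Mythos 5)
This statement is not a theorem in the paper: it is a numerical \emph{Observation} whose ``proof'' consists of the simulation results in Figure \ref{fig:performance_vs_gamma} (under the specific functional forms and parameter values of Section \ref{sec:simulaton_function}) together with the intuitive discussion that follows it. Your proposal instead tries to establish the four claims analytically, which is a genuinely different — and much more ambitious — route than the paper takes. Unfortunately the program has concrete gaps that would prevent it from being completed. First, in part (a) the envelope theorem only kills the own-strategy derivative at a player's best response; the equilibrium payoff still changes through the opponent's equilibrium strategy, i.e. $\frac{d\Udbrs}{d\lambda}$ contains the term $\frac{\partial\Udbrs}{\partial\Probl}\cdot\frac{\partial\Probl^{*}}{\partial\lambda}$, whose sign is not controlled by your argument (under RSS the database actually profits from the licensee's share, so this cross-effect is not even obviously favorable). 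Moreover, the paper itself states that it cannot analytically characterize $\Udbrs(\delta)$ and $\Uslrs(\delta)$ and must solve the Stage I problem \eqref{eq:NBS-RS} by one-dimensional numerical search; ``pushing monotone comparative statics through the bargaining stage'' is therefore not a routine step — it is precisely the step nobody knows how to do in closed form here.

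Second, the choice-set-expansion argument in part (d) is invalid for equilibrium outcomes: enlarging the admissible contract set does not imply that the resulting \emph{equilibrium} network profit weakly increases, because the Nash bargaining solution maximizes the product of payoff gains, not their sum, and the Stage II outcome is a noncooperative equilibrium rather than a joint optimum. The only piece that does follow from theory is dominance over the non-cooperation (pure information market) benchmark, which is a direct consequence of the individual-rationality constraints in \eqref{eq:NBS-RS} and \eqref{eq:NBS-WP}: the disagreement payoffs sum to exactly the pure-information-market network profit. Dominance over the third-party scheme, the $87\%$ gain, and the sub-$10\%$ gap to the coordinated benchmark are purely numerical findings under the chosen $\fx$, $\gy$, and parameters. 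Parts (b) and (c) of your argument essentially restate the paper's own intuition (partial coordination of objectives under RSS versus full marginal incentives under WPS) without turning it into a proof. In short: your approach is legitimate as a research program, but as written it does not prove the statement, and the paper does not claim to prove it either — it reports it as a simulation observation.
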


In the following, we discuss the the intuitions behind each observation in detail.

\emph{Observation \ref{observation:network}(a):}
When the positive network externality becomes stronger, the advanced service provides a higher utility to users.
%even when the database has a small market share.
%Hence, the database's advanced service can attract {\eus} with median valuation.
Such a higher utility drives the equilibrium price as well as the equilibrium market share up for the database, which results in the increase of the database's profit.
%Hence, the database's advanced service can attract {\eus} with median valuation, which result in the
To keep the leasing service attractive,
%As some \eus~are attracted by the \db's advanced service,
the \lh~needs to drive the equilibrium retail price down, which results in the decrease of the licensee's profit.

\emph{Observation \ref{observation:network}(b):}
As the objective of the database and the licensee are partially coordinated under RSS \cite{dana2001revenue},
the price competition under WPS is more severe than  that under RSS.
When the negative network externality is dominant, the increase of the \db's market share may severely decrease the quality of advanced service. In this case, it is better for the \db~to let the \lh~take a large fraction of the market share and share the revenue of the \lh~through RSS.
%earn the profit by charging the commission fee from the \lh.
%Hence, RSS is the better choice for the \db.
On the other hand, when the positive network externality is dominant,
a larger market share can significantly increase the attractiveness of
the \db.
Hence the \db~is able to get a higher profit with the more aggressive WPS.

\emph{Observation \ref{observation:network}(c):}
As the database charges a fixed price per transaction to the licensee under WPS, the licensee can enjoy the full benefit of putting effort to maximize its own profit (\emph{e.g.,} increasing the leasing service price to serve a minority of {\eus} with high values of $\th$). Under RSS, however, some benefit from serving {\eus} would go to the database, as the licensee needs to share a fixed portion of the revenue with the database. In such case, the licensee would like to exert a higher effort under WPS than RSS, hence can achieve a higher payoff under WPS.
	%, WPS would bring a higher benefit to the licensee than RSS.
	%	a fixed price per transaction instead of sharing a fixed portion of the revenue.
%	Due to the existence of competition between the database and the licensee, the amount of sucessful transactions between the licensee and the {\eus} is uncertain for the licensee. In such case, the licensee is more willing to pay the database a fixed price per each successful transaction instead of sharing a fixed portion of the revenue.

%With the network externality changes from negative dominant to positive dominant,
%the competition between the \db~and the \lh~increases. In such case, the licensee is more willing to pay the database a fixed price per transaction instead of sharing a fixed portion of the revenue. \com{This observation should be INDEPENT of $\lambda$, Figure 5.b shows that WPS is always better than RSS. Need to justify using a way that does NOT depend on $\lambda$ value. }

\emph{Observation \ref{observation:network}(d):}
%The database and the licensee cooperate with each other through the bargaining
Under our proposed schemes (\emph{i.e.,} RSS and WPS), the database and the licensee negotiate with each other in Stage I. The third-party scheme does not involve such negotiation process and can not exploit the cooperation benefit, hence performs worse than our proposed schemes.
Compared with the pure information market, letting some {\eus} to lease the {\lchs} alleviates the congestion (interference) of the {\uchs}. Hence, the database can provide a good quality of service to {\eus} at a higher price, which increases the network profit.
The performance gap between the two proposed schemes and coordination benchmark is caused by the \emph{imperfect} coordination of the database and the licensee.
The database and licensee cooperate  but do not completely coordinate (\emph{i.e.,} act as a single decision maker), and we refer to this gap as the \emph{non-coordination loss}.

We want to emphasize that even though the pure cooperation model can achieve the maximum network profit, the regulators such as FCC in the United States and Ofcom in UK would not allow the licensee to be a database. As shown in the FCC's ruling, the database is certified by a third-party company which does not own the spectrum. The rational for this is to prevent the monopoly in the market that may jeopardize the interest of end-users.

\begin{figure*}
	\centering
	\includegraphics[width=2.2in]{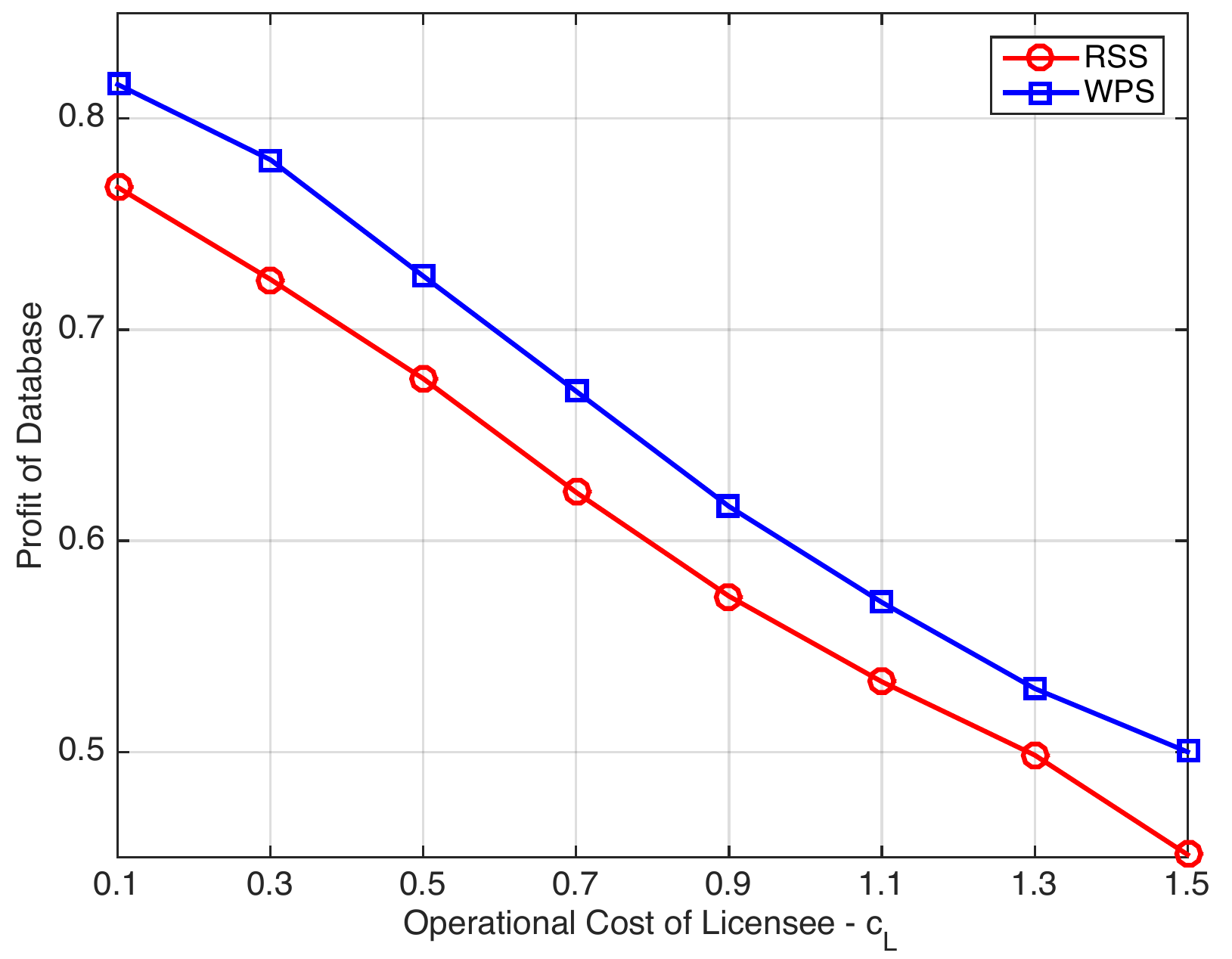}
	\includegraphics[width=2.2in]{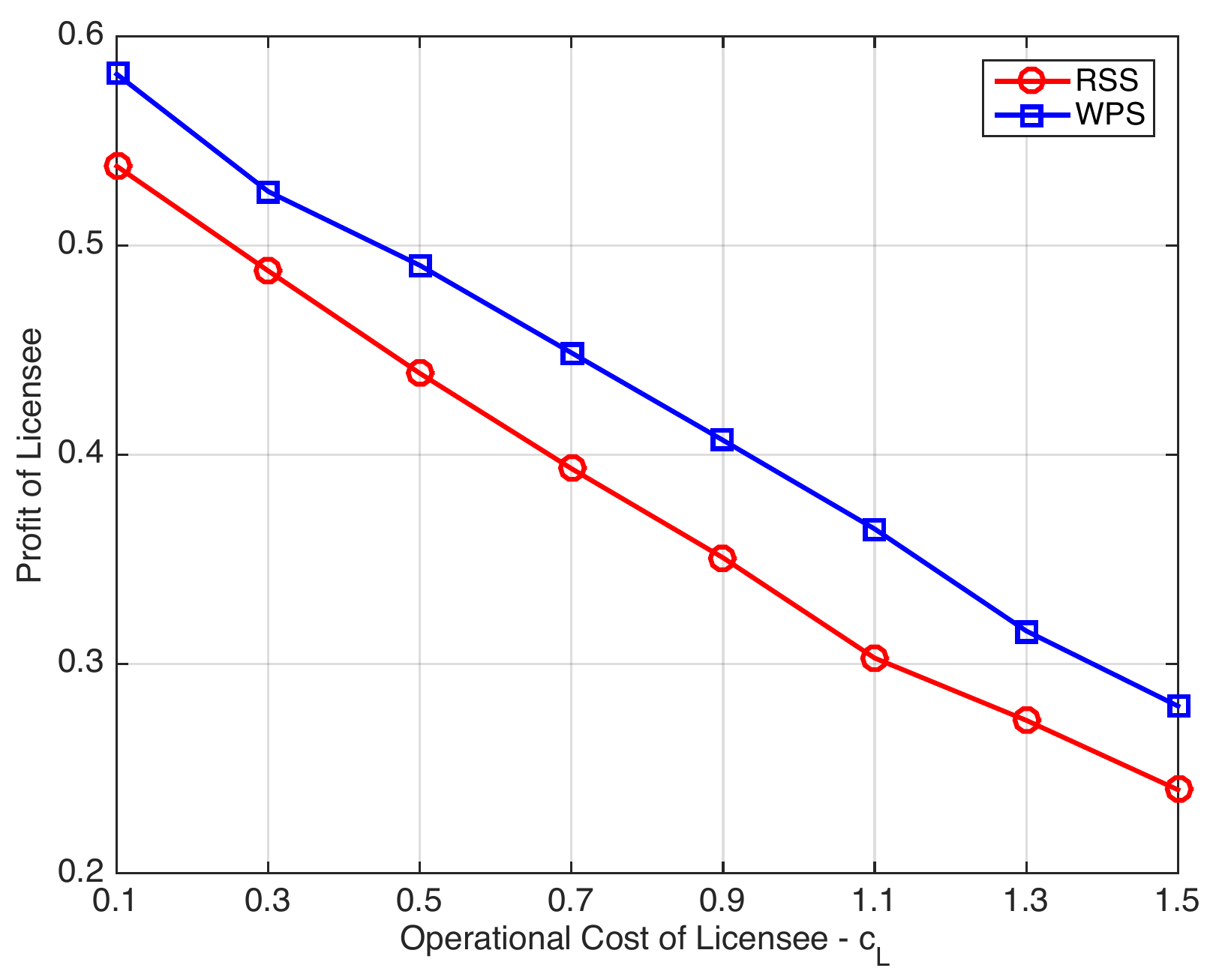}
	\includegraphics[width=2.2in]{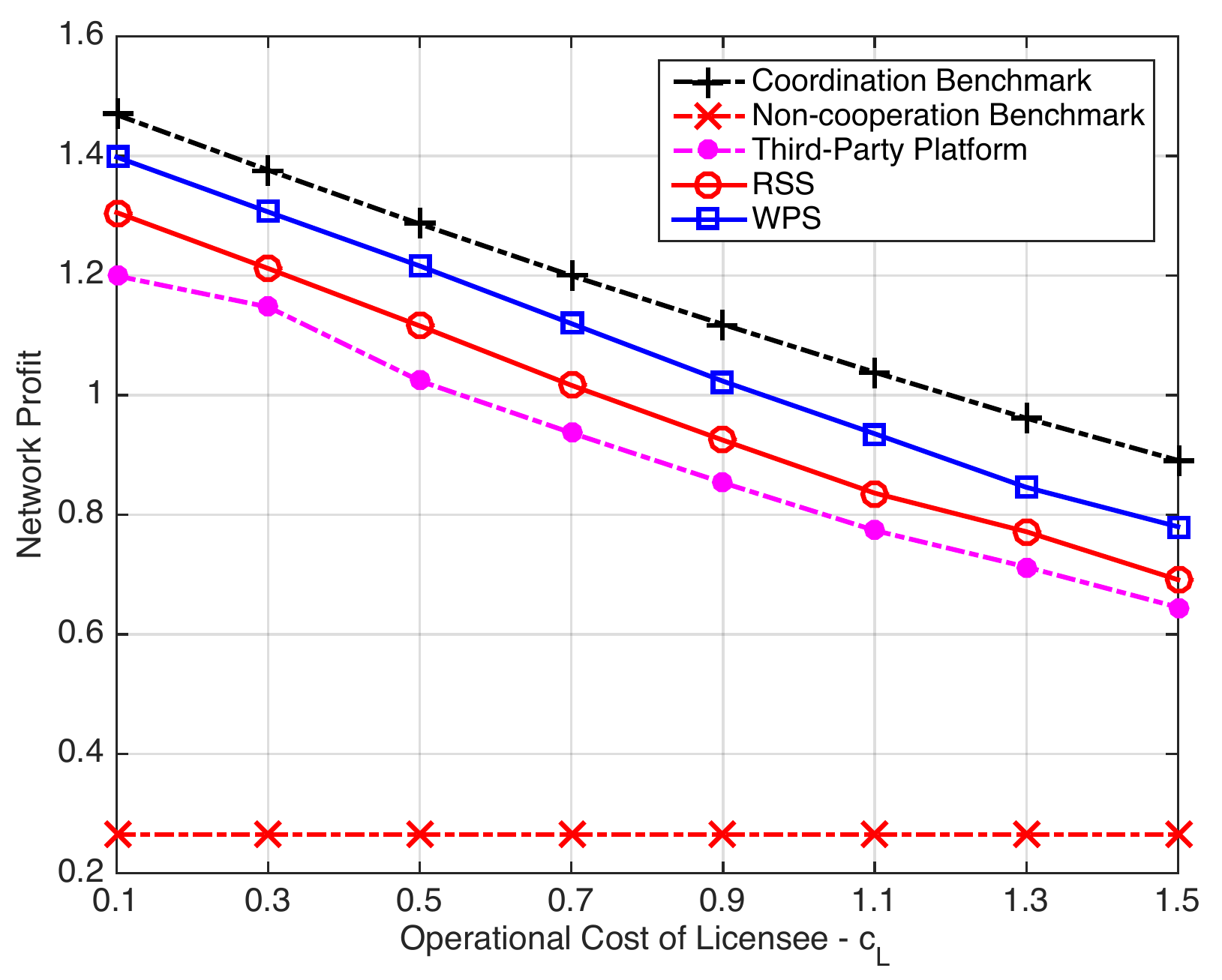}
%	\vspace{-2mm}
	\caption{(a) The \db's profit, (b) The \lh's profit, (c) Network profit vs the \lh's operational cost $\cl$ under the revenue sharing scheme (RSS) and the wholesale price scheme (WPS). }\label{fig:performance_vs_ca}
	%  \vspace{-7mm}
\end{figure*}

%==================================================
\subsection{The Impact of the Licensee's Operational Cost $\cl$}
Figure \ref{fig:performance_vs_ca} is similar as Figures \ref{fig:performance_vs_gamma}, except that here we focus on the impact of licensee's operational cost $\cl$ (which incorporates costs due to energy consumptions). We choose $\cl$ from $0.1$ to $1.5$, and fix the degree of network externality as $\lambda = 1.8$ (\emph{i.e.,} the positive network externality is dominant) and the quality of leasing service as $\RL = 6$.

We list the key observations from Figure \ref{fig:performance_vs_ca} as follows.

%\com{change the environment.}
\begin{OA}\label{observation:cost}
\hfill 	
\begin{description}
	\item[(a)]
	The database's and the licensee's profits achieved under both schemes decrease with the licensee's operational cost $\cl$ (Figures \ref{fig:performance_vs_ca}.a and  \ref{fig:performance_vs_ca}.b).
	%	\item
	%	WPS is a better choice for both the database and the licensee (Figure \ref{fig:performance_vs_ca}.a and \ref{fig:performance_vs_ca}.b).
	\item[(b)]
	%	Our proposed RSS and WPS outperform the non-cooperation scheme significantly, \emph{e.g.,} increasing the network profit by up to $80\%$ (Figure \ref{fig:performance_vs_L}.c).
	The non-coordination loss between the proposed two schemes and the coordination benchmark increases with the licensee's operational cost $\cl$ (Figure \ref{fig:performance_vs_ca}.c).
	\item[(c)]
	The performance gain between the proposed two schemes and the non-cooperation pure information market decreases with the licensee's operational cost $\cl$ (Figure \ref{fig:performance_vs_ca}.c).
\end{description}

\end{OA}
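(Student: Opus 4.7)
The plan is to support the three claims of Observation \ref{observation:cost} analytically, while acknowledging that they are stated as numerical observations from simulation. Because the equilibria $(\pl^*, \pa^*)$ and $(\Probl^*, \Proba^*)$ derived in Sections \ref{sec:layer2} and \ref{sec:layer1} generally do not admit closed-form expressions, I would support each part through envelope-theorem arguments, monotone comparative statics, and direct comparisons against the coordinated and pure-information-market benchmarks, rather than attempt explicit calculations.

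For part (a), I would first handle the licensee. Under either scheme, the licensee's payoff has the form $\Uslrs = (\pl - \cl)\Probl^{*}(1-\delta)$ or $\Uslwp = (\pl - \w - \cl)\Probl^{*}$, so $\partial \Usl / \partial \cl = -(1-\delta)\Probl^{*}$ (resp.\ $-\Probl^{*}$) pointwise. Applying the envelope theorem to the Stage II best-response problem (whose value function is well-defined by Theorems \ref{thrm:NE-existence} and \ref{them:uniq_wps}) shows that the licensee's equilibrium profit is non-increasing in $\cl$ for any fixed bargaining outcome. Passing to Stage I, since the disagreement payoffs $\Uslo = 0$ and $\Udbo = \pa^{\dag}\Proba^{\dag}(\pa^{\dag})$ are both independent of $\cl$, the Nash product in \eqref{eq:NBS-RS}--\eqref{eq:NBS-WP} is maximized at a $\delta^*(\cl)$ (resp.\ $\w^*(\cl)$) that preserves this monotonicity. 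For the database, the argument is indirect: a larger $\cl$ shrinks the licensee's feasible markup, which, by the supermodularity structure of the MSCG, shrinks $\Probl^{*}$ and hence both the commission revenue term ($\delta(\pl-\cl)\Probl^{*}$ under RSS, or $\w \Probl^{*}$ under WPS) and the bargaining surplus that the database can extract.

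For part (b), the coordination benchmark maximizes $\Urslrs + \Urdbrs$ jointly and thus upper-bounds the non-cooperative aggregate profit. The non-coordination loss equals the standard double-marginalization wedge between the competitive and cooperative markups. I would show this wedge grows with $\cl$ by a first-order expansion around the coordinated optimum: when $\cl$ is small, both players price close to the joint-profit-maximizing allocation (the distortion scales with the cost base), while as $\cl$ grows, the licensee's unilateral markup over $\cl$ becomes a larger absolute distortion relative to the efficient allocation. For part (c), the argument is the cleanest: the pure-information-market benchmark profit $\Udbo = \pa^{\dag}\Proba^{\dag}(\pa^{\dag})$ is \emph{independent} of $\cl$ because no leasing service is offered. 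Combined with part (a), which shows that $\Udb^*(\cl) + \Usl^*(\cl)$ is non-increasing in $\cl$, the performance gain $(\Udb^* + \Usl^*) - \Udbo$ is monotonically shrinking in $\cl$.

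The hardest step will be part (b). While parts (a) and (c) reduce to clean envelope and monotonicity arguments, formalizing that the non-coordination loss is \emph{strictly} monotone (as opposed to merely non-negative) requires controlling the comparative statics of both the Stage II equilibrium and the Stage I bargaining outcome jointly as $\cl$ varies. Without closed-form equilibria, one must lean on the supermodularity of the MSCG established in Theorem \ref{thrm:NE-existence} together with the concavity/convexity conditions on $\gy$ and $\fx$ (Assumptions \ref{assum:congestion}--\ref{assum:positive}) to sign the cross-partial derivatives of the Nash product with respect to $(\delta, \cl)$ or $(\w, \cl)$; this is where the main analytical effort lies.
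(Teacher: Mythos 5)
The paper does not actually prove Observation~\ref{observation:cost}: it is presented purely as a numerical finding read off Figure~\ref{fig:performance_vs_ca}, and the only justification offered is the three short intuitive paragraphs that follow it (a higher $\cl$ forces a higher leasing price and hence a smaller $\Probl$ and lower profits for both parties; a higher $\cl$ makes agreement in Stage~I harder, widening the gap to the coordinated benchmark; and a higher $\cl$ makes the integrated market degenerate toward the pure information market). Your proposal is therefore considerably more ambitious than what the paper does. In particular, your reduction in part~(c) --- the pure-information benchmark $\Udbo = \pa^{\dag}\cdot\Proba^{\dag}(\pa^{\dag})$ is independent of $\cl$, so monotonicity of the performance gain follows from monotonicity of the aggregate equilibrium profit established in part~(a) --- is a genuinely cleaner and more rigorous argument than the paper's informal story for that item.

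As an analytical proof, however, the proposal has real gaps. First, in part~(a) the envelope theorem applies to the value function of a single agent's optimization, not to an equilibrium payoff: when $\cl$ changes, the database's equilibrium market share $\Proba^*$ and price $\pa^*$ move, and the Stage~I bargaining outcome $\delta^*(\cl)$ or $\w^*(\cl)$ moves as well. The direct effect $-(1-\delta)\Probl^*$ you compute ignores both strategic channels, and you only assert, without argument, that the Nash bargaining solution ``preserves this monotonicity.'' Signing those indirect effects would require monotone comparative statics of the MSCG equilibrium in the parameter $\cl$ (e.g., supermodularity jointly in the strategies and $\cl$), which is established neither in the paper nor in your sketch. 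Second, part~(b) is essentially a restatement of the claim: the coordinated benchmark's profit also decreases with $\cl$, so you must show the noncooperative aggregate profit decreases strictly faster, and the proposed ``first-order expansion around the coordinated optimum'' is never carried out; double marginalization guarantees the loss is nonnegative, not that it is increasing in $\cl$. Given that the statement is supported in the paper only by simulation, your write-up should be framed as supporting intuition (which it does better than the paper for parts~(a) and~(c)) rather than as a proof, since a genuine proof would require structure beyond Assumptions~\ref{assum:congestion}--\ref{assum:positive}.
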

In the following, we discuss the intuitions of each observation in detail.

\emph{Observation \ref{observation:cost}(a):}
As the licensee's operational cost increases, the \lh~needs to increase its service price, which results in the decrease of the \lh's market share and the profit.
Meanwhile,
%As a higher operational cost drives the equilibrium price up for the \lh, the \lh~market share decreases.
%%which reduces the market share of the \lh.
%In such case,
the \db's revenue earned from the \lh~decreases, and thus leads to the decrease of the \db's profit.

\emph{Observation \ref{observation:cost}(b):}
Due to the increase of the \lh's operational cost, the \lh~may not be willing to provide the leasing service. Hence, it is more difficult to reach an agreement during the negotiation process in Stage I, which results in a high non-coordination loss.

\emph{Observation \ref{observation:cost}(c):}
As the \lh~would raise the service price in order to compensate its revenue loss due to its higher operational cost, the leasing service becomes less attractive to {\eus}. With the decreased benefit brought by the spectrum market, the proposed schemes becoming increasingly similar as the pure information market.

\begin{figure}
	\centering
%	\vspace{1mm}
	\includegraphics[width=3.4in]{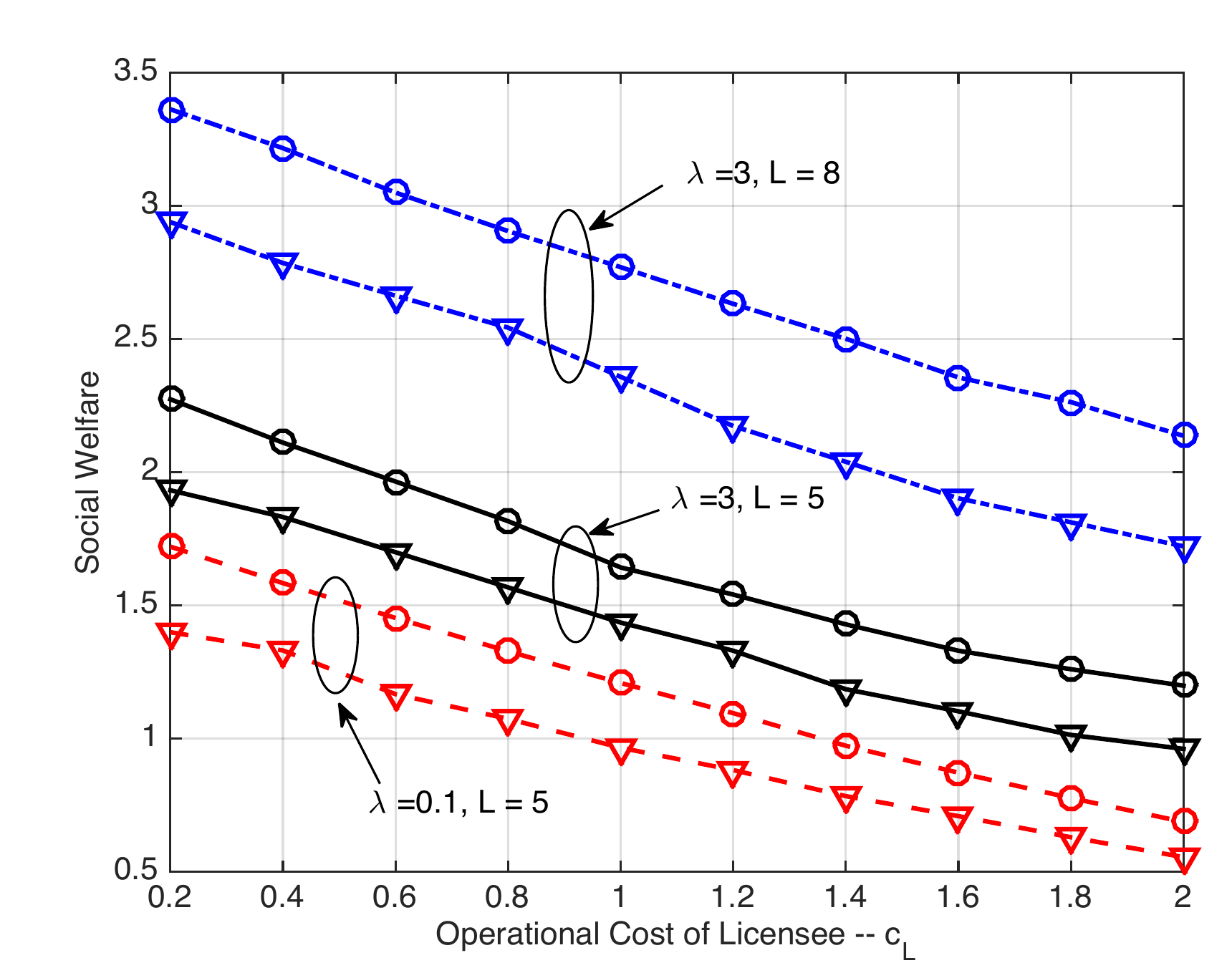}
	%  \includegraphics[width=2.2in]{fig_DB_diff_L_mix_new}
	%  \vspace{-3mm}
	\caption{Social Welfare vs $\cl$ under both revenue sharing scheme (RSS) and wholesale price scheme (WPS).}\label{fig:SW-vs-diff-cl}
	%  \vspace{-7mm}
\end{figure}

%%%%======================================================================

\subsection{Social Welfare}
%\com{Use a short paragraph to summarize the key messages of this subsection, before going to figure details. }
Figure \ref{fig:SW-vs-diff-cl} shows the social welfare, i.e, the summation of the payoffs of the \lh, the \db, and the \eus, under both RSS and WPS.
% under different degree of network externality, the quality of leasing service, and the \lh's operational cost.
The $x$-axis represents the \lh's operational cost $\cl$ from 0.2 to 2. The lines with circle markers denote the performance of RSS, and the lines with triangle markers denote the performance of WPS. Comparing different groups of curves, we can see the impact of the degree of network externality ($\lambda$) and the quality of leasing service ($\RL$).

\begin{OA}
	RSS always outperforms WPS in terms of social welfare.
\end{OA}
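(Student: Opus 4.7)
The plan is to first reduce the social welfare to a function of the market-share profile alone, and then use the structural difference between RSS and WPS to compare the induced market shares at their respective equilibria.

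First, I would observe that the payments $\pa\Proba$ and $\pl\Probl$ are pure transfers from users to the operator side, and the commission term $\delta(\pl-\cl)\Probl$ (under RSS) or $\w\Probl$ (under WPS) is a transfer internal to the operator side. Hence all of them cancel when I sum the payoffs of the users, the licensee, and the database. The social welfare reduces to
\[
\mathrm{SW}(\Probl,\Proba) \;=\; \int_0^1 \theta\, R_{\l^*(\theta)}\, d\theta \;-\; \ca\,\Proba \;-\; \cl\,\Probl,
\]
which depends only on the equilibrium market shares. The comparison therefore boils down to which scheme induces a more socially efficient $(\Probl^*,\Proba^*)$.

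Second, I would expose a classical double-marginalization gap between the two schemes. Under RSS the licensee's payoff $(1-\delta)(\pl-\cl)\Probl$ is a positive multiple of the no-commission profit, so the licensee's best-response price $\pl^*(\pa)$ is independent of $\delta$. Under WPS, the wholesale fee acts as an extra marginal cost to the licensee, who maximises $(\pl-\cl-\w)\Probl$; its best response is therefore strictly higher whenever the bargained $\w^{*}>0$. A higher $\pl$ means a smaller equilibrium $\Probl$, leaving more users on the congested unlicensed channels. By Assumption \ref{assume:utility_relationship} and the convexity of $\fx$, $\mathrm{SW}$ is increasing in $\Probl$ on the whole region where $\Probl$ lies below the coordinated benchmark $\Probl^{\mathrm{co}}$, so this channel alone already favours RSS.

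Third, I would close the argument with a monotone comparative-statics step. Theorems \ref{thrm:NE-existence} and \ref{them:uniq_wps} establish that the MSCG is supermodular (after the standard sign transformation on the licensee's action), so replacing the licensee's effective marginal cost $\cl$ by $\cl+\w$ shifts the unique equilibrium monotonically. Invoking Topkis's monotonicity theorem gives $\Probl^{\mathrm{WPS}}\le\Probl^{\mathrm{RSS}}\le\Probl^{\mathrm{co}}$. Plugging the ordering into the reduced form of $\mathrm{SW}$ from Step~1 and using the monotonicity established in Step~2 then yields $\mathrm{SW}^{\mathrm{RSS}}\ge\mathrm{SW}^{\mathrm{WPS}}$, with strict inequality whenever the bargained $\w^{*}>0$.

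The hard part will be controlling the simultaneous movement of $\Proba$: the database's best-response to $\pl$ also shifts between the two schemes because its own objective contains different commission terms, so the net effect on $\Proba$ is ambiguous a priori. I would handle this by writing the equilibrium conditions of both MSCGs as fixed points of the same mapping with a cost parameter that changes only for the licensee, and then using the supermodular monotonicity to sign the joint change in $(\Probl^*,\Proba^*)$. If an analytical comparison proves intractable for the general functions $\fx$ and $\gy$, I would fall back on the concrete parametric family of Section \ref{sec:simulaton_function}, where $\fx$ and $\gy$ are explicit power functions, to obtain a closed-form inequality and leave the general case to the numerical evidence already reported.
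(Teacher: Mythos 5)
You should first be clear about what the paper actually does here: this statement is labelled an \emph{Observation} precisely because it is read off the simulation curves in Figure~\ref{fig:SW-vs-diff-cl} for the particular parametric family of Section~\ref{sec:simulaton_function}, and the only supporting argument the paper offers is the one-sentence intuition that RSS softens the price competition so that more users end up on the leasing service. Your proposal therefore attempts something strictly stronger than what the paper establishes, and it has to be judged as a free-standing analytical argument.

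Judged that way, it has genuine gaps. Your Step~1 (all payments and commissions are transfers, so social welfare reduces to a function of the equilibrium shares minus $\ca\Proba+\cl\Probl$) is correct and is a useful reduction the paper never makes explicit. But the claim that $\mathrm{SW}$ is increasing in $\Probl$ below the coordinated level is asserted rather than proven: the user-surplus integral depends on $\Proba$ through $\gy(\Proba)$ and on both shares through the thresholds $\thlb,\thab,\thla$ that determine which types occupy which service, so $\mathrm{SW}$ cannot be treated as a one-dimensional monotone function of $\Probl$. The Topkis step is the more serious problem. RSS and WPS are \emph{not} the same parameterized supermodular game with the licensee's marginal cost shifted from $\cl$ to $\cl+\w$: the database's objective also changes, from $(\pa-\ca)\Proba+\delta(\pl-\cl)\Probl$ to $(\pa-\ca)\Proba+\w\Probl$, and the paper's own discussion (and Observation~\ref{observation:network}) stresses that this changes the database's aggressiveness and hence both equilibrium shares. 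Your closing paragraph concedes that the movement of $\Proba$ is ambiguous, but the proposed fix --- writing both equilibria as fixed points of ``the same mapping with a cost parameter that changes only for the licensee'' --- is exactly what the structure does not permit. Finally, the comparison must be made at the bargained $\delta^*$ and $\w^*$, which solve two different Nash bargaining problems \eqref{eq:NBS-RS} and \eqref{eq:NBS-WP} that the paper itself can only solve by one-dimensional numerical search, so no closed-form handle on their relative effects is available; the fallback to the power-function family does not rescue this, because the Stage~II and Stage~III equilibria there are still defined only implicitly. In short, Step~1 is sound, Steps~2 and~3 do not close, and the statement remains what the paper says it is: a numerical observation, not a theorem.
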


Intuitively, more \eus~are attracted by the leasing service under RSS, due to less market competition between the database and the licensee. Hence RSS leads to a higher social welfare.

We also observe that the social welfare increases with leasing spectrum quality $\RL$ and decreases with the licensee's operational cost $\cl$. This is very intuitive, as the social welfare increases when the overall system is more effective (i.e., a larger $\RL$ or with a smaller $\cl$).

\section{Conclusion}\label{sec:con}
%\com{First sentence too long.}
Relying database to detect the dynamically changing radio environment
can not only  reduce the energy consumption  of the unlicensed radio devices, but also
can effectively help the devices
%make the best choices
to balance the energy consumption and communication quality.
Hence, the database-assisted TV white space network is a promising commercial applications of green cognitive communication technology.
As the success of such a network replies on a carefully designed  business model, we propose a database-provided integrated spectrum and information market, and analyzed the interactions among the geo-location database, the licensee, and the unlicensed users systematically.
We also analyze how the negative and positive network externalities (of the information market) affect these interactions.
%\rev{Our work not only captures the performance gain introduced by the integrated market, but also characterizes what commission charge scheme is better under different degrees of network externality.}\com{Need to emphasize more key messages more clearly. }
Our work characterizes what commission charge scheme is better in terms of maximizing database's and licensee's own profits under different degrees of network externality. Specifically, when the negative network externality is dominant, RSS is a better choice for the database, while WPS is a better choice for the licensee. When the positive network externality is dominant, WPS is a better choice for both the database and the licensee.

There are several possible directions to extend this work.
First, we have assumed that the database and the licensee have the same bargaining power.
Asymmetric bargaining power will lead to a different formulation and result.
%However, in practice, different entities may have their own advantage, which results in the unequal bargaining power during the negotiation process.
Second, we have assumed that the representative licensee has enough {\lchs} for leasing. A more practical model is to consider the scenario with limited number of {\lchs}.
In this case, the licensee needs to increase its price  at the equilibrium, so as to reduce the excess demand to the unlicensed channels. When this happens, the database may choose to increase its information price accordingly. Hence, both prices under the equilibrium are likely to be higher than those derived in our current model.
The detailed analysis for the more general model, however, is much more challenging than the one for our current model.
%However, we would like to mention that the detailed analysis for the model with such a licensed channel limitation can be much more challenging than that without the licensed channel limitation.
%relaxing this assumption would significantly complicate the analysis and our existing theoretic framework cannot be directly applied to such scenario.
This is because with licensed channel limitation, the equivalent market share competition game in Stage II is no longer a super-modular game and the current analysis no longer applies.
Third, we can further investigate  an oligopoly scenario with multiple databases and licensees. If there are multiple databases that sell information of the unlicensed TV channels and multiple licensees who have the under-utilized licensed TV channels to lease, an end-user needs to decide from whom to buy the service in order to maximize its utility. Such a user subscription model in Stage III can be characterized by the Fisher market model \cite{brainard2005compute,jain2010eisenberg}, by considering the unlicensed channel information as a product of a database and the licensed TV channel as a product of a licensee. And the equilibrium product allocation can be captured by Eisenberg-Gale convex program \cite{eisenberg1959consensus}.
However, as shown in Section III.C, the information market has the positive network externality where the value of a database's product (\emph{i.e.,} the unlicensed channels' information) increases with the number of buyers. Hence, the market equilibrium in Stage III requires new analytical method that incorporates the Fisher market model with the positive network externality.
Besides, such an oligopoly scenario involves the formulation of a much more challenging one-to-many bargaining problem in Stage I (e.g., \cite{lin2014bargaining,Yu2015bargaining}).
This poses a range of new questions:  Should the database bargain with all licensees  or just  a subset of licensees?
Should the database bargain with  licensees sequentially or concurrently? If bargaining sequentially, how should the database choose the bargaining sequence to maximize its payoff?
All these questions deserve careful study in our future work.

%Third, we can further investigate  an oligopoly scenario with multiple databases (hence multiple platforms), which involves the formulation of a much more challenging one-to-many bargaining problem in Stage I (e.g., \cite{lin2014bargaining,Yu2015bargaining}).
%This poses a range of new questions:  Should the database bargain with all licensees  or just  a subset of licensees?
%Should the database bargain with  licensees sequentially or concurrently? If bargaining sequentially, how should the database choose the bargaining sequence to maximize its payoff?
%All these questions deserve careful study in our future work.

%\vspace{-0.5mm}
%\input{Sec_reference}
%!TEX root = main_hybrid_market.tex
%SourceDoc main_hybrid_market.tex

\appendix
%\input{Section_Appendix_Proofs_HySIM}
%!TEX root = main_hybrid_market.tex
%SourceDoc main_hybrid_market.tex

%\newpage

%\noindent
%\textbf{Technical Report} for ``\emph{HySIM: A Hybrid Spectrum and Information Market for TV White Space Networks}''
%%\\
%%\textbf{Authors}: Yuan luo, Lin Gao, and Jianwei Huang

\section{Appendix}\label{sec:appendix}

\subsection{Property of Information Market}
In this section, we will discuss the negative and positive network externalities in the information market with a concrete example.
%: the negative network externality and the positive network externality.
%%For illustration purpose, we explicit define the advance information as those proposed by Luo \emph{et al.} in \cite{luo2014wiopt,luo2014SDP}.
%In the following,
We first define the advanced information as the interference level on each channel, then we characterize the information value to the users. Based on that, we can further characterize the properties of the information market.

\subsubsection{Interference Information}
\label{sec:interference_info}
For each {\eu} $n\in \Nset$, it will experience
%operating on the TV \ch, each channel $\k$ is associated with
an \emph{interference level}, if it transmits on an unlicensed TV channel $\k$,
denoted by  $\InfTot_{\n,\k}$. This reflects the aggregate interference from all other nearby devices (including TV stations
and other {\eus}) operating on this channel.
Due to the time varying nature of wireless channels as well stochastic user mobilities and activities,  we consider
%fast changing of wireless channels and the uncertainty of {\eus}' mobilities and activities, the interference
interference $\InfTot_{\n,\k}$ as a random variable with the following assumption:
%. We impose assumptions on the interference $\InfTot_{\n,\k}$ as follows.
\begin{assumption}\label{assum:iid}
For each {\eu} $n\in \Nset$,
the
%each channel $\k$'s
interference level $\InfTot_{\n,\k}$ experienced on each channel $\k$,
is \textbf{temporal-independence} and \textbf{frequency-independence}.
\end{assumption}

This assumption shows that
(i) the interference $\InfTot_{\n,\k}$ on channel $\k$ is independent identically distributed (i.i.d.) at different times, and (ii)
the interferences on different channels, $\InfTot_{\n,\k}, \k\in\Kset$, are also i.i.d. at the same time.
\footnote{Note that the iid assumption is a reasonable approximation of the practical scenario, where all channel quality distributions are the same but the realized instant qualities of different channels are different (e.g., \cite{chang2007optimal,jiang2009optimal}). Such i.i.d. assumptions allow us to focus on the system level behaviors and derive useful insights for the next step more detailed interference modeling.}
%\footnotesc{{Note that the i.i.d. assumption is a reasonable approximation of the practical scenario. This is because each {\eu} with basic service will randomly choose one \ch, hence the number of such {\eus} per channel will follow the same distribution. For a {\eu} with advanced service, it will go to the \ch~with the minimum interference as reported by the database. If the interference among each pair of users is i.i.d. over time, \rev{then statistically the number of such users in each channel will also follow the same distribution}\com{why?}. Note that even though channels have the same quality distributions, their realized channel qualities are different. \com{How do we reach this conclusion based on the previous discussions? }Hence, the advanced information provided by the \db~is still valuable as such an advanced information is accurate interference information. Such i.i.d. assumptions allow us to focus on the system level behaviors and derive useful insights for the next step more detailed interference modeling. } }
As we concern about a generic \eu~$n$, \textbf{we will abuse the notation and omit the {\eu} index $n$ (\emph{e.g.,} write $\InfTot_{\n,\k}$ as $\InfTot_{\k}$) whenever there is no confusion caused.}
We will use
$H_{\InfTot}(\cdot)$ and $h_{\InfTot}(\cdot)$ to denote the common cumulative distribution function (CDF) and probability distribution function (PDF) of $\InfTot_{\k}$, $\forall \k\in\Kset$.\footnotesc{In general, we will conventionally  use $H_X(\cdot)$ and $h_X(\cdot)$ to denote the CDF and PDF of a random variable $X$, respectively.}~~~~

A {\eu}'s experienced interference $\InfTot_{\k}$ on a \ch~$\k$ consists of the following three components:
\begin{enumerate}
\item
$\InfTV_{\k}$: the interference from licensed TV stations;
\item
$\InfEU_{\k,m}$: the interference from another {\eu} $m$ operating on the same channel $k$;
\item
$\InfOut_{\k}$: any other interference from outside systems.
\end{enumerate}
The total interference experienced by this user on channel $k$ is
$
\InfTot_{\k} = \InfTV_k + \InfEU_k + \InfOut_k
$, where  $\InfEU_k  \triangleq \sum_{m \in \Nkset} \InfEU_{\k,m}$ is the total interference from all other {\eus} operating on channel $k$ (denoted by $\Nkset$).
Similar to $\InfTot_{\k}$, we also assume that  $ \InfTV_k , \InfEU_k , \InfEU_{\k,m}$, and $ \InfOut_k$ are random variables with \emph{temporal-independence} (\emph{i.e.,} i.i.d. across time) and {frequency-independence} ({i.e.,} i.i.d. across frequency).
We further have the following assumption:
\begin{assumption}
The interference $\InfEU_{\k,m}$ from user $m$ operating on the channel $\k$ is \textbf{user-independence}, {i.e.,} $\InfEU_{\k,m}$ are i.i.d. across user index $m$.
\end{assumption}

%\rev{We further assume that $\InfEU_{\k,m}$ is \emph{user-independence}, \emph{i.e.,} $\InfEU_{\k,m}$ are i.i.d. across user index $m$.}\com{write it as an assumption.}
It is important to note that \textbf{different {\eus} may experience different interferences $\InfTV_k$ (from TV stations), $ \InfEU_{\k,m} $ (from another \eu~$m$ operating on the same \ch), and $ \InfOut_k$ (from outside systems) on a channel $k$, as we have omitted the \eu~index $n$ for all these notations for clarity.}

Next we discuss these interferences in more details.

\begin{itemize}
\item
The \db~is able to compute the interference $\InfTV_k$ from TV stations to every {\eu} (on channel $k$), as it knows the locations and channel occupancies of all TV stations.~~~~

\item
The \db~cannot compute the interference
$\InfOut_k$ from outside systems, due to the lack of outside interference source information.
Thus, the interference $\InfOut_{\k}$ will \emph{not} be included in a database's advanced information sold to {\eus}.~~~~~~~~

\item
The ability of the \db~to compute the interference $\InfEU_{\k,m}$, depends on
% from another {\eu} $m$, depending on
 whether {\eu} $m$ subscribes to the \db's advanced service.
Specifically, if {\eu} $m$ subscribes to the advanced service, the \db~can predict use $m$'s channel selection (since the {\eu} will choose the channel with the lowest interference level indicated by the \db~in the advanced information), and thus can compute its interference to any other \eu~ (based on the locations of these two users).
However, if {\eu} $m$ only chooses the \db's basic service, the \db~cannot predict its channel selection, and thus cannot compute its interference to other \eus.~~~~
\end{itemize}

For convenience, we denote $\Nkset^{[\A]}$ as the set of {\eus} operating on \ch~$k$ and subscribing to the \db's advance service (\emph{i.e.,} those choosing the strategy $\l = \A$), and $\Nkset^{[\B]}$ as the set of {\eus} operating on channel $k$ and choosing the \db's basic service (\emph{i.e.,} those choosing the strategy $\l = \B$).
That is, $\Nkset^{[\A]} \bigcup \Nkset^{[\B]} = \Nkset $.
%\com{write as two bullets.}
Then, for a particular {\eu},
\begin{itemize}
  \item its experienced interference\footnote{We assume that this user has not decided which service to choose yet, and hence it is not in the set of $\mathcal{N}_k$.} (on channel $\k$)
\textbf{known by the \db}  is
  \begin{equation}\label{eq:known_inf}
	\begin{aligned}
	\textstyle
	\InfKnown_{\k} \triangleq \InfTV_{\k} + \sum_{m \in \Nkset^{[\A]}} \InfEU_{\k,m},
	\end{aligned}
   \end{equation}
	which contains the interference from TV licensees and all {\eus} (operating on channel $k$)  subscribing to the \db's advanced service.
  \item its experienced interference (on channel $k$) \textbf{\emph{not} known by the \db} is
  \begin{equation}\label{eq:unknown_inf}
	\begin{aligned}
	\textstyle
	\InfUnknown_{\k}  \triangleq \InfOut_{\k} + \sum_{m \in \Nkset^{[\B]}} \InfEU_{\k,m},
	\end{aligned}
   \end{equation}
which contains the interference from outside systems and all {\eus} (operating on channel $k$) choosing the \db's basic service.
\end{itemize}

%\textbf{known by the \db} is
%% (and thus will be included in the database $\dbb_i$'s advanced service to this WSD) is
%\begin{equation}\label{eq:known_inf}
%\begin{aligned}
%\textstyle
%\InfKnown_{\k} \triangleq \InfTV_{\k} + \sum_{m \in \Nkset^{[\A]}} \InfEU_{\k,m},
%\end{aligned}
%\end{equation}
%which contains the interference from TV licensees and all {\eus} (operating on channel $k$)  subscribing to the \db's advanced service.
%The \eu's experienced interference (on channel $k$) \textbf{\emph{not} known by the \db} is
%% (and thus will not be included in the database $\dbb_i$'s advanced information) is
%\begin{equation}\label{eq:unknown_inf}
%\begin{aligned}
%\textstyle
%\InfUnknown_{\k}  \triangleq \InfOut_{\k} + \sum_{m \in \Nkset^{[\B]}} \InfEU_{\k,m},
%\end{aligned}
%\end{equation}
%which contains the interference from outside systems and all {\eus} (operating on channel $k$) choosing the \db's basic service.
%\com{bullet}

Obviously, both $\InfUnknown_{\k}$ and $\InfKnown_{\k}$ are also random variables with temporal- and frequency-independence.
Accordingly, the total interference on {\ch} $\k$ for a {\eu} can be written as
$\textstyle
\InfTot_{\k} = \InfKnown_{\k} + \InfUnknown_{\k}.$~~~~~~

\textbf{Since the \db~knows only $\InfKnown_{\k}$, it will provide this information (instead of the total interference $\InfTot_k$) as the {advanced service} to a subscribing {\eu}.}
It is easy to see that the more {\eus} subscribing to the \db's advanced service, the more information the \db~knows, and the more accurate the \db~information will be.

Next we can characterize the accuracy of a database's information explicitly. Note that $\Proba$ and $\Probl$ denote the fraction of {\eus} choosing the advanced service and leasing licensed spectrum, respectively. Moreover, $(1 - \Proba - \Probl)$ denotes the fraction of {\eus} choosing the basic service.
Hence, there are $(1 - \Probl) \cdot \N$ {\eus} in the network that we consider operating on the (unlicensed) \chs.
Due to Assumption \ref{assum:iid}, it is reasonable to assume that each channel $k\in\Kset$ will be occupied by an average of $\frac{\N}{\K} \cdot ( 1 - \Probl)$ {\eus}.
%Let $\Prob_l$ denote the percentage of {\eus} subscribing to the advanced service of database $\dbb_l$ (called the \emph{market share} of database $\dbb_l$).
Then, among all $\frac{\N}{\K} \cdot ( 1 - \Probl)$ {\eus} operating on channel $k$, there are, \emph{on average}, $\frac{\N}{\K}\cdot\Proba$ {\eus}  subscribing to the \db's advanced service, and $\frac{\N}{\K}\cdot ( 1 - \Proba - \Probl )$ {\eus} choosing the \db's basic service.
That is, $| \Nkset | = \frac{\N}{\K} \cdot ( 1 - \Probl)$, $| \Nkset^{[\A]} | = \frac{\N}{\K}\cdot\Proba$, and $|\Nkset^{[\B]} | = \frac{\N}{\K}\cdot(1-\Proba-\Probl)$.\footnotesc{{Note that the above discussion is from the aspect of expectation, and in a particular time period, the realized numbers of {\eus} in different channels may be different.}}
Finally, by the {user-independence} of $\InfEU_{\k,m}$, we can immediately calculate the distributions of $ \InfKnown_{\k}$ and $ \InfUnknown_{\k}$ under any given market share $\Proba$ and $\Probl$ via (\ref{eq:known_inf}) and (\ref{eq:unknown_inf}).

\subsubsection{Information Value}
Now we evaluate the value of the \db's advanced information to {\eus}, which is
reflected by the {\eu}'s benefit (utility) that can be achieved from this information.
Based on this, we can characterize the properties of functions $\fx$ and $\gy$.
%\com{Mention that our goal is to derive f and g functions. }

We first consider the expected  utility of a {\eu} when choosing the \db's basic service (\emph{i.e.,} $\l=\B$).
In this case, the {\eu} will randomly choosing a \ch~based on the information provided in the free basic service. We use $\R_0$ to denote the rate of such a user, which is a function of the number of users choosing the basic and advanced services, $1-\Probl$:
%and its expected data rate is
\begin{equation}\label{eq:rate-random-fixed}
\begin{aligned}
\textstyle
\R_0( 1 - \Probl)  \textstyle = \Ex_{Z} [\rt(\InfTot)] = \int_{z} \rt(z) \mathrm{d} H_{\InfTot}(z).\\
%\RB   \textstyle \triangleq \fx = \ut(\R_0) = \ut\left(\int_{z} \rt(z) \mathrm{d} F_{\InfTot}(z)\right),
\end{aligned}
\end{equation}
%where $\R_0$ is the expected data rate and is given as follows.
%\begin{equation}\label{eq:rkgs}
%\begin{aligned}
%& \R_0  \textstyle = \Ex_{Z} [\rt(\InfTot)] = \int_{z} \rt(z) \mathrm{d} F_{\InfTot}(z),
%\end{aligned}
%\end{equation}
Here $\rt(\cdot)$ is the transmission rate function (\emph{e.g.,}
the Shannon capacity) as the function of the interference (assuming a fixed transmission power and channel between the transmitter and the intended receiver).
As shown in Section \ref{sec:interference_info}, each channel $k\in\Kset$ will be occupied by an average of $\frac{\N}{\K} \cdot ( 1 - \Probl)$ {\eus} based on the Assumption \ref{assum:iid}. Hence,
$\R_0( 1 - \Probl) $ depends only on the distribution of the total interference $\InfTot_k$, and thus depends on the fraction of {\eus} operating on \chs~(\emph{i.e.,} $1 - \Probl$). Then the expected utility provided by the basic service is
\begin{equation}\label{eq:utility-random-fixed}
\begin{aligned}
\textstyle
\RB(1 - \Probl) = \textstyle \ut\bigg(\R_0( 1 - \Probl)\bigg),
% = \ut\left(\int_{z} \rt(z) \mathrm{d} F_{\InfTot}(z)\right),
\end{aligned}
\end{equation}
where $\ut(\cdot)$ is the utility function of the \eu. We can easily check that
the more {\eus} operating on the \chs, the higher value of $\InfTot_k$ is, and thus the lower expected utility provided by the basic service. Hence, the basic service's expected utility reflects the congestion level of the \chs.
We use the function $\fx(\cdot)$ to characterize the congestion effect and have $\fx(1 - \Probl) = \RB( 1 - \Probl)$.
%while not on the specific distributions of $\InfKnown_k$ and $\InfUnknown_k$. This implies that the accuracy of the \db's information does not affect the utilities of those {\eus} choosing the basic service. However, as t

Then we consider the expected utility of a {\eu} when subscribing to the \db's advance service.
In this case, the {\db} returns the interference information $\{\InfKnown_{k}\}_{k\in\Kset}$ to the {\eu} subscribing to the advanced service, together with the basic information such as the available channel list.
A rational {\eu} will always choose the channel with the minimum $\InfKnown_{k}$ (since $\{\InfUnknown_{k}\}_{k\in\Kset}$ are i.i.d.).
Let $\InfKnownMin^{[l]} =  \min\{ \InfKnown_{1},  \ldots, \InfKnown_{K} \} $ denote the minimum interference indicated by the \db's advanced information.
Then, the actual interference experienced by a {\eu} (subscribing to the \db's advanced service) can be formulated as the sum of two random variables, denoted by $\InfTotA = \InfKnownMin + \InfUnknown$. Accordingly, the {\eu}'s expected data rate under the strategy $\l = \A$ can be computed by
\begin{equation}\label{eq:rate-pay-fixed}
\begin{aligned}
\textstyle
\Ra(\Proba, \Probl) & \textstyle = \Ex_{\InfTotA} \big[ \rt \left( \InfTotA \right) \big] = \int_z \rt(z)  \mathrm{d} H_{\InfTotA}(z),
%\\
%\RA & \textstyle  \triangleq \ut\left(\Ra\right) = \ut\left(\int_z \rt(z)  \mathrm{d} F_{\InfTotA}(z)	\right) ,
\end{aligned}
\end{equation}
where
$H_{\InfTotA}(z)$ is the CDF of $\InfTotA$. It is easy to see that $\Ra$ depends on the distributions of $\InfKnown_k$ and $\InfUnknown_k$, and thus depend on the market share $\Probl$ and $\Proba$.
Thus, we will write $\Ra$ as $\Ra(\Probl,\Proba)$. Accordingly, the advanced service's utility is:
 \begin{equation}\label{eq:utility-pay-fixed}
\begin{aligned}
%\Ra & \textstyle = \Ex_{\InfTotA} \big[ \rt \left( \InfTotA \right) \big] = \int_z \rt(z)  \mathrm{d} F_{\InfTotA}(z),
%\\
\textstyle
\RA(\Probl,\Proba) & \textstyle  \triangleq \ut\bigg( \Ra( \Probl,\Proba ) \bigg)
%= \ut\left(\int_z \rt(z)  \mathrm{d} F_{\InfTotA}(z)	\right) ,
\end{aligned}
\end{equation}

Note that the congestion effect also affects the value of $\RA$. However, compared with the utility of {\eu} choosing basic service, the benefit of a {\eu} subscribing to the \db's advanced information comes from
$\InfKnownMin^{[l]}$, \emph{i.e.,} the minimum interference indicated by the \db's advanced information. As the value of $\InfKnownMin^{[l]}$ depends on $\Proba$ only, we can get the approximation $\RA = \fx(1 - \Probl) + \gy(\Proba)$, where function $\gy(\cdot)$ characterizes the benefit brought by $\InfKnownMin^{[l]}$ and denotes the positive network effect.

By further checking the properties of $\RB(1 - \Probl)$ and $\RA(\Probl,\Proba)$, we have
Assumption \ref{assum:congestion} and Assumption \ref{assum:positive} in Section \ref{sec:network_externality}.
%Specifically,
%function $\fx(\cdot)$ is non-negative, non-increasing, convex, and continuously differentiable, and function $\gy(\cdot)$ is non-negative, non-decreasing, concave, and continuously differentiable.

%Assumption \ref{assum:congestion} and Assumption \ref{assum:positive} in Section \ref{sec:network_externality}.

%\subsection{Simulation Results}
%In this section, we provide simulations to evaluate the Nash bargaining solution of both RSS scheme and WPS scheme, and the corresponding equilibrium retail prices.

%\appendix
%\section{Appendix}\label{sec:appendix}

%\vspace{20mm}

\subsection{Proof for Lemma \ref{lemma:market-share}}\label{lemma:market-share-proof}
\begin{proof}
%	In order to prove Lemma \ref{lemma:market-share}, we first need to find the three user type thresholds which can help us compute the market share of each service.
	
	By solving (\ref{eq:utility_function}), we can get the three thresholds as:
	\begin{equation}
	\textstyle \thlb \eq \frac{ \pl}{ \RL- \RB(\Probl) },
	~~~~
	\thab \eq \frac{ \pa}{ \RA(\Proba,\Probl) - \RB(\Probl) },
	~~~~
	\thla \eq \frac{\pl-\pa}{\RL - \RA(\Proba,\Probl)}.
	\end{equation}
	
%	$$
%	\thlb \eq \frac{ \pl}{ \RL- \RB(\Probl) },
%	~~~~
%	\thab \eq \frac{ \pa}{ \RA(\Proba,\Probl) - \RB(\Probl) },
%	~~~~
%		\thla \eq \frac{\pl-\pa}{\RL - \RA(\Proba,\Probl)}.
%	$$

	When  $\thlb > \thab$, it is easy to check that
	$$
	\textstyle \thla - \thlb = \frac{\pl(\RA(\Proba,\Probl)-\RB(\Probl)) - \pa(\RL-\RB(\Probl))}{(\RL-\RA(\Proba,\Probl))(\RL-\RB(\Probl))} > 0 ,
	$$
	since $\RL > \RA(\Proba,\Probl) $, $\RL > \RB(\Probl) $, and
	$\pl (\RA-\RB) > \pa(\RL-\RB) $ as $\thlb > \thab$.
	Hence, we have:
	$\thla > \thlb > \thab $.
	Moreover, when  $\thlb > \thab$,
	%(i)
	%the best strategy of users with $\th \in [0, \thab)$ is to choose the basic service,
	%(ii) the best strategy of users with  $\th \in (\thab, \thla)$ is
	%to choose the advanced service,
	the newly derived market shares are:
	$$\Probl = 1 -  \thla,~\Proba  = \thla - \thab,$$
	%and (iii) the best strategy of users with  $\th \in (\thla, 1]$ is
	%to choose  the leasing service.
	When $\thlb < \thab$, we can similarly check that
	$\thla - \thlb < 0$, and hence $\thla < \thlb < \thab.$
	Moreover,
	%when  $\thlb < \thab$, (i)
	% the best strategy of users with $\th \in [0, \thlb)$
	% is to choose the basic service,
	%(ii)  the best strategy of users with $\th \in (\thlb, 1]$
	% is to choose the leasing service,
	% and (iii) no user is willing to choose the advanced service.
	the newly derived market shares $\{ \Probl,\Proba \}$, given the prices $\{ \pl, \pa \}$ and initial market shares $\{\Probl^0, \Proba^0 \}$, are
	$$\Probl = 1 - \thlb, \Proba  = 0.$$
	Formally, we can get (\ref{eq:user-prob-1}).
	
	%Based on the relation among $\thlb$, $\thab$, and $\thla$, the newly derived market shares $\{ \Probl,\Proba \}$, given the prices $\{ \pl, \pa \}$ and initial market shares $\{\Probl^0, \Proba^0 \}$,  are
	%\begin{itemize}
	%\item
	%If $\thlb > \thab$, then $\Probl = 1 -  \thla$ and $\Proba  = \thla - \thab$;
	%\item
	%If $\thlb \leq \thab$, then $\Probl = 1 - \thlb$ and $\Proba  = 0$.
	%\end{itemize}
	
\end{proof}

\subsection{Proof for Proposition \ref{lemma:uniqueness-eq_pt}}\label{lemma:uniqueness-eq_pt-proof}
\begin{proof}
	By Definition \ref{def:stable-pt}, $\BProb^{*} = \{ \Probl^{*}, \Proba^{*} \}$ is an equilibrium point if and only if it is a solution of (\ref{eq:market_equilibrium}).
	If $\thlb < \thab$, the solution $\BProb = \{ \Probl, \Proba \}$ should satisfy that:
	\begin{equation}\label{eq:user-profit-new1}
	\triangle \Probl (\Probl, \Proba) = 1 - \frac{\pl}{L - \fx(\Probl) } - \Probl = 0;~~\text{and}~~
	\triangle \Proba (\Probl, \Proba) = 0 - \Proba = 0.
%	\begin{aligned}
%	&\triangle \Probl (\Probl, \Proba) = 1 - \frac{\pl}{L - \fx(\Probl) } - \Probl = 0;\\
%	&\triangle \Proba (\Probl, \Proba) = 0 - \Proba = 0.
%	\end{aligned}
	\end{equation}
	we can easily get that $\Proba^{*} = 0$. Moreover,
	%we have $\Proba (\Probl, \Proba) = 0$ iff $\Proba = 0$. Moreover, we have
	%$$\triangle \Probl (\Probl, \Proba) = 1 - \frac{\pl}{L - \fx(\Probl) } - \Probl$$.
	We can check that $\left.\triangle \Probl (\Probl, \Proba)\right|_{\Probl = 0}  > 0$ and $\left.\triangle \Probl (\Probl, \Proba)\right|_{\Probl = 1}  < 0$. As $\triangle \Probl (\Probl, \Proba)$ is continuous on $[0,1]$, we can get the conclusion that $\triangle \Probl (\Probl, \Proba) = 0$ has a root on the domain of $\Probl$.
	
	If $\thlb > \thab$, the solution $\BProb^{*} = \{ \Probl^{*}, \Proba^{*} \}$ should satisfy that:
	\begin{equation}\label{eq:user-profit-new21}
	\begin{aligned}
	%\textstyle
	\triangle \Probl (\Probl, \Proba) = 1 - \frac{ \pl }{ \RL - \fx(\Probl) - \gy(\Proba)} - \Probl = 0;
	\end{aligned}
	\end{equation}
	\begin{equation}\label{eq:user-profit-new22}
	\begin{aligned}
	%\textstyle
	\triangle\Proba (\Probl, \Proba) = \frac{ \pl }{ \RL - \fx(\Probl) - \gy(\Proba)} - \frac{\pa}{ \gy(\Proba)} - \Proba = 0;
	\end{aligned}
	\end{equation}
	From the above equations, we can get:
	\begin{equation}\label{eq:user-profit-pl-vs-pa}
	\begin{aligned}
	\Probl = 1 - \Proba - \frac{\pa}{ \gy(\Proba)};
	\end{aligned}
	\end{equation}
	
	By substitution (\ref{eq:user-profit-pl-vs-pa}) into (\ref{eq:user-profit-new22}), we can transform $\triangle\Proba (\Probl, \Proba)$ with two variables $\{ \Probl, \Proba \}$ into $\triangle \hat{\Prob}_{\textsc{a}} (\Proba)$ which only has one  variable $\Proba \in [0,1]$.
%	Specifically,
%	\begin{equation}
%	\triangle \hat{\Prob}_{\textsc{a}} (\Proba) = \frac{ \pl }{ \RL - \fx(\Proba) - \gy(\Proba)} - \frac{\pa}{ \gy(\Proba)} - \Proba,
%	\end{equation}
%%	with one variable $\Proba \in [0,1]$ only,
%	where $h(\Proba) = \fx(1 - \Proba - \frac{\pa}{ \gy(\Proba)})$.
	We can check that $\left.\hat{\Prob}_{\textsc{a}} (\Proba)\right|_{\Proba = 0}  > 0$ and $\left.\triangle \hat{\Prob}_{\textsc{a}} (\Proba)\right|_{\Proba = 1}  < 0$. As $\triangle \hat{\Prob}_{\textsc{a}} (\cdot)$ is continuous on $[0,1]$, we can get the conclusion that $\triangle\hat{\Prob}_{\textsc{a}} (\Proba) = 0$ has a root on the domain of $\Proba$. By plugging the solution of $\triangle\hat{\Prob}_{\textsc{a}} (\Proba^{*}) = 0$ into (\ref{eq:user-profit-pl-vs-pa}), we can get the corresponding solution of $\Probl^{*}$. Hence, we can show the existence of market equilibrium.
	%\end{proof}
	%
	%
	%\subsection{Proof for Proposition \ref{lemma:uniqueness-eq_pt}}\label{lemma:uniqueness-eq_pt-proof}
	%\begin{proof}
	In order to proof the uniqueness, we only need to show that the function $\triangle \Probl (\Probl, \Proba)$ in (\ref{eq:user-profit-new1}) and $\triangle \hat{\Prob}_{\textsc{a}} (\Proba)$ in (\ref{eq:user-profit-pl-vs-pa}) is strictly decreasing on $[0,1]$. We can check that if
	\begin{equation}\label{eq:user-uniquenee-cod}
	\begin{aligned}
	\gy^{\prime}(\Proba) \cdot \left[ \frac{ \pl - \pa }{ ( L - \fx(\Probl) - \gy(\Proba) )^2} + \frac{ \pa }{ \gy^2(\Proba) } \right]   \leq 1
	\end{aligned}
	\end{equation}
	then the first derivative of both $\triangle \Probl (\Probl, \Proba)$ and $\triangle \hat{\Prob}_{\textsc{a}} (\Proba)$ are non-positive. Moreover, we have
	\begin{equation}\label{eq:user-uniquenee-cod-simple}
	\begin{aligned}
	& \gy^{\prime}(\Proba) \cdot \left[ \frac{ \pl - \pa }{ ( L - \fx(\Probl) - \gy(\Proba) )^2} + \frac{ \pa }{ \gy^2(\Proba) } \right] \\
	%& \leq \max{ ( \frac{ \pl - \pa }{ ( L - \fx(\Probl) - \gy(\Proba) )} , \frac{ \pa }{ \gy^2(\Proba) } ) \cdot \gy^{'}(\Proba) \cdot [ \frac{1}{ L - \fx(\Probl) - \gy(\Proba)} + \frac{1}{ \gy(\Proba) } ] } \\
	& \leq \max{ \left( \frac{ \pl - \pa }{ ( L - \fx(\Probl) - \gy(\Proba) )} , \frac{ \pa }{ \gy^2(\Proba) } \right) \cdot \frac{ \gy^{\prime}(\Proba) }{ \gy(\Proba)  } \cdot \frac{ \RL - \RB(\Probl) }{ \RL - \RA(\Proba, \Probl) }  }
	\end{aligned}
	\end{equation}
		
%	As $\frac{ \pl - \pa }{ ( L - \fx(\Probl) - \gy(\Proba) )} \leq 1$ and $\frac{ \pa }{ \gy^2(\Proba) } \leq 1$,
Let $\kappa = \max_{(\Probl, \Proba)\in\Probset} \{ \frac{\pl - \pa}{ \RL - \RA(\Proba, \Probl)} , \frac{\pa}{\RA(\Proba, \Probl) - \RB(\Probl)} \} $,
we can get the conclusion that $\triangle \Probl (\Probl, \Proba)$ in (\ref{eq:user-profit-new1}) and $\triangle \hat{\Prob}_{\textsc{a}} (\Proba)$ in (\ref{eq:user-profit-pl-vs-pa}) is strictly decreasing on $[0,1]$ when
	\begin{equation}\label{eq:user-uniquenee-cod-simple2}
%	%\frac{ \gy^{\prime}(\Proba) }{ \gy(\Proba)  } \cdot \frac{ \RL - \RB }{ \RL - \RA }   < 1,
%	\frac{ \gy^{\prime}(\Proba) }{ \gy(\Proba)  } \cdot \frac{ \RL - \RB(\Probl) }{ \RL - \RA(\Proba, \Probl) }  \leq 1,~\forall (\Probl, \Proba)\in\Probset,
\max_{(\Probl, \Proba)\in\Probset} \frac{ \gy^{\prime}(\Proba) }{ \gy(\Proba)  } \cdot \frac{ \RL - \RB(\Probl) }{ \RL - \RA(\Proba, \Probl) }  \leq \frac{1}{\kappa},	
	\end{equation}
	
	Then we prove the convergence of market dynamics based on the contraction mapping theorem \cite{bertsekas1989parallel}.
	
%We first consider the case where $\thlb > \thab$.
Let function $h_1, h_2: [0,1] \rightarrow [0,1]$ be:
	\begin{equation}
	\label{eq:mapping_function_old}
	\begin{aligned}
	h_1(\Probl, \Proba )  =  1 - \thla(\Probl, \Proba) ~\text{and}~	h_2(\Probl, \Proba )  = \thla(\Probl, \Proba) - \thab(\Probl, \Proba)
	\end{aligned}
	\end{equation}
	
	 Let function $\boldsymbol{h}:\Probset \rightarrow \Probset$ be:
	\begin{equation}
	\label{eq:mapping_function}
	\boldsymbol{h}( \Probl, \Proba ) = ( h_1(\Probl, \Proba ), h_2(\Probl, \Proba ) ).
	\end{equation}
	
	Then we need to show that the mapping function $\boldsymbol{h}(\cdot)$ is a contraction on $\Probset$ with respect to the maximum norm if the condition \eqref{eq:stable_condition} is satisfied. Specifically, we will show that
	\begin{equation}\label{eq:mappying_maximum_norm}
	\| \boldsymbol{h}( \Probla, \Probaa ) - \boldsymbol{h}( \Problb, \Probab )\|_{\infty} \leq \kappa_d \| \BProbaa - \BProbbb \|_{\infty}
	\end{equation}
	where $\kappa_d = \kappa \max_{(\Probl, \Proba)\in\Probset} \frac{ \gy^{\prime}(\Proba) }{ \gy(\Proba)  } \cdot \frac{ \RL - \RB(\Probl) }{ \RL - \RA(\Proba, \Probl) }$, $\BProbaa = ( \Probla, \Probaa ) \in \Probset$, $\BProbbb = ( \Problb, \Probab ) \in \Probset$. We assume that $\BProbaa \geq \BProbbb$ without loss of generality.
	
	We can check that
	\begin{equation}\label{eq:mappying_maximum_norm_proof}
	\begin{aligned}
	&\| \boldsymbol{h}( \Probla, \Probaa ) - \boldsymbol{h}( \Problb, \Probab )\|_{\infty} \\
	& = \max \{ \thla(\Probla, \Probaa) - \thla(\Problb, \Probab),  \\
	& \qquad{} \qquad{} \thla(\Probla, \Probaa) - \thla(\Problb, \Probab) + \thab(\Problb, \Probab) - \thab(\Probla, \Probaa)   \} \\
	& = \thla(\Probla, \Probaa) - \thla(\Problb, \Probab) + \thab(\Problb, \Probab) - \thab(\Probla, \Probaa)\\
	& = h_2(\Probla, \Probaa ) - h_2(\Problb, \Probab ).
	\end{aligned}	
	\end{equation}
	
Let $\BProbcc = ( \Problc, \Probac )$, where $\Probac \in (\Probab, \Probaa)$, and $\Problc \in ( \Problb, \Probla)$. By mean-value theorem, we have
	\begin{equation}\label{eq:mappying_maximum_norm_proof_mean_value}
	\begin{aligned}
h_2(\Probla, \Probaa ) - h_2(\Problb, \Probab ) &= \left. \frac{ \partial{h_2}}{\partial{\Probl}}\right|_{\Probl = \Problc, \Proba = \Probac} ( \Probla - \Problb ) + \left. \frac{ \partial{h_2}}{\partial{\Proba}}\right|_{\Probl = \Problc, \Proba = \Probac} ( \Probaa - \Probab )\\
& = -\frac{\pl - \pa}{[\RL - \fx(1 - \Problc) - \gy(\Probac)]^2} \cdot \fx^{\prime}(1 - \Problc) ( \Probla - \Problb ) \\
& \quad{} + \bigg [ \frac{\pl - \pa}{[\RL - \fx(1 - \Problc) - \gy(\Probac)]^2} \cdot \gy^{\prime}(\Probac) + \frac{\pa}{[\gy(\Probac)]^2} \cdot \gy^{\prime}(\Probac) \bigg]( \Probla - \Problb )  \\
& \leq \bigg[ \frac{\pl -\pa}{[ \RL - \fx(1 - \Problc) - \gy(\Probac)  ]^2} \cdot \gy^{\prime}(\Probac) \\
& \qquad{}  + \frac{\pa}{[\gy(\Probac)]^2} \cdot \gy^{\prime}(\Probac)   \bigg] \cdot \max\{ \Probla - \Problb,  \Probla - \Problb \} \\
& \leq \kappa \bigg[  \frac{\gy^{\prime}(\Probac)}{ \RL - \fx(1 - \Problc) - \gy(\Probac) } + \frac{\gy^{\prime}(\Probac)}{\gy(\Probac)}  \bigg] \cdot \max\{ \Probla - \Problb,  \Probla - \Problb \} \\
& = \kappa \cdot  \frac{\gy^{\prime}(\Probac)}{\gy(\Probac)} \cdot \frac{\RL - \fx(1 - \Problc)}{ \RL - \fx(1 - \Problc) - \gy(\Probac) }  \cdot \max\{ \Probla - \Problb,  \Probla - \Problb \} \\
& \leq \kappa_d \| \BProbaa - \BProbbb \|_{\infty}
	\end{aligned}	
	\end{equation}

\end{proof}

\subsection{Proof for Theorem \ref{thrm:stable-eq_pt}}\label{thrm:stable-eq_pt-proof}
\begin{proof}
	By the non-decreasing property of $\gy(\cdot)$ and the non-increasing property of $\fx(\cdot)$, we have $\thlb(\Probl, \Proba) \leq \left.\thlb(\Probl, \Proba)\right|_{\Probl = 0} \leq \left.\thab(\Probl, \Proba)\right|_{\Proba = 0} \leq \thab(\Probl, \Proba)$. By directly applying Proposition \ref{lemma:uniqueness-eq_pt}, we can get the conclusion.
\end{proof}

\subsection{Proof for Proposition \ref{lemma:game_tranform}}\label{lemma:game_tranform-proof}
\begin{proof}
	If $\{\Probl^{*}, \Proba^*\}$ is a Nash equilibrium of MSCG, then we have:
	\begin{equation}\label{eq:db-share-dynamic-proof}
	\left\{
	\begin{aligned}
	\textstyle\Probl^{*} & = \arg \max_{\Probl \in [0,1]}\ \Urslrs(\Probl , \Proba^{*}),
	\\
	\textstyle\Proba^{*} & = \arg \max_{\Proba \in [0,1]}\ \Urdbrs(\Probl^{*} , \Proba).
	\end{aligned}
	\right.
	\end{equation}
	By (\ref{eq:price-market-share-rs}), we further have:
	\begin{equation}\label{eq:price-market-share-rs-proof}
	\textstyle
	\left\{
	\begin{aligned}
	\textstyle  \pl^{*}(\Probl^{*} , \Proba^{*} )   = & ( 1 - \Probl^{*} ) \cdot \left( \RL - \fx(1-\Probl^{*}) - \gy(\Proba^{*}) \right)     + ( 1 - \Probl^{*} - \Proba^{*} )\cdot \gy(\Proba^{*}) , \\
	\textstyle  \pa^{*} (\Probl^{*} , \Proba^{*} ) =& ( 1 - \Probl^{*} - \Proba^{*} )\cdot \gy(\Proba^{*}).
	\end{aligned}
	\right.
	\end{equation}
	Hence, we can easily check that $( \pl^{*} , \pa^{*} )$, where
	\begin{equation}\label{eq:db-price-dynamic-proof}
	\left\{
	\begin{aligned}
	\textstyle\pl^{*} & = \arg \max_{\pl \geq 0}\ \Uslrs(\pl , \pa^{*}),
	\\
	\textstyle\pa^{*} & = \arg \max_{\pa \geq 0}\ \Udbrs(\pl^{*} , \pa).
	\end{aligned}
	\right.
	\end{equation}
	is a Nash equilibrium of PCG.
\end{proof}

\subsection{Proof for Proposition \ref{lemma:market_share_boundary}}\label{lemma:market_share_boundary-proof}
\begin{proof}
	We first need to prove that a solution $\Probl^*$ of $\max_{\Probl \in [0,1]} \Uslrs(\Probl , \Proba)$ satisfies that $\Probl^* \in (0,1/2)$. As $\Uslrs$ is differentiable, the first-order necessary condition implies that
	\begin{equation}\label{eq:FOC_ls_proof}
	\left. \frac{ \partial{\Uslrs}  }{ \partial{\Probl} } \right|_{\Probl = \Probl^*}= (1 - 2 \Probl^*)\left( \RL - \fx(1 - \Probl^*)  \right) + (  1 - \Probl^* ) \cdot \Probl^* \cdot \fx^{'}( 1 - \Probl^* ) - \Proba \cdot \gy(\Proba) - \cl = 0,
	\end{equation}
	where $\fx^{'}(x)$ is the first-order derivative of $\fx(x)$ with respect to $x$. Based on the Assumption \ref{assum:congestion}, we have $(  1 - \Probl^* ) \cdot \Probl^* \cdot \fx^{'}( 1 - \Probl^* ) - \Proba \cdot \gy(\Proba) - \cl < 0$. Thus, $(1 - 2 \Probl^*)\left( \RL - \fx(1 - \Probl^*)  \right) > 0$. According to the Assumption \ref{assume:utility_relationship}, we further have $\Probl^* < 1/2$.
	
	Then we need to prove that a solution $\Proba^*$ of $\max_{\Proba \in [0,1]} \Udbrs(\Probl , \Proba)$ satisfies that $\Proba^* + \Probl < 1$, $\forall \Probl \in [0,1]$.
	As $\Udbrs$ is differentiable, the first-order necessary condition implies that
	\begin{equation}\label{eq:FOC_db_proof}
	\left. \frac{ \partial{\Udbrs}  }{ \partial{\Proba} } \right|_{\Proba = \Proba^*}= (1 - \Probl - \Proba^*)\left( \gy(\Proba^*) + \Proba^* \cdot \gy^{'}(\Proba^*) \right) - \Proba^* \cdot \gy(\Proba^*) - \ca - \Probl \cdot \delta \cdot \gy(\Proba^*) - \Probl \cdot \Proba^* \cdot \delta \cdot \gy^{'}(\Proba^*) = 0,
	\end{equation}
	where $\gy^{'}(x)$ is the first-order derivative of $\gy(x)$ with respect to $x$. Based on the Assumption \ref{assum:positive}, we have $- \Proba^* \cdot \gy(\Proba^*) - \ca - \Probl \cdot \delta \cdot \gy(\Proba^*) - \Probl \cdot \Proba^* \cdot \delta \cdot \gy^{'}(\Proba^*) < 0$. Thus, $(1 - \Probl - \Proba^*)\left( \gy(\Proba^*) + \Proba^* \cdot \gy^{'}(\Proba^*) \right) > 0$, which implies that  $\Proba^* + \Probl < 1$, $\forall \Probl \in [0,1]$.
\end{proof}

\subsection{Proof for Theorem \ref{thrm:NE-existence}}\label{thrm:NE-existence-proof}
\begin{proof}
	We first introduce some concepts for defining a supermodular game \cite{topkis1998supermodular}. A real $n-$diensional set $\mathcal{V}$ is a \emph{sublattice} of $\mathbb{R}^{n}$ if for any two elements $a, b \in \mathcal{V}$, the component-wise minimum, $a \vee b$, and the component-wise maximum, $a \wedge b$, are also in $\mathcal{V}$. Particularly, a compact sublattice has a smallest and largest element. A function $f(x_1, \ldots, f_N)$ has increasing differences in $(x_i, x_j)$ for all $i \neq j$ if $f(x_1, \ldots, x^{1}_i, \ldots, x_N) - f(x_1, \ldots, x^{2}_i, \ldots, x_N)$ is increasing in $x_j$ for all $x^{2}_i - x^{1}_i > 0$.\footnote{If the function $f$ is twice differentiable, the property is equivalent to $\partial^2{f}/\partial{x_i}\partial{x_j}\geq 0$} The formal definition of a supermodular game is given below:
	\begin{definition}[Supermodular Game \cite{topkis1998supermodular}]\label{def:supermodular}
		A noncooperative game $( \mathcal{M}, \{S \}_{ m \in \mathcal{M} }, \{ U_m \}_{ m \in \mathcal{M} } )$ is called a supermodular game if the following conditions are all satisfied:
		\begin{itemize}
			\item  The strategy set $ {S_m} $ is a nonempty and compact sublattice of real number.
			\item  The payoff function $U_m$ is supermodular in player $m$'s own strategy.\footnote{A function is always supermodular in a single variable.}
			\item  The payoff function $U_m$ has increasing differences in all sets of strategies.
		\end{itemize}
	\end{definition}
	
We then prove that the MSCG is a supermodular game with respect to $\Proba$ and $-\Probl$.
	Since the database and the licensee competes with a single instrument $\BProb = ( \Probl, \Proba )$ chosen from a compact set $[0,1]^2$, it suffices to show that both $\Urslrs(-\Probl , \Proba)$ and $\Urdbrs(-\Probl , \Proba)$ have increasing difference in $( -\Probl, \Proba )$. By  (\ref{eq:sl-profit-dynamic-rv-xx}), we have:
	\begin{equation}
	\begin{aligned}
	\frac{ \partial^2{ \Urslrs(-\Probl , \Proba) }}{ \partial{(-{\Prob}_{\textsc{l}})}\partial{\Proba} } = [  \gy(\Proba) + \Proba \cdot \gy^{\prime}(\Proba) ] \cdot ( 1 - \delta) \geq 0
	\end{aligned}
	\end{equation}
	\begin{equation}
	\begin{aligned}
	\frac{ \partial^2{ \Urdbrs(-\Probl , \Proba) }}{ \partial{(-{\Prob}_{\textsc{l}})}\partial{\Proba} } = [  \gy(\Proba) + \Proba \cdot \gy^{\prime}(\Proba)  ] \cdot ( 1 + \delta) \geq 0
	\end{aligned}
	\end{equation}
	where $\gy^{\prime}(\Proba)$ is the first-order derivative of $\gy(\cdot)$ with respect to $\Proba$.
	%Note that the second term in the right-hand side of the above formula is independent of  $\p_j$ for all $j \neq m$.
	%We further notice that  $\frac{\partial{\log{ \th_m ( \bp ) }}}{ \partial{\p_m}\partial{\p_j} }  \geq 0$.
	%Thus, we have: $ \frac{ \partial^2{\log{ \Ur_m }(\bp) }}{ \partial{\p_m}\partial{\p_j} } \geq 0 $, i.e., the payoff function $\log{ \Ur_m }(\bp)$ has increasing differences in all sets of strategies.
	Hence we can conclude that MSCG is a supermodular game with respect to $\Proba$ and $-\Probl$.
	%\end{proof}
	%
	%
	%
	%\subsection{Proof for Proposition \ref{thrm:NE-uniquness}}\label{thrm:NE-uniquness-proof}
	%\begin{proof}
	
	To prove the uniqueness of NE, we only need to verify that:
	\begin{equation}
	\begin{aligned}
	- \frac{  \partial^2{ \Urslrs(-{\Prob}_{\textsc{l}} , \Proba) } }{ \partial{ (-{\Prob}_{\textsc{l}}) }^2 } &\geq \frac{  \partial^2{ \Urslrs( {\Prob}_{\textsc{l}} , \Proba)  } }{ \partial{ {(-{\Prob}_{\textsc{l}}) } }\partial{ \Proba  } }, \\
	- \frac{  \partial^2{ \Urdbrs( -{\Prob}_{\textsc{l}} , \Proba) } }{ \partial{ (-{\Prob}_{\textsc{l}}) }^2 } &\geq \frac{  \partial^2{ \Urdbrs( {\Prob}_{\textsc{l}} , \Proba)  } }{ \partial{ { \Proba } }\partial{ (-{\Prob}_{\textsc{l}})  } }.
	\end{aligned}
	\end{equation}
	The above uniqueness conditions are called dominant diagonal condition, and follow the standard supermodular game theory.
	Hence, as long as $\Urslrs(-{\Prob}_{\textsc{l}} , \Proba) $ and $\Urdbrs(-{\Prob}_{\textsc{l}} , \Proba)$ satisfy the above condition, we can conclude that the MSCG has a unique NE.
	so does the original PCG.
\end{proof}

%\subsection{Price Competition Game under the Wholesale Pricing Scheme}
\subsection{Proof for Theorem \ref{them:uniq_wps}}\label{them:uniq_wps-proof}
\begin{proof}
	With the similar analysis used in Section \ref{sec:layer2-rss}, we can transform the original price competiton game (PCG) into an equivalent market share competition game (MSCG), where the payoffs function of the licensee and the database are
	\begin{equation}\label{eq:sl-profit-dynamic-wp-xx}
	\left\{
	\begin{aligned}
	\textstyle
	\Urslwp(\Probl , \Proba) & = ( \pl (\Probl, \Proba)  - \w ) \cdot \Probl,
	\\
	\textstyle
	\Urdbwp(\Probl , \Proba) & = \pa (\Probl , \Proba) \cdot \Probl + \w \cdot \Probl.
	\end{aligned}
	\right.
	\end{equation}
	
	We first need to prove that a solution $\Probl^*$ of $\max_{\Probl \in [0,1]} \Uslwp(\Probl , \Proba)$ satisfies that $\Probl^* \in (0,1/2)$. As $\Uslwp$ is differentiable, the first-order necessary condition implies that
	\begin{equation}\label{eq:FOC_ls_wp_proof}
	\left. \frac{ \partial{\Uslwp}  }{ \partial{\Probl} } \right|_{\Probl = \Probl^*}= (1 - 2 \Probl^*)\left( \RL - \fx(1 - \Probl^*)  \right) + (  1 - \Probl^* ) \cdot \Probl^* \cdot \fx^{'}( 1 - \Probl^* ) - \Proba \cdot \gy(\Proba) - w - \cl = 0,
	\end{equation}
	where $\fx^{'}(x)$ is the first-order derivative of $\fx(x)$ with respect to $x$. Based on the Assumption \ref{assum:congestion}, we have $(  1 - \Probl^* ) \cdot \Probl^* \cdot \fx^{'}( 1 - \Probl^* ) - \Proba \cdot \gy(\Proba) - w - \cl < 0$. Thus, $(1 - 2 \Probl^*)\left( \RL - \fx(1 - \Probl^*)  \right) > 0$. According to the Assumption \ref{assume:utility_relationship}, we further have $\Probl^* < 1/2$.
	
	Then we need to prove that a solution $\Proba^*$ of $\max_{\Proba \in [0,1]} \Udbwp(\Probl , \Proba)$ satisfies that $\Proba^* + \Probl < 1$, $\forall \Probl \in [0,1]$.
	As $\Udbrs$ is differentiable, the first-order necessary condition implies that
	\begin{equation}\label{eq:FOC_db_wp_proof}
	\left. \frac{ \partial{\Udbwp}  }{ \partial{\Proba} } \right|_{\Proba = \Proba^*}= (1 - \Probl - \Proba^*)\left( \gy(\Proba^*) + \Proba^* \cdot \gy^{'}(\Proba^*) \right) - \Proba^* \cdot \gy(\Proba^*) - \ca = 0,
	\end{equation}
	where $\gy^{'}(x)$ is the first-order derivative of $\gy(x)$ with respect to $x$. As $- \Proba^* \cdot \gy(\Proba^*) - \ca < 0$, we, thus, have $(1 - \Probl - \Proba^*)\left( \gy(\Proba^*) + \Proba^* \cdot \gy^{'}(\Proba^*) \right) > 0$, which implies that  $\Proba^* + \Probl < 1$, $\forall \Probl \in [0,1]$.
	
	Now we only need to prove that the MSCG under WPS is a supermodular game with respect to $\Proba$ and $-\Probl$.
	Since the database and the licensee competes with a single instrument $\BProb = ( \Probl, \Proba )$ chosen from a compact set $[0,1]^2$, it suffices to show that both $\Urslrs(-\Probl , \Proba)$ and $\Urdbrs(-\Probl , \Proba)$ have increasing difference in $( -\Probl, \Proba )$. By  (\ref{eq:sl-profit-dynamic-wp-xx}), we have:
	\begin{equation}
	\begin{aligned}
	\frac{ \partial^2{ \Urslwp(-\Probl , \Proba) }}{ \partial{(-{\Prob}_{\textsc{l}})}\partial{\Proba} } =   \gy(\Proba) + \Proba \cdot \gy^{\prime}(\Proba)   \geq 0
	\end{aligned}
	\end{equation}
	\begin{equation}
	\begin{aligned}
	\frac{ \partial^2{ \Urdbwp(-\Probl , \Proba) }}{ \partial{(-{\Prob}_{\textsc{l}})}\partial{\Proba} } =   \gy(\Proba) + \Proba \cdot \gy^{\prime}(\Proba)    \geq 0
	\end{aligned}
	\end{equation}
	where $\gy^{\prime}(\Proba)$ is the first-order derivative of $\gy(\cdot)$ with respect to $\Proba$.
	%Note that the second term in the right-hand side of the above formula is independent of  $\p_j$ for all $j \neq m$.
	%We further notice that  $\frac{\partial{\log{ \th_m ( \bp ) }}}{ \partial{\p_m}\partial{\p_j} }  \geq 0$.
	%Thus, we have: $ \frac{ \partial^2{\log{ \Ur_m }(\bp) }}{ \partial{\p_m}\partial{\p_j} } \geq 0 $, i.e., the payoff function $\log{ \Ur_m }(\bp)$ has increasing differences in all sets of strategies.
	Hence we can conclude that MSCG under WPS is a supermodular game with respect to $\Proba$ and $-\Probl$.
	
	To prove the uniqueness of NE, we only need to verify that:
	\begin{equation}
	\begin{aligned}
	- \frac{  \partial^2{ \Urslwp(-{\Prob}_{\textsc{l}} , \Proba) } }{ \partial{ (-{\Prob}_{\textsc{l}}) }^2 } &\geq \frac{  \partial^2{ \Urslwp( {\Prob}_{\textsc{l}} , \Proba)  } }{ \partial{ {(-{\Prob}_{\textsc{l}}) } }\partial{ \Proba  } }, \\
	- \frac{  \partial^2{ \Urdbwp( -{\Prob}_{\textsc{l}} , \Proba) } }{ \partial{ (-{\Prob}_{\textsc{l}}) }^2 } &\geq \frac{  \partial^2{ \Urdbwp( {\Prob}_{\textsc{l}} , \Proba)  } }{ \partial{ { \Proba } }\partial{ (-{\Prob}_{\textsc{l}})  } }.
	\end{aligned}
	\end{equation}
	The above uniqueness conditions are called dominant diagonal condition, and follow the standard supermodular game theory.
	Hence, as long as $\Urslwp(-{\Prob}_{\textsc{l}} , \Proba) $ and $\Urdbwp(-{\Prob}_{\textsc{l}} , \Proba)$ satisfy the above condition, we can conclude that the MSCG has a unique NE.
	so does the original PCG.
	
\end{proof}

\subsection{Best Response Iteration of Stage II}\label{appendix:best_response}
According to Appendix \ref{thrm:NE-existence-proof}, the MSCG is a supermodular game.
For a supermodular game,  several commonly used updating rules are guaranteed to converged to
the NE \cite{topkis1998supermodular}.
In this work, we use the following best response algorithm as in \cite{topkis1998supermodular}:
Starting with an arbitrary market share vector $\BProb^{0}$, where 0 is the round index.
At every round $t$, both the \db~and the \lh~update their market share based on their best response to its competitor's market share in the previous round $t-1$. That is,
\begin{equation}\label{eq:BR_updated_lh}
\begin{aligned}
\Probl^{(t)} = \arg \max_{  \Probl \in [0,1] } \Urslrs(\Probl , \Proba^{t-1})
\end{aligned}
\end{equation}
\begin{equation}\label{eq:BR_updated_db}
\begin{aligned}
\Proba^{(t)} = \arg \max_{  \Proba \in [0,1] } \Urdbrs(\Probl^{t-1} , \Proba)
\end{aligned}
\end{equation}
The above procedure continues until the market share equilibrium reached.
Note that if both the \db~and the \lh~use best response algorithm update starting form their smallest (largest) element of their strategy set, then the strategies monotonically converge to the smallest (largest) NE market share. As the \db~owns the online platform and can put its advanced service as high priority, we always assume that the initial market share vector as $\BProb^{0} = \{ \Probl^{0} = 0, \Proba^{0} = 1 \}$.
Formally, we show this in Algorithm \ref{alg:round_robin}.
%\footnote{\rev{Note that the updating in Algorithm \ref{alg:round_robin} can be asynchronous (i.e., not every one updates at each round). We provide the details in \cite{techrpt}.}}

\begin{algorithm}
	\small
	\DontPrintSemicolon
	\textbf{Initialization}\;
	~~The licensee and the database select their own initial market share $\Probl(0)$ and $\Proba(0)$, respectively;\;
	\For{each round $t=1,...,T_{\mathrm{end}}$}
	%{$|\bp(k) - \bp(k-1)| > \delta$}
	{
		\For{the licensee and the database}
		{
			The licensee chooses the market share $\Probl (t )$ at round $t$ according to (\ref{eq:BR_updated_lh}), i.e., the market share maximizing his profit under $\Proba(t-1)$;\;
			The database chooses the market share $\Proba (t )$ at round $t$ according to (\ref{eq:BR_updated_db}), i.e., the market share maximizing his profit under $\Probl(t-1)$;
		}
		\textbf{if} {$\Probl(t) == \Probl(t-1)$ and $\Proba(t) == \Proba(t-1)$} \textbf{then} break;	\qquad \emph{// convergence}
	}
	\caption{Best response update algorithm}\label{alg:round_robin}
\end{algorithm}

\subsection{Energy Saving in the Integrated Market}
In this subsection,  we will provide additional simulation that illustrate the energy saving  (when users rely on the database to identify the best channels instead of through sensing).

%In this subsection, we further consider the users' energy consumption cost for performing spectrum sensing, which is captured by the sensing cost $c_S$.
%This leads to new analysis. We provide additional simulation that illustrate the energy saving  (when users rely on the database to identify the best channels instead of  through sensing). Figure \ref{fig:impact_sensing_cost} in this response illustrates the new simulation results.

For convenience, we call the market proposed in our original manuscript as the \emph{integrated market}, where the database provides both the basic service and the advanced service to users, and the licensee leases its under-utilized channels to users.
In the revised manuscript, we further consider a new benchmark market called the \emph{sensing market}, where the database only provides the basic information to user without the advanced information (i.e., qualities of channels), and the licensee can still lease his under-utilized channels to users via the platform of the database.
In this sensing market, users choosing the unlicensed channels provided by the database can decide to sense all the available (unlicensed) channels in order to identify the best channel.
We compare the performance of our proposed integrated market with this benchmark sensing market in terms of the total energy cost.

More specifically, in the sensing market, we still formulate the interactions among various entities through a three-stage hierarchical model. In Stage I, the database and the spectrum  licensee negotiate the commission charge details (regarding the spectrum market platform), \emph{i.e.,} the revenue sharing factor $\delta $ under RSS, or the wholesale price $\w$ under WPS.
In Stage II, the spectrum licensee determines the price $\pl$ of the licensed channel.
In Stage III, the users determine their best choices, and dynamically update their choices based on the current market shares. The market dynamically evolves and finally reaches an equilibrium point  (if one exists).

% We first analyze the users' subscription dynamics in Stage III.
The analysis for the sensing market is similar to that for the integrated market in Sections IV$\sim$VI. The key differences are listed below.
In Stage III of the sensing market, a type-$\th$ {\eu}'s best response is\footnote{{Here, ``iff'' stands for ``if and only if''.
		Note that we omit the case of $\th \cdot \RL -  \pl = \max\{ \th \cdot \RS(\Probl^{\t})  -  \cs ,\ \th \cdot \RB(\Probl^{\t}) \}$, $\th \cdot \RS(\Probl^{\t}) -  \cs = \max\{ \th \cdot \RL  -  \pl ,\ \th \cdot \RB(\Probl^{\t}) \}$, and $\th \cdot \RB(\Probl^{\t})  = \max\{ \th \cdot \RL - \pl, \th \cdot \RS(\Probl^{\t})  -  \cs \}$, which are negligible (\emph{i.e.,} occurring with zero probability) due to the continuous distribution of $\th$.}}~~~~
\begin{equation}\label{eq:utility_function_sensing}
\left\{
\begin{aligned}
\l_{\th}^{\dag} = \LS, & \mbox{~~~~iff~~} \th   \RL -  \pl > \max\{ \th   \RS(\Probl^{\t})  -  \cs ,\ \th   \RB(\Probl^{\t}) \},
\\
\l_{\th}^{\dag} = \textsf{s}, & \mbox{~~~~iff~~} \th   \RS(\Probl^{\t}) -  \cs >  \max\{ \th   \RL  -  \pl ,\
\th   \RB(\Probl^{\t}) \},
\\
\l_{\th}^{\dag}  = \B, & \mbox{~~~~iff~~} \th   \RB(\Probl^\t) > \max\{ \th   \RL  -  \pl, \ \th   \RS(\Probl^{\t})-  \cs \},
\end{aligned}
\right.
\end{equation}
%where $\RB(\Probl^{\t}) =  \fx(1- \Probl^{\t})$, $\RS$ is the expected utility that a user can achieve when sensing  all the available channel ($\l = S$), $\ps$ is a user's sensing cost when it sense all the channels.
%\revv{Note that $\RS$ is a constant independent of the detailed market share. This is because a user choosing sensing will always access the best channel according to its sensing result, regardless of the choices of other users.}\com{I am not sure this is true. Should $\RS$ depend on how many user access that channel?}
where $\RB(\Probl^{\t}) =  \fx(1- \Probl^{\t})$ is the expected utility that a user can achieve when choosing the basic service (defined in Section III of our manuscript),
	$\RS(\Probl^{\t}) = \fx(1- \Probl^{\t}) + \gy_1$ is the expected utility that a user can achieve when choosing to sense all the available channels (where $\gy_1 $ is a constant reflecting the additional utility gain achieved from sensing instead of choosing the basic service), and $\cs$ is the user's sensing cost.
	%Compared with $\RA(\Proba, \Probl)$ which is defined in Section III in the manuscript, $\RS(\Probl)$ also depends on the number of users access unlicensed TV channels (\emph{i.e.,} $1 - \Probl$). However, as the user
	%is a constant independent of the detailed market share. This is because a user choosing sensing will always access the best channel according to its sensing result, regardless of the choices of other users.}\com{I am not sure this is true. Should $\RS$ depend on how many user access that channel?}
	We further assume that $\RL > \RS(\Probl) > \RA(\Proba, \Probl) > \RB(\Probl)$. The reason for $\RS(\Probl) > \RA(\Proba, \Probl)> \RB(\Probl)$ is that the user can always
	locate the best unlicensed channel under the assumption of perfect sensing. The reason for  $\RL > \RS(\Probl)$ is that there is no mutual interference and congestion on the licensed TV channels.

In Stage II of the sensing market, the licensee determines the selling price $\pl$ for its under-utilized channels, while the database does not make any pricing decision.
The payoffs of the licensee and the database in the sensing market under the RSS scheme (Scheme I) are
\begin{equation}\label{eq:u1_sensing}
\left\{
\begin{aligned}
\Usl  &= ( \pl -\cl)  \Probl   (1 - \delta)
\\
\Udb &= ( \pl - \cl)   \Probl   \delta ,
\end{aligned}
\right.
%\Usl \eq \Uslrs = ( \pl -\cl)  \Probl   (1 - \delta),
%~~~~
%\Udb \eq \Udbrs = (\pa - \ca)   \Proba + ( \pl - \cl)   \Probl   \delta ,
\end{equation}
and their payoffs under the WPS scheme (Scheme II) are
%under the RSS scheme (Scheme I), and
\begin{equation}\label{eq:u2}
\left\{
\begin{aligned}
\Usl &= (\pl -\w)  \Probl - \cl  \Probl,
\\
\Udb &= \w   \Probl.
\end{aligned}
\right.
%\Usl \eq \Uslwp = (\pl -\w)  \Probl - \cl  \Probl,
%~~~~
%\Udb \eq \Udbwp = \pa \Proba + \w   \Probl - \ca  \Proba.
\end{equation}
In this sensing market, the database's payoff comes from assisting licensee in leasing licensed channels.
%under the WPS scheme (Scheme II).

Finally, in Stage I of the sensing market, we can use a similar method as in Section VI to  analyze the bargaining solution in the sensing market.

%\begin{figure*}
%\vspace{-5mm}
%  \centering
%  \includegraphics[width=2.2in]{fig_sensing_vs_information_diff_cs_wo_leasing_cost2}
%  \includegraphics[width=2.2in]{fig_sensing_vs_information_diff_cs_WSD_profit2}
%    \includegraphics[width=2.33in]{fig_sensing_vs_information_diff_cs_SW2}
%\vspace{-3mm}
%  \caption{(a) Energy cost, (b) Aggregated payoff of users, and (c) Social welfare vs the sensing cost of users. \com{Lin: I suggest to change the range of sensing cost to [0, 0.4], then we can get more consistent simulation results and can explain the results more easily. In the current simulations, the energy cost of sensing decreases with sensing cost when sensing cost is in [0.5,2]. In fact, the reason is that social performance decreases (e.g., less user served), hence the energy cost decreases. However, these results with cost in [0.5, 2] are meaningless.}}\label{fig:impact_sensing_cost}
%  \vspace{-3mm}
%\end{figure*}

\begin{figure*}
	\vspace{-2mm}
	\centering
	\includegraphics[width=2.8in]{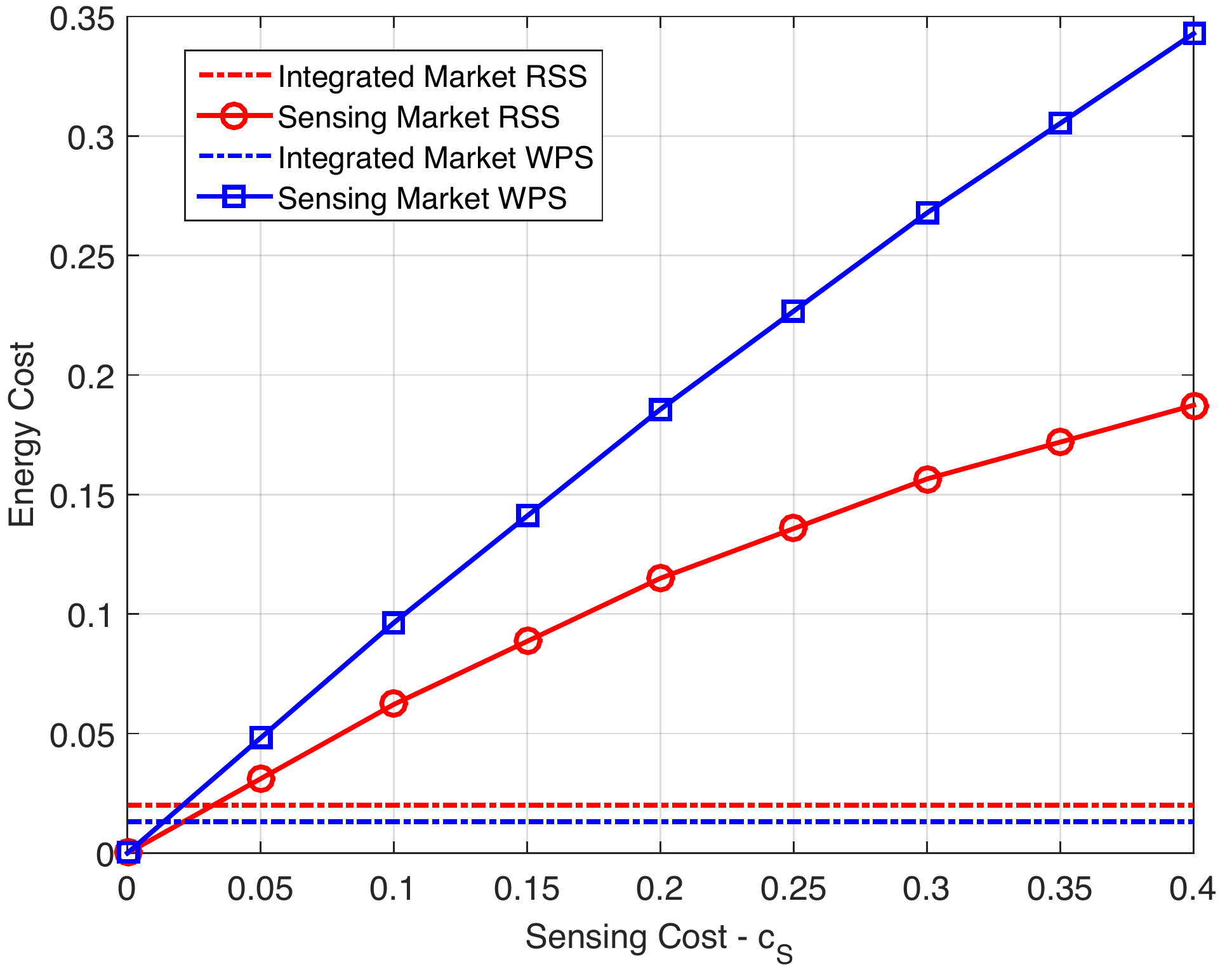}
	~~~~~~
	\includegraphics[width=2.8in]{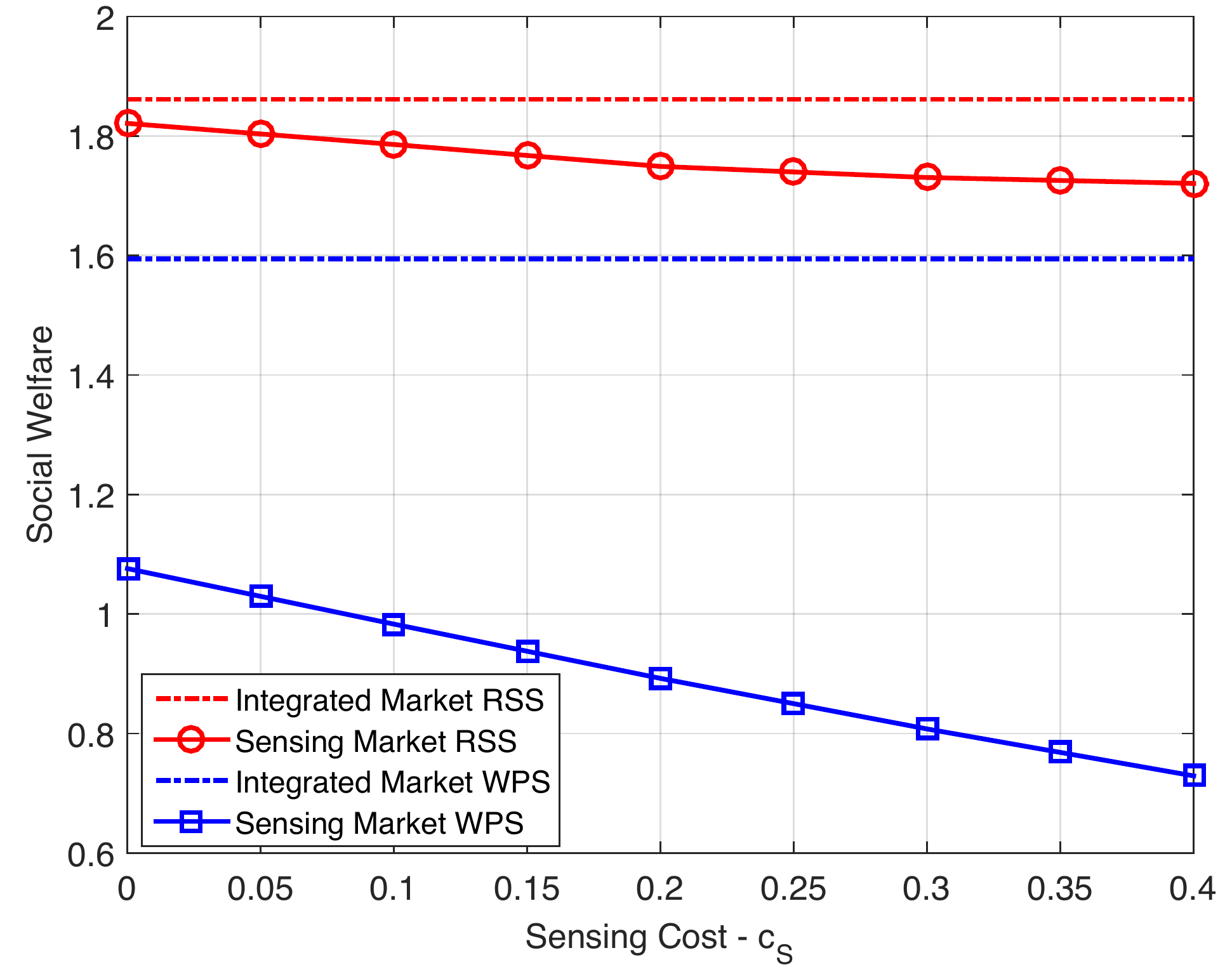}
	\vspace{-5mm}
	%	\caption{(a) Energy cost, (b) Aggregated payoff of users, and (c) Social welfare vs the sensing cost of users.}\label{fig:impact_sensing_cost}
	\caption{(a) Energy cost, and (b) Social welfare vs the sensing cost of users.}\label{fig:impact_sensing_cost}
	\vspace{-4mm}
\end{figure*}

Figure \ref{fig:impact_sensing_cost} provides the equilibrium performance comparison between the integrated market and the sensing market.
%We provide detailed simulations to compare equilibrium performance of the hybrid market and the sensing market, as shown in Figure \ref{fig:impact_sensing_cost}.
%%\footnote{As recent studies pointed out that spectrum sensing usually introduces high operational cost [1], we assume that $\cs \geq \ca$.}.
%In the simulations, we assume that
We assume that
$\alpha_1 = 1$, $\alpha_2 = 1$, $\beta_1 = 1$, $\gamma_1 = 0.6$, $\gamma_2 = 0.6$, $\ca = 0.1$,  $\cl = 0.9$, the quality of leasing service $\RL = 6$, the quality of sensing service $\RS = 2$, and $\lambda = 1.8$.
%We choose sensing cost $\cs \in [0, 0.4]$\footnote{As recent studies pointed out that spectrum sensing usually introduces high operational cost [1], we assume that $\cs \geq \ca$.}.
%The energy cost of hybrid market is calculated as $\eta_{A} \cdot \ca$, while the energy cost of sensing market is calculated as $\eta_{S} \cdot \cs$, where $\eta_{S}$ denotes the fraction of users who choose the sensing service\footnote{\rev{In order to see how much energy can be saved, by letting the database assist users to identify the best channels, we assume that the cost of the database for assisting the licensee is zero. The relaxation to the case that it is costly for the database to assist the licensee does not change the main insights.}}.

Figure \ref{fig:impact_sensing_cost}.a illustrates the energy cost achieved under different user sensing cost ($\c_S$ from $0$ to $0.4$). In this simulation, we compare the database's energy cost for providing the advanced information in the integrated market (which is calculated by $\eta_{A} \cdot \ca$) with the users' energy cost for sensing channels (which is calculated by $\eta_{S} \cdot \cs$). Here, $\eta_{S}$ denotes the fraction of users who choose the sensing service.\footnote{In the proposed integrated market, the total energy cost consists of
	(i) the users' energy cost for transmission,
	(ii) the database's energy cost for providing the available channel list (basic service), computing the channel quality information (advanced information), and assisting the licensee,
	and (iii) the licensee's energy consumption cost for leasing licensed channels to users.
	The sensing market's total energy cost is similar to that of the integrated market, except that there is no database's energy cost for providing the advanced service, and there is an extra cost for users to perform sensing.
	% for performing sensing.
	%there is users' sensing cost, instead of database's energy cost for providing the advanced service, in the sensing market.
	%The total energy cost in the database-assisted TV white space network consists of three parts:
	%(i) the users' energy cost for sensing and transmission,
	%%		sensing cost and transmission cost,
	%(ii) the database's energy cost for providing the available channel list (basic service), computing the channel quality information (advanced information), and assisting the licensee,
	%%and (iii) the licensee's operational energy cost.
	%and (iii) the licensee's energy consumption cost for leasing licensed channels to users.
	%It can be verified that the users' transmission cost, the database's energy cost for providing the available channel list (basic service) and assisting the licensee, and the licensee's operational cost are the same in both integrated and sensing market.
	%%It is important to note that the users' transmission energy cost, the database's energy cost for providing the available channel list (basic service) and operating the licensed channel market, and the licensee's operational energy cost are identical in both markets.
	%The key difference between the integrated market and sensing market in terms of total energy consumption cost are as follows:
	%%The key difference of two markets are following:
	%in the sensing market, users sense all the channels to identify the channel quality; while in the integrated market, the database provides the advanced information to users to identify the channel quality.
	Hence, in this simulation, we focus on the difference between the database's energy cost for providing the advanced information and the users' sensing energy cost.}
%We assume that the database's cost for assisting the licensee is negligible. This is a reasonable assumption, as the database only needs to maintain the online platform which consumes little energy. \revv{The relaxation to the case that it is costly for the database to assist the licensee does not change the main insights.}\com{huang: This sentence is vague. How to justify it?}}
We use the red dash-dot line to denote the energy cost in the proposed integrated market under RSS, the red line with circle markers to denote the energy cost in the sensing market under RSS, the blue dash-dot line to denote the energy cost in the proposed integrated market under WPS, and the blue line with square markers to represent the energy cost in the sensing market under WPS.

Figure \ref{fig:impact_sensing_cost}.a shows that the proposed integrated market has a smaller energy cost than the sensing market does under both RSS and WPS, except when the sensing cost is extremely small (\emph{i.e.,} $\cs \leq 0.025$).
%. Even when the sensing cost $\cs$ is smaller than the operational cost of the database $\ca$ (\emph{i.e.,} $\cs = 0.05$ and $\ca = 0.1$), the energy cost of sensing market is still higher than that of integrated market.
The energy cost between the integrated market and the sensing market increases with the sensing cost. As recent studies pointed out that spectrum sensing usually introduces a high operational cost \cite{gonccalves2011value}, it is reasonable to conclude that in practice it is more energy efficient to let the database assist users in identifying the best channels.
%the sensing cost $\cs$ is usually higher than the operational cost of the database. Hence, letting the database assist users to identify the best channels indeed saves much energy.

Figure \ref{fig:impact_sensing_cost}.b illustrates the social welfare (\emph{i.e.,} the summation of the payoffs of the licensee, the database, and the users) achieved under different user sensing cost ($\c_S$ from $0$ to $0.4$).
In this figure, we use the red dash-dot line to denote the social welfare in the proposed integrated market under RSS, the red line with circle markers to denote the social welfare in the sensing market under RSS, the blue dash-dot line to denote the social welfare in the proposed integrated market under WPS, and the blue line with square markers to denote the social welfare in the sensing market under WPS.

Figure \ref{fig:impact_sensing_cost}.b shows that
%From Figure \ref{fig:impact_sensing_cost}.b, we can see that
the proposed integrated market outperforms the sensing market in terms of social welfare. Social welfare achieved in the sensing market decreases with the sensing cost.
%This is because with the increases of the sensing cost, users need to pay a higher price to enjoy a good quality of service, hence the social welfare decreases with sensing cost in the sensing market.
Hence, it is more beneficial to let the database assist users in identifying the best channels as in our integrated market.


\begin{thebibliography}{1}

%=======================================================

\bibitem{luo2015infocom}
Y.~Luo, L.~Gao, and J.~Huang, ``Hysim: A hybrid spectrum and information market
  for tv white space networks,'' \emph{INFOCOM}, 2015.

\bibitem{Gur2011}
G.~G$\ddot{u}$r and F.~Alag$\ddot{o}$z, ``Green wireless communications via cognitive dimension:
  an overview,'' \emph{IEEE Network}, vol.~25, no.~2, pp. 50--56, 2011.

\bibitem{climatae2008report}
The Climate Group SMART 2020 Report, ``Smart2020: Enabling the low carbon economy in the
  information age,'' [online]~\url{http://www.theclimategroup.org}.

%\bibitem{federal2012third}
%Federal Communications Commission (FCC), ``In the matter of unlicensed operation in the tv broadcast
%  bands: Second report and order and memorandum opinion and order,''
%  [online]~\url{https://apps.fcc.gov/edocs_public/attachmatch/FCC-08-260A1.pdf},
%  2008.
 \bibitem{federal2012third}
Federal Communications Commission (FCC), ``Third Memorandum Opinion and Order,'' 2012.

\bibitem{Ofcom2010geo}
Ofcom, ``Implementing tv white spaces,'' 2015.
%  [online]~\url{http://stakeholders.ofcom.org.uk/binaries/consultations/white-space-coexistence/statement/tvws-statement.pdf},
%  2015.

\bibitem{feng2013database}
X.~Feng, Q.~Zhang, and J.~Zhang, ``Hybrid pricing for tv white space
  database,'' \emph{IEEE INFOCOM},  2013.

\bibitem{Bogucka2012}
H.~Bogucka, M.~Parzy, P.~Marques, J.~W. Mwangoka, and T.~Forde, ``Secondary
  spectrum trading in tv white spaces,'' \emph{IEEE Communications Magazine},
  vol.~50, no.~11, pp. 121--129, 2012.

\bibitem{liu2013}
S.~Liu, H.~Zhu, R.~Du, C.~Chen, and X.~Guan, ``Location privacy preserving
  dynamic spectrum auction in cognitive radio network,'' \emph{ICDCS},
  2013.

  \bibitem{SpectrumBridgeCommericial2}
Spectrum Bridge SpecEx,
  \url{http://udia.spectrumbridge.com/ProductsServices/search.aspx}. %Year = {2015}.

\bibitem{luo2014wiopt}
Y.~Luo, L.~Gao, and J.~Huang, ``Trade information, not spectrum: A novel tv
  white space information market model,'' \emph{WiOpt}, 2014.

\bibitem{luo2014SDP}
Y.~Luo, L.~Gao, and J.~Huang, ``Information market for tv white space,'' \emph{IEEE INFOCOM Workshop on SDP}, 2014.



\bibitem{SpectrumBridgeCommericial}
Spectrum Bridge White Space Plus,
  \url{https://spectrumbridge.com/tv-white-space/white-space-plus/}.

\bibitem{harsanyi1977bargaining}
J.~C. Harsanyi, \emph{Rational behaviour and bargaining equilibrium in games
  and social situations}, Cambridge
  University Press, 1986.

\bibitem{topkis1998supermodular}
D.~M. Topkis, \emph{Supermodularity and complementarity}, Princeton University Press, 1998.

\bibitem{brandenburger2011co}
A.~Brandenburger, and J.~Barry, \emph{Co-opetition}, Crown Business, 2011.

\bibitem{hafeez2015green}
M.~Hafees and J.~Elmirghani, ``Green licensed-shared access,'' \emph{IEEE Journal on Selected Areas in Communications}, vol.~33, no.~12, pp. 2579-2595, 2015.

\bibitem{palicot2009}
J.~Palicot, ``Cognitive radio: an enabling technology for the green radio
  communications concept,'' \emph{ACM IWCMC}, 2009.

\bibitem{Ji2013}
Z.~Ji, I.~Ganchev, M.~O'Droma, and X.~Zhang, ``A realisation of broadcast
  cognitive pilot channels piggybacked on t-dmb,'' \emph{IEEE Transactions on
  Emerging Telecommunications Technologies}, vol.~24, no. 7-8, pp. 709--723,
  2013.

 \bibitem{Xie2012}
R.~Xie, F.~R.~Yu, and H.~Ji, ``Energy-Efficient Spectrum Sharing and Power Allocation in Cognitive Radio Femtocell Networks'', \emph{INFOCOM}, 2012.

\bibitem{Niyato2009game}
D.~Niyato, E.~Hossain, and Z.~Han, ``Dynamics of multiple-seller and
  multiple-buyer spectrum trading in cognitive radio networks: A game-theoretic
  modeling approach,'' \emph{IEEE Transactions on Mobile Computing}, vol.~8,
  no.~8, pp. 1009--1022, Aug 2009.

\bibitem{Min2012game}
A.~Min, X.~Zhang, J.~Choi, and K.~Shin, ``Exploiting spectrum heterogeneity in
  dynamic spectrum market,'' \emph{IEEE Transactions on Mobile Computing},
  vol.~11, no.~12, pp. 2020--2032, Dec 2012.

\bibitem{zhu2014game}
K.~Zhu, E.~Hossain, and D.~Niyato, ``Pricing, spectrum sharing, and service
  selection in two-tier small cell networks: A hierarchical dynamic game
  approach,'' \emph{IEEE Transactions on Mobile Computing}, vol.~13, no.~8,
  pp. 1843--1856, Aug 2014.

\bibitem{bhargava2004economics}
H.~K. Bhargava and V.~Choudhary, ``Economics of an information intermediary
  with aggregation benefits,'' \emph{Information Systems Research}, vol.~15,
  no.~1, pp. 22--36, 2004.

\bibitem{cachon2001contracting}
G.~P. Cachon and M.~A. Lariviere, ``Contracting to assure supply: How to share
  demand forecasts in a supply chain,'' \emph{Management Science}, vol.~47,
  no.~5, pp. 629--646, 2001.

\bibitem{gerchak2004revenue}
Y.~Gerchak and Y.~Wang, ``Revenue-sharing vs. wholesale-price contracts in
  assembly systems with random demand,'' \emph{Production and Operations
  Management}, vol.~13, no.~1, pp. 23--33, 2004.

\bibitem{dana2001revenue}
J.~D. Dana~Jr and K.~E. Spier, ``Revenue sharing and vertical control in the
  video rental industry,'' \emph{The Journal of Industrial Economics}, vol.~49,
  no.~3, pp. 223--245, 2001.

\bibitem{lariviere2001selling}
M.~A. Lariviere and E.~L. Porteus, ``Selling to the newsvendor: An analysis of
  price-only contracts,'' \emph{Manufacturing \& Service Operations
  Management}, vol.~3, no.~4, pp. 293--305, 2001.

\bibitem{niyato2008spectrum}
D.~Niyato and E.~Hossain, ``Spectrum trading in cognitive radio networks: A
  market-equilibrium-based approach,'' \emph{IEEE Wireless Communications},
  vol.~15, no.~6, pp. 71--80, 2008.

\bibitem{shetty2010congestion}
N.~Shetty, G.~Schwartz, and J.~Walrand, ``Internet qos and regulations,''
  \emph{IEEE/ACM Transactions on Networking}, vol.~18, no.~6, pp. 1725--1737,
  Dec 2010.

\bibitem{johari2010congestion}
R.~Johari, G.~Y. Weintraub, and B.~Van~Roy, ``Investment and market structure
  in industries with congestion,'' \emph{Operations Research}, vol.~58, no.~5,
  pp. 1303--1317, 2010.

\bibitem{manshaei2008evolution}
M.~Manshaei, J.~Freudiger, M.~Felegyhazi, P.~Marbach, and J.-P. Hubaux, ``On
  wireless social community networks,'' \emph{INFOCOM}, 2008.

\bibitem{shaolei2011}
S.~Ren and M.~Van~der Schaar, ``Data demand dynamics in wireless communications
  markets,'' \emph{IEEE Transactions on Signal Processing}, vol.~60, no.~4,
  pp. 1986--2000, 2012.

\bibitem{duan2011investment}
L.~Duan, J.~Huang, and B.~Shou, ``Investment and pricing with spectrum
  uncertainty: a cognitive operator's perspective,'' \emph{
  IEEE Transactions on Mobile Computing}, vol.~10, no.~11, pp. 1590--1604, 2011.

\bibitem{huang2006distributed}
J.~Huang, R.~Berry, M.~L. Honig \emph{et~al.}, ``Distributed interference
  compensation for wireless networks,'' \emph{ IEEE Journal on Selected Areas in Communications,
 }, vol.~24, no.~5, pp. 1074--1084, 2006.

\bibitem{bargai2001search}
N.~Nisan, T.~Roughgarden, E.~Tardos, and V.~V. Vazirani, \emph{Algorithmic game
  theory}, Cambridge University Press
  Cambridge, 2007.

\bibitem{easley2010effect}
D.~Easley and J.~Kleinberg, \emph{Networks, crowds, and markets},  Cambridge
  University Press, 2012.

\bibitem{wu2014exploring}
W.~Wu, R.~T. Ma, and J.~Lui, ``Exploring bundling sale strategy in online
  service markets with network effects,'' \emph{INFOCOM}, 2014.

\bibitem{chang2007optimal}
N.~B. Chang and M.~Liu, ``Optimal channel probing and transmission scheduling
  for opportunistic spectrum access,'' \emph{IEEE/ACM Transactions on Networking}, vol.~17, no.~6, pp. 1805--1818, 2009.

\bibitem{jiang2009optimal}
H.~Jiang, L.~Lai, R.~Fan, and H.~V. Poor, ``Optimal selection of channel
  sensing order in cognitive radio,'' \emph{IEEE
  Transactions on Wireless Communications}, vol.~8, no.~1, pp. 297--307, 2009.

\bibitem{bertsekas1989parallel}
D.~P.~Bertsekas and J.~N.~Tsitsiklis, \emph{Parallel and Distributed Computation: Numerical Methods}, Belmont, MA: Athena Scientific, 1989.

\bibitem{brainard2005compute}
W.~C.~Brainard and H.~E.~Scarf, ``How to Compute Equilibrium Prices in 1891," \emph{American Journal of Economics and Sociology}, vol.~64, no.~1, pp.~57-83, 2005.

\bibitem{jain2010eisenberg}
K.~Jain and V.~V.~Vazirani, ``Eisenberg-Gale Markets: Algorithms and Game-Theoretic Properties," \emph{Games and Economic Behavior}, vol.~70, no.~1, pp.~84-106, 2010.

\bibitem{eisenberg1959consensus}
E.~Eisenberg and D.~Gale, ``Consensus of Subjective Probabilities: The Pari-Mutuel Method," \emph{The Annals of Mathematical Statistics}, vol.~30, no.~1, pp.~165-168, 1959.

\bibitem{lin2014bargaining}
L.~Gao, G.~Iosifidis, J.~Huang, L.~Tassiulas, and D.~Li, ``Bargaining-based Mobile Data Offloading," \emph{IEEE Journal on Selected Areas in Communications}, vol.~32, no.~6, pp. 1114-1125, 2014.

\bibitem{Yu2015bargaining}
H.~Yu, M.~Cheung, and J.~Huang, ``Cooperative Wi-Fi Deployment: A One-to-Many Bargaining Framework," IEEE WiOpt, 2015.

\bibitem{gonccalves2011value}
V.~Goncalves and S.~Pollin, ``The Value of Sensing for TV White Spaces", IEEE DySPAN, 2011.






%\bibitem{report}
%Y. Luo, L. Gao, and J. Huang, Tech. Report, , [Online] \url{http://jianwei.ie.cuhk.edu.hk/publication/AppendixHybrid.pdf}
%%L.~G. Yuan~Luo and J.~Huang, The Chinese University of Hong Kong, Tech. Rep.

\end{thebibliography}
\end{document}